\documentclass[11pt]{article}
\usepackage[letterpaper,margin=1in]{geometry}
\usepackage[hidelinks]{hyperref}
\usepackage{xcolor}
\usepackage{booktabs} 
\usepackage{natbib}
\usepackage{graphicx}
\usepackage{url}

\usepackage{amsmath,amsthm,amsfonts,nicefrac}
\usepackage{etoolbox}
\usepackage{thmtools,dsfont}
\usepackage{enumerate}
\usepackage{subcaption}
\usepackage{cleveref}

\Crefname{lemma}{Lemma}{Lemmata}

\newtheorem{theorem}{Theorem}
\newtheorem{lemma}{Lemma}

\newtheorem{proposition}{Proposition}
\newtheorem{definition}{Definition}
\theoremstyle{definition}
\newtheorem{remark}{Remark}
\newtheorem{claim}{Claim}

\newcommand{\E}[1]{\mathbb{E}\left[#1\right]}
\newcommand{\Esub}[2]{\mathbb{E}_{#1}\!\left[#2\right]}
\renewcommand{\P}[1]{\mathbb{P}\left(#1\right)}
\newcommand{\Psub}[2]{\mathbb{P}_{#1}\!\left(#2\right)}

\newcommand{\gft}{\mathrm{GFT}}
\newcommand{\app}[2]{\phi_{#1}\left(#2\right)}

\let \eps \varepsilon

\newcommand{\Var}[1]{\mathbb{V}\left[#1\right]}
\newcommand{\prof}{\textnormal{\textsf{profit}}}
\newcommand{\cS}{\mathcal S}

\newcommand{\opt}{\textnormal{OPT}}

\newcommand{\ind}[1]{\mathds 1_{\{#1\}}}
\newcommand{\util}[3]{U_{#3}\left(#1, #2\right)}

\newcommand{\M}{\mathcal M}
\newcommand{\A}{\mathcal A}

\newcommand{\clean}{\mathcal {E}}

\newcommand{\pathspace}{\mathcal{P}_\eta}
\newcommand{\neigh}[1]{\mathcal N^+_\eta(#1)}
\newcommand{\neighedge}[1]{E^+_\eta(#1)}

\newcommand{\expalg}{\mathcal A^{\textnormal{exp}}}

\newcommand{\cA}{\mathcal {A}}

\newcommand{\cD}{\mathcal {D}}
\newcommand{\cE}{\mathcal {E}}
\newcommand{\cG}{\mathcal {G}}
\newcommand{\cF}{\mathcal F}
\newcommand{\cL}{\mathcal {L}}
\newcommand{\cY}{\mathcal {Y}}

\newcommand{\cT}{\mathcal {T}}

\newcommand{\vs}{v_s}
\newcommand{\vb}{v_b}
\newcommand{\bs}{b_s}
\newcommand{\bb}{b_b}

\newcommand{\Mnet}{\hat \M}

\newcommand{\acc}{\alpha}
\newcommand{\fail}{\beta}

\title{Private Learning in Bilateral Trade}

\author{Simone Di Gregorio\thanks{Dept. of Computer, Control and Management Engineering, Sapienza University of Rome, Rome, Italy. Email: \texttt{\{digregorio,fuscof,leonardi\}@diag.uniroma1.it}}
\and Federico Fusco\footnotemark[1] \protect\\
\and Stefano Leonardi\footnotemark[1]
\and Chris Schwiegelshohn\thanks{Dept. of Computer Science, Aarhus University,
Aarhus, Denmark. Email: \texttt{schwiegelshohn@cs.au.dk}}}

\date{}
\begin{document}
\maketitle

\begin{abstract}
Bilateral trade models one of the most fundamental economic interactions: the intermediation between two strategic agents, a seller and a buyer, willing to trade a good. We consider the learning version of the problem, where the goal is to learn a mechanism from a sampled dataset of agents' valuations to maximize either profit or economic efficiency. While known learning algorithms are characterized by high sensitivity to the input dataset, we specifically study this problem through the lens of differential privacy, ensuring that each data point does not significantly affect the probability of learning any specific mechanism.

For our results, we adopt the PAC-learning framework: with high probability, the learning algorithm should output a mechanism that is at most an additive $\alpha$ away from optimal, in a $\eps$-differentially private way. As a first result, we show that differential privacy and (near)-optimality are not achievable for general distributions. Surprisingly, assuming that the distribution underlying the agents' valuations is $\sigma$-smooth, we recover nearly optimal sample-complexity bounds for both economic efficiency and profit. 

For profit, we show how to construct in polynomial time an $\alpha$-optimal and $\eps$-differentially private mechanism using $\tilde\Theta(\nicefrac{1}{\sigma\eps\alpha^2})$ samples. For efficiency, measured by the gain from trade, we achieve the same result using $\tilde\Theta(\nicefrac{1}{\eps\alpha}+\nicefrac{1}{\alpha^2})$ samples. Notably, these bounds are essentially tight in the precision parameter $\alpha$, since achieving $\alpha$-optimality (ignoring differential privacy) requires at least $\nicefrac{1}{\alpha^2}$ samples.

\end{abstract}
\pagenumbering{arabic}

\section{Introduction}

    The bilateral trade problem captures a fundamental economic primitive: how should a broker intermediate between a seller and a buyer wishing to trade a specific good? The crucial tension underlying this task arises from the competing interests of the two rational agents ---high vs. low price--- who, however, need to agree to make the trade happen. Bilateral trade has been investigated intensively in the economic literature, starting from the seminal works of \citet{Vickrey61} and \citet{MyersonS83}, and has attracted a considerable attention by computer scientists \citep[e.g.,][]{Colini-Baldeschi16,BrustleCWZ17,BabaioffGG20,BlumrosenD21,HajiaghayiHPS25,KangPV22}, mainly due to its natural applicability in online markets (e.g., Online Marketplaces, Online Advertising Markets, Ride-sharing Platforms).

    The seller and the buyer are characterized by their private valuations for the good, which are drawn from some underlying distribution, and the goal of the broker is to design a ``good'' mechanism, which is robust to the strategic manipulations of the two agents. In this paper, we investigate mechanisms that are  dominant-strategy incentive compatible (DSIC), ensuring agents optimize their utility by reporting their true valuations, and individual rational (IR), guaranteeing that participating in the mechanism yields non-negative utility. To assess the quality of the outcome of the mechanism, we consider the two most natural quantities: 
    \begin{itemize}
        \item \textbf{Economic Efficiency}, measured in terms of the increase in social welfare induced by the trade (the so-called gain-from-trade).
        \item \textbf{Profit}, which is simply the net utility that the broker extracts from the agents. 
    \end{itemize}
    When dealing with maximizing gain from trade, we require the mechanism to enforce a third property: budget balance (BB), so that the payment made to the seller does not exceed the payment collected from the buyer. Budget balance is needed to avoid the undesirable situation in which the broker \emph{subsidizes} the market. The addition of this constraint has the surprising effect of \emph{collapsing} the family of DSIC, IR and BB mechanisms for gain-from-trade maximization to the simple sub-family of fixed-price mechanisms, where the broker posts one price to both agents and the trade happens if they both accept \citep[][see also \Cref{prop:fixed_price}]{Hagerty87}. 

    A fruitful line of work has investigated the quality of such mechanisms under the assumption that the distribution underlying agents' valuation \emph{is known to the broker}, trying to assess the gap between the best mechanism and the best possible efficiency/profit achievable \citep[e.g.,][]{BlumrosenD21,DengMSW22}, between simple and optimal mechanisms \citep[e.g.,][]{BrustleCWZ17,CaiW23}, or the trade-off between the two objectives \citep{HajiaghayiHPS25}. Recently, a keen interest has focused on lifting the assumption of perfect knowledge of the agents' distribution. Different perspectives have been adopted along this direction: the ``single-sample'' regime \citep{DuttingFLLR26}, the (online) learning paradigm \citep{AzarFF24,Cesa-BianchiCCF24jmlr,BernasconiCCF24,LunghiCM26}, and the investigation of learning dynamics \citep{DengMSWW25}.

    A critical challenge that remains unaddressed 
    is protecting the privacy of the agents whose data is utilized to train the mechanism. Neglecting this aspect poses a significant risk, as the optimized mechanism itself may leak sensitive information about the valuations contained in the training data set. Indeed, the state-of-the-art learning algorithms for bilateral trade are highly data dependent, so that it is easy to recover some of the points used to train them \citep{DiGregorioDFS25}. Moreover, this sensitivity makes them vulnerable to adversarial manipulation of the underlying dataset, as a single seller-buyer pair in the training set may drastically change the output mechanism. In this work, we tackle this problem through the lens of differential privacy (DP), the gold standard for algorithmic privacy \citep{DworkMNS16,DworkR14}.
    
    A possibly randomized algorithm $\cA(S)$ that learns a mechanism from a sampled dataset $S$ is said to be $\eps$-differentially private if, for any two neighboring datasets $S, S'$ differing by exactly one data point, and any possible mechanism $M$, it holds that $\mathbb{P}\left[\cA(S)=M\right] \leq e^{\varepsilon}\mathbb{P}\left[\cA(S')=M\right]$, where the probability is with respect to the internal randomness of $\cA$. This condition ensures that the contribution of any single data point has a strictly limited impact on the resulting mechanism. The primary goal of this paper is to design differentially private algorithms for learning bilateral trade mechanisms that maximize either profit or gains from trade.

    \subsection{Our Results}

        We frame this problem within the standard PAC-learning paradigm, investigating the sample complexity of achieving differential privacy and (near)-optimal gain-from-trade/profit. Given a privacy requirement and precision parameter $\alpha$, we aim at finding a mechanism that is at most $\alpha$-far from optimal, in a differentially private way, while using as few samples as possible from the agents' valuations distribution. As a first result, we prove that such private learning is unattainable on certain distributions, for both gain-from-trade (\Cref{thm:impossibility_gft}) and profit maximization (\Cref{thm:impossibility_profit}).
        Surprisingly, once we assume that the underlying distribution respects the natural $\sigma$-smooth assumption\footnote{A distribution is $\sigma$-smooth if it admits a probability density function which is uniformly bounded by $\nicefrac{1}{\sigma}$ or, in words, if it is not ``too peaked''. See \Cref{def:smoothness} for a formal definition.} \citep[e.g.,][]{HaghtalabRS24}, then we show positive results for private learning:
        \begin{itemize}
            \item For Efficiency Maximization, measured in terms of the gain from trade, we show that the sample complexity\footnote{Here and throughout the paper, $\tilde{O}$ hides terms that are poly-logarithmic in the parameters.} is $\tilde{O}(\nicefrac{1}{\alpha^2}+\nicefrac{1}{\eps\alpha})$ (\Cref{thm:main_gft}). 
            \item For Profit Maximization, we show that the sample complexity is $\tilde{O}(\nicefrac1{\eps\sigma\acc^2})$ (\Cref{thm:main_profit}).
        \end{itemize}

    \paragraph{Discussion on Tightness} The two sample complexity bounds are tight in the accuracy parameter ${\alpha}$, up to poly-log terms. Indeed, even if we forget about differential privacy and only look at the problem of learning the ``best'' mechanism for bilateral trade, it is not possible to be $\alpha$-optimal using less than $\nicefrac{1}{\alpha^2}$ samples (Theorem 2 of \citet{Cesa-BianchiCCF24} for gain from trade and Theorem 5 of \citet{DiGregorioDFS25} for profit maximization). As it is common in the differential privacy literature, the privacy term $\eps$ is to be considered a small constant \citep[see, e.g., Remark 1 of][]{Dwork08}, so that the sample complexity for gain from trade is essentially optimal. The bound for profit exhibits an additional $\nicefrac 1{\sigma}$ factor, which is a distribution-dependent parameter. 

    \paragraph{Implementability} Our two positive results are algorithmic: we show how to instantiate the exponential mechanism \citep{McSherryT07} on suitable finite classes of mechanisms so as to obtain the desired properties. Notably, the support of the exponential mechanism used for the profit maximization task is \emph{exponential} in $\nicefrac{1}{\alpha}$, but we show how to sample efficiently from it! We achieve this by taking inspiration from a sampling technique from the online learning literature \citep{TakimotoW03} used to implement efficiently the Hedge algorithm; to the best of our knowledge, we are the first to exploit that argument for the exponential mechanism, and we envision it can have a broader applicability in differential privacy. We specify that the results reported so far assume that the learner knows the smoothness parameter $\sigma$. If this is not the case, we can still implement differentially private learning algorithms with a ``precision/sample-complexity'' trade-off which is slightly unbalanced in terms of $\sigma$ (\Cref{rmk:profit,rmk:gft} for details).

    \paragraph{Uniform Convergence for Profit and Gain From Trade} As a by-product of our approach, we show uniform convergence for the task of profit-learning in bilateral trade, under the smoothness assumption, with a nearly-tight sample complexity bound of order $\tilde{\Theta}(\nicefrac{1}{\acc^2})$ (Appendix~\ref{app:uniform_convergence}). Stated differently, we prove that, given enough samples, it is possible to approximate the \emph{expected} profit of any mechanism with its \emph{empirical} one---at a very fast rate\footnote{Note, to estimate the expected value of a \emph{single} Bernoulli distribution up to precision $\alpha$ we still need $\Omega(\nicefrac{1}{\acc^2})$ samples.}. This proves a further qualitative separation between smooth and non-smooth distributions: in \citet{DiGregorioDFS25} the authors prove that uniform convergence is not possible for general distributions. Interestingly, there is no such separation for \emph{gain-from-trade} maximization: in \citet{Cesa-BianchiCCF24}, the authors show fast uniform convergence bounds for the functions of the form $p \to \Esub{v}{\gft(M_p,v)}$, for general distributions. Nevertheless, private learning is unattainable also for fixed-price mechanisms.
    
    \subsection{Technical Challenges}

    We start by reviewing the state-of-the-art learning algorithms for bilateral trade and observe that they are highly ``non-private'', so that we need fresh ideas to attain private learning. 
    
    For gain-from-trade maximization, it is possible to get an $\alpha$-optimal posted-price mechanism by running empirical risk minimization on $\tilde{\Theta}(\nicefrac{1}{\alpha^2})$ samples (or follow-the-leader, in the online learning jargon), via the uniform convergence property mentioned above. Clearly such an approach is inherently data-dependent. First, if all the samples are concentrated in one point, then the output mechanism will also heavily depend on that point. Second, and most crucially for differential privacy, if there are two prices that are equally good on the sampled dataset, then changing one single valuation may change completely the identity of the price output. Moving to profit maximization, the learning algorithm Simplify-the-Best-Mechanism used in \citet{DiGregorioDFS25} is even more drastically data dependent. It works as follows: first, it computes the empirical risk minimizing mechanism $\M^{\star}$ on the sampled dataset, then, it ``simplifies'' it, by adapting the allocation region of $\M^{\star}$ to a geometric tessellation of the $[0,1]^2$ square whose tiles \emph{are induced by the observed data points}. Both these two steps are inherently non private.

    \paragraph{Private Learning Frameworks via Exponential Sampling} A popular method to obtain differentially-private algorithms is to combine the exponential mechanism \citep{McSherryT07} with a \emph{fixed} and \emph{finite} subfamily of candidate mechanisms \citep{BeimelNS19,KasiviswanathanLNRS11}. Succinctly, it entails sampling a mechanism $M$ with probability proportionate to $\exp\left(\varepsilon\cdot f(M)\right)$, where the score function $f$ denotes either the empirical profit or the gain from trade, depending on the problem at hand. This exponential sampling procedure guarantees $\varepsilon$ differential privacy, while preferring mechanisms with high empirical score. The applicability of such approach is limited by three factors. (i) The best mechanism on $\Mnet$ should be comparable with the best \emph{overall} mechanism, so that a good mechanism gets picked with some probability. (ii) The empirical scores (i.e., profit or gain from trade) are tightly concentrated around the true scores, as otherwise the distribution might output low-value mechanisms. (iii) A naive implementation requires enumerating over all candidate mechanisms $\Mnet$, which may lead to an infeasible running time. 

    \paragraph{A Fixed $\Mnet$ does not work for general distributions} Unfortunately, both the gain from trade and the profit objectives \emph{are not} approximable with fixed discretizations! This is a well-known feature of bilateral trade, for both profit \citep{DiGregorioDFS25} and gain from trade \citep[e.g.,][]{AzarFF24} which makes its learning typically challenging. We use the construction underlying this fact to formally argue that it is not possible to privately learn either objective in bilateral trade, without making some assumption on the underlying distribution. To overcome these impossibilities, we follow a consolidated beyond-worst-case approach \citep{HaghtalabRS24}, which has already been successfully employed for bilateral trade \citep{AzarFF24}, and study smooth distributions. We detail how smoothness helps recovering suitable $\Mnet$, achieving the three desiderata sketched above; the argument for gain from trade is more streamlined than that for profit, so we focus on the latter.

    \paragraph{Smoothness Implies Fixed Discretizations}

        For profit maximization, we consider a fixed discretization of the mechanism space which relies on a suitable tiling of the $[0,1]^2$ square (see the family of simple mechanisms introduced in \Cref{subsec:restriction_profit}). We argue that the best mechanism in such $\Mnet$ is nearly optimal, under the smoothness assumption. Note, while this discretization is reminiscent of the ones used in \citet{AggarwalBDF24,DiGregorioDFS25}, we derive original bounds for the generic $p^{\textnormal{th}}$ moment of the approximation error (see \Cref{lem:lp_net}) which are also used to prove the novel uniform convergence bound for profit-maximization with smooth distributions (see Appendix~\ref{app:uniform_convergence}). 
        
    \paragraph{Smoothness Implies Fast Concentration}
    The second, and technically most involved challenge is to show that all mechanisms in $\Mnet$ have their profit well estimated, up to the $\alpha$ tolerated error. Ignoring lower-order terms, the cardinality of $\Mnet$ is exponential in $\nicefrac{1}{\alpha}$, so that combining Hoeffding inequality with a standard union bound would entail a (suboptimal) bound on the sample complexity bound of order $\nicefrac{1}{\alpha^2} \times \log |\Mnet| \approx \nicefrac{1}{\alpha^3}$. We improve on this bound via probabilistic chaining \citep{Talagrand14}: for each mechanism $M \in \Mnet$ we define a sequence of approximating mechanisms of increasing ``complexity'' and then perform a careful union bound along these chains (we refer to \Cref{fig:approximations_prof} for a visualization of approximating sequences). We mention that probabilistic chaining is also used in \citet{DiGregorioDFS25} to achieve tight sample complexity bounds for general distributions, however, our use of the mathematical tool is essentially different: there, the authors design a ``filtering procedure'' to ensure higher moments bounds for a distribution-and-data-dependent chaining decomposition, while here we exploit smoothness to make sure that the chaining analysis carries over on a fixed discretization, via uniform bounds on the errors' $p^{\textnormal{th}}$ moments.

    \paragraph{Computational Efficiency} The last challenge we need to overcome arises from the need to sample efficiently from the exponential distribution on $\Mnet$, whose cardinality is exponential in $\nicefrac{1}{\alpha}$. Indeed, in general exponential sampling leads to a running time linear in the support of the distribution. We exploit the geometric structure of the mechanisms to associate each simple mechanism with a path on a suitable graph supported in the $[0,1]^2$. We show how to associate weights to its (polynomially many) edges, so that we can simulate perfectly the exponential sampling on mechanisms/paths via random walks on the graphs, by sampling one edge after the other according to these weights. As already mentioned, this approach is inspired from an efficient sampling technique developed in the online learning literature \citep{TakimotoW03}.

    \subsection{Related Works}

        \paragraph{Learning in Bilateral Trade and Economics} A recent line of research adopts the lens of online learning to investigate the learnability of bilateral trade. They study a repeated setting, where the learner interacts with a sequence of bilateral trade rounds. There, buyer/seller values are either drawn from a distribution, or generated adversarially, with the objective of learning a mechanism from data that minimizes regret relative to the best mechanism in hindsight for either gain-from-trade \citep{Cesa-BianchiCCF24,Cesa-BianchiCCF24jmlr,AzarFF24, BernasconiCCF24,LunghiCM26} or profit \citep{DiGregorioDFS25}.  
        We also mention the existence of a rich literature on sample complexity bounds and regret rates for economically motivated problems \citep[e.g.,][]{ColeR14,MorgensternR15,Devanur0P16,CaiD17,GonczarowskiW21,CesaBianchiCK25,Cesa-BianchiGM15}, with some results focusing on smooth distributions as well \citep{DurvasulaHZ23,AggarwalBDF24}.
 
    \paragraph{Differential Privacy and Machine Learning.} 
    The first attempts at integrating differential private learning in statistical learning theory date back to the works of \citet{BlumDMN05,KasiviswanathanLNRS11}. 
    While there are instances in which private learning offered new algorithmic possibilities compared to normal learning \cite{CummingsLNRW16,BassilyNSSSU21,BassilyTT18}, private learning algorithms typically incur an overhead in the sample size and often require large running times compared to algorithms for non-private learning. This is why most research has focused on understanding the ``price of privacy''. Given its paradigmatic simplicity, much of the works on private learning revolves around binary classification in the standard offline PAC learning model \citep{KasiviswanathanLNRS11}, even though other settings such as convex optimization \cite{ChaudhuriMS11} have been investigated. 
    If the privacy guarantee only extends to the labels, the VC dimension characterizes private learnability in binary classification \citep{BeimelNS21}, as is also the case for non-private learning of classifiers. If the entire training set is required to be private---as in this paper---the sample size may increase considerably, with known bounds depending on the domain size \cite{BunNSV15}, the doubling dimension of the function space \cite{ChaudhuriH11}, or the representation dimension \cite{BeimelNS19}. \citet{AlonLMM19,AlonBLMM22} show that the Littlestone dimension, which characterizes online learnability, also characterizes private learnability; however their results are qualitative, so that the exact sample complexity is, to the best of our knowledge, not known. The best bounds are due to \citet{KaplanLMNS20}. We note that none of these results immediately imply much for our problem, as we are trying to learn the best mechanism, as opposed to binary classification. Actually, even the range space of generic monotone allocation regions has infinite VC dimension (and therefore also infinite Littlestone dimension, see \citet{DiGregorioDFS25}).

    \paragraph{Differential Privacy in Mechanism Design.} The interplay of differential privacy and mechanism design has been studied from various angles \citep[see, e.g.,]{PR13}. A fundamental connection exists between differentially private mechanisms and approximately truthful ones, as both require stability against changes in a single agent's private type. \citet{McSherryT07} formalizes this relation by showing that an $\eps$-differentially private mechanism is $2\eps$-approximately dominant strategy truthful. Such a characterization enables, for example,  the design of digital auctions that extract higher revenue than the best strictly truthful mechanisms. Furthermore, \citet{NST12} demonstrates that differential privacy can facilitate exact dominant strategy truthfulness in settings without monetary transfers. A second line of research treats privacy preservation as a primary design objective. For instance, \citet{HK12} and \citet{ChenCKMV16} provide private generalizations of the VCG mechanism, offering a smooth trade-off between the privacy budget and the approximation ratio while guaranteeing exact dominant strategy truthfulness. Similarly, the Sensitive Surveyor's Problem \citep{Rot12} explores conducting studies on private data while explicitly compensating individuals for the cost of privacy loss. Our approach differs fundamentally from prior work by focusing on \emph{learning} nearly-optimal mechanisms, in a differentially private way.

\section{Preliminaries}
\label{sec:preliminaries}
    
    In the bilateral trade problem, a seller holds a single good that a buyer wants to purchase. The seller's and buyer's valuations for the good are $\vs \in [0,1]$ and $\vb \in [0,1]$, respectively, and are private information. Seller and buyer submit bids $\bs, \bb \in [0,1]$ (not necessarily truthful) to a broker running mechanism $M$, characterized by an allocation region $A \subseteq[0,1]^2$, and pricing rules $p ,q:[0,1]^2 \to [0,1]$. 
    The trade happens if and only if the bids $(\bs,\bb)$ belong to the allocation region $A$, and payments are made according to $p$ and $q$. 
    That is, the seller receives $p(\bs,\bb)$, 
    while the buyer pays $q(\bs,\bb)$.
    Without loss of generality, we require that $p(\bs,\bb) = q(\bs,\bb) = 0$ whenever $(\bs,\bb) \not\in A$, i.e., there is no trade.

    The seller and the buyer are rational agents with quasi-linear utility. Namely, when they submit bids $(\bs, \bb) \in [0,1]^2$ but have valuations  $\vb$ and $\vs$, the utilities are, respectively: 
    \[
        \util{\bs}{\bb}{s} =  \ind{(\bs,\bb)\not \in A} \cdot \vs+ p(\bs, \bb), \quad \util{\bs}{\bb}{b} = \ind{(\bs,\bb)\in A} \cdot \vb - q(\bs, \bb).
    \]
     We focus on dominant-strategy incentive compatible (DSIC) and individually rational (IR) mechanisms: each player maximizes his utility by a truthful bid regardless of the other player's bid, and the utility from participating in the mechanism is at least as high as that from not participating in the mechanism (i.e., it is at least $\vs$ for the seller and at least $0$ for the buyer).
    In formulas,
    \begin{align*}
        \text{DSIC:}\quad \util{\vs}{\bb}{s} &\geq \util{\bs}{\bb}{s} &&\forall \ \vs \in [0,1], (\bs, \bb) \in [0, 1]^2 \\
        \util{\bs}{\vb}{b} & \geq \util{\bs}{\bb}{b} &&\forall \ \vb \in [0,1], (\bs, \bb) \in [0, 1]^2 \nonumber \\
        \text{IR}:\quad\;\;\; \util{\vs}{\bb}{s} &\geq \vs, \,\, \util{\bs}{\vb}{b} \geq 0 &&\forall \ (\vs, \vb) \in [0, 1]^2, (\bs, \bb) \in [0, 1]^2 \nonumber
    \end{align*}

    We denote the class of all DSIC and IR mechanisms by $\M$. Since we focus on such class, in the rest of the paper, we assume that the players' bids are equal to their respective valuations. We consider two natural objectives: gain-from-trade and profit. The gain-from-trade measures the increase in social welfare resulting from the mechanism, and is a measure of economic efficiency:
    \[
        \gft(M,(\vs,\vb)) = (\vb - \vs) \cdot \ind{(\vs,\vb)\in A} 
    \]
    The profit earned by the broker running mechanism $M$ 
    is the difference between the payment the mechanism collects from the buyer and the payment that it makes to the seller:
    \[
        \prof(M,(\vs,\vb)) = q(\vs,\vb) - p(\vs,\vb). 
    \]
    A mechanism enforces \emph{budget balance} if the profit is always non-negative. For brevity, when it is clear from the context, we may adopt the vectorial notation and replace $(\vs,\vb)$ with $v.$

    \subsection{Characterization of DSIC Mechanisms}

The allocation regions and the pricing functions of DSIC and IR mechanisms have a distinctive structure: the allocation region respects a certain monotonicity property, while the allocation region uniquely induces the payments.

    \begin{definition}[Monotone Allocation Region]
    \label{def:monotone_allocation}
        An allocation region $A\subseteq[0,1]^2$ is monotone  if for any $x=(x_1,x_2) \in A$ and $y=(y_1,y_2) \in [0,1]^2$ with $x_1 \ge y_1$ and $x_2 \le y_2,$ it holds that $y \in A$. A mechanism is monotone if its allocation region is monotone.
    \end{definition}

    \begin{definition}[Myerson Payments]
    \label{def:payments}
        Let $A$ be a monotone allocation region, then the associated Myerson payments are defined as follows:
        \begin{align*}
            p(\bs,\bb) =& \ind{(\bs,\bb) \in A} \cdot \max\{x \in [0,1] \mid (x,\bb) \in A\} &&\forall\; (\bs,\bb) \in [0,1]^2\\
            q(\bs,\bb) = &\ind{(\bs,\bb) \in A}\cdot \min\{y \in [0,1] \mid (\bs,y) \in A\} &&\forall\; (\bs,\bb) \in [0,1]^2 
        \end{align*}
    \end{definition}

    In words, for an allocation region to be monotone it should be ``closed'' in a north-west direction. That is, if a point $(\bs,\bb)$ is in $A$ then any point $(\bs',\bb)$ with $\bs' \leq \bs$ and any point $(\bs,\bb')$ with $\bb' \geq \bb$ should be in $A$ as well. The payments of a point $(\bs,\bb) \in A$, in turn, correspond to the ``east'' projection minus the ``south'' one onto the allocation boundary. See \Cref{fig:payments} for an illustration. Note that for the above definition to be well-posed, we require (without loss of generality) that all allocation regions are topologically closed.

    \begin{figure}[t!]
      \centering
      \begin{subfigure}{0.30\linewidth}
        \centering
        \includegraphics[width=\linewidth]{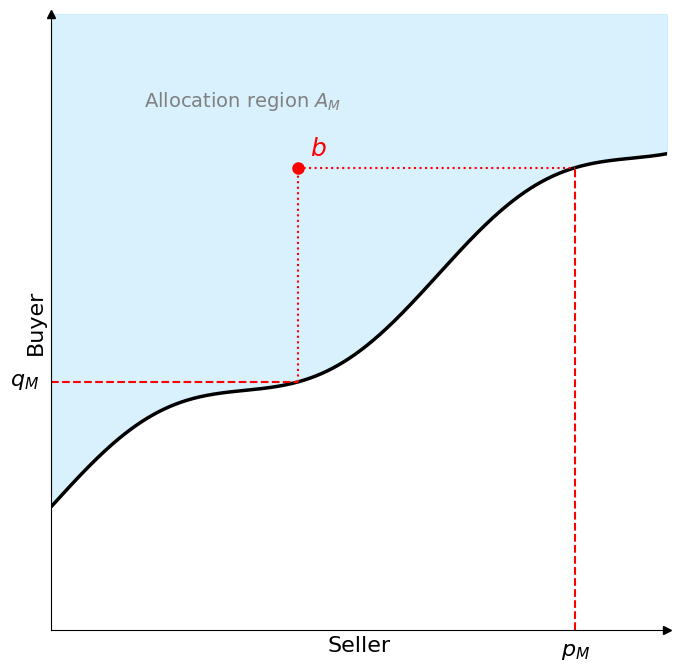}
        \label{fig:generic_mechanism}
      \end{subfigure}
      \hspace*{26pt}
      \begin{subfigure}{0.30\linewidth}
        \centering
        \includegraphics[width=\linewidth]{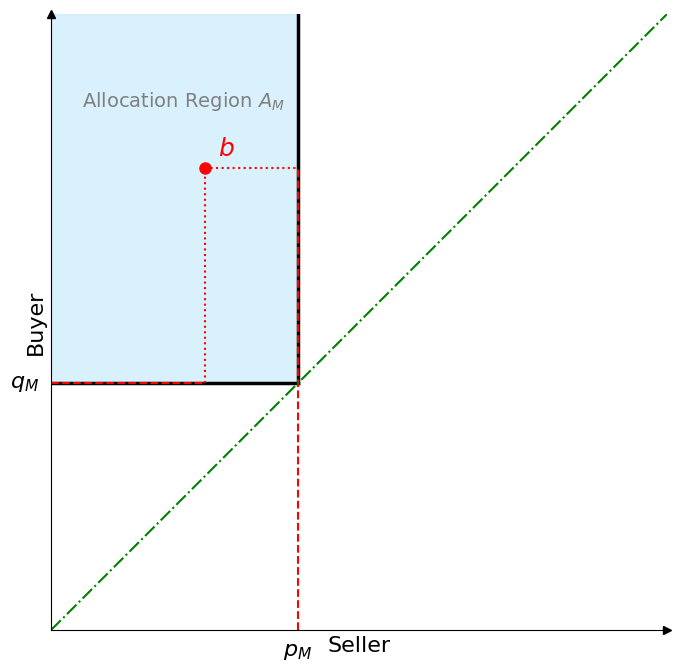}
        \label{fig:fixed_price}
      \end{subfigure}
    \caption{Payments and allocation regions for a generic monotone mechanism and a fixed price mechanism.}
    \label{fig:payments}
    \end{figure}
    
    \begin{restatable}[Mechanisms Characterization]{proposition}{characterization}
    \label{thm:mechanisms}
        A mechanism $M$ for bilateral trade is dominant-strategy incentive compatible and individually rational if and only if its allocation region is monotone and the payments are Myerson payments.
    \end{restatable}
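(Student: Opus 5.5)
We prove the characterization by the standard Myerson argument, applied to each agent separately with the other agent's bid held fixed.

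\emph{Forward direction.} Assume $M$ is DSIC and IR, and fix the buyer's bid $\bb$. The seller's utility with value $\vs$ and bid $\bs$ is $\vs\cdot\ind{(\bs,\bb)\notin A}+p(\bs,\bb)$. Set $x(\bs)=\ind{(\bs,\bb)\notin A}$ and $P(\bs)=p(\bs,\bb)$. DSIC for values $\vs$ and $\vs'$, each compared with the other's bid, gives
\[
(\vs-\vs')\bigl(x(\vs)-x(\vs')\bigr)\ge 0 .
\]
So $x$ is nondecreasing in $\vs$: trade happens exactly for $\vs$ below some threshold. This is the horizontal half of \Cref{def:monotone_allocation}. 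The symmetric argument with the seller's bid $\bs$ fixed shows that the buyer's trade indicator is nondecreasing in $\vb$, which is the vertical half. Combining the two (move west, then north) gives full monotonicity of $A$.

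\emph{Payments.} Within the trade region $P$ must be constant: otherwise a seller who trades would misreport to the trading bid with the largest payment. Outside $A$, $P=0$ by convention. Let $\tau=\max\{x\mid (x,\bb)\in A\}$; this is well defined because $A$ is closed. If the constant payment $c$ satisfied $c>\tau$, a seller with value $\vs\in(\tau,c)$, who does not trade under truthful bidding, would gain by bidding below $\tau$ and receiving $c$. If $c<\tau$, IR fails for a seller with $\vs\in(c,\tau]$, who trades and receives less than her value. Hence $c=\tau$, matching \Cref{def:payments}. The buyer's payment $q$ equals the south projection by the same argument.

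\emph{Reverse direction.} Assume $A$ is monotone and payments are Myerson. Fix $\bb$. By monotonicity the seller trades iff $\bs\le\tau$, receiving $\tau$, and otherwise keeps the good. If $\vs\le\tau$, truthful bidding yields $\tau\ge\vs$, while any deviation yields either $\tau$ or $\vs$. If $\vs>\tau$, truthful bidding yields $\vs$, while a deviation yields $\tau<\vs$. In both cases truthful bidding is a best response, and its utility is at least $\vs$, which is IR. The buyer case is symmetric, with threshold $\min\{y\mid(\bs,y)\in A\}$.

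The main obstacle is the boundary behaviour: whether the threshold point itself belongs to $A$, and the edge cases $\tau=0$, $\tau=1$, or an empty row. The closedness assumption on $A$ stated after \Cref{def:payments} settles these. When a row of $A$ is empty, the payment is $0$ by convention and nothing needs to be checked.
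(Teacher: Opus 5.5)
Your proposal is correct. It is the classical Myerson threshold argument that the paper relies on without writing out (it defers to Myerson and to Di Gregorio et al.): add the two DSIC inequalities to get row-wise and column-wise monotonicity, then pin the constant payment to the threshold using DSIC on one side and IR on the other, with closedness of $A$ handling the boundary. One phrase to tighten: a trading bid with the \emph{largest} payment need not exist, but it suffices that if $P(b_1)<P(b_2)$ for two trading bids, then a seller with value $b_1$ profits by bidding $b_2$.
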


    The proof of \Cref{thm:mechanisms} follows by standard arguments \citep{Myerson81,MyersonS83}, and is provided explicitly in \citet{DiGregorioDFS25}. A crucial consequence of \Cref{thm:mechanisms} is that it reduces the learning problem to finding the best monotone allocation region $A$. When the goal is maximizing gain from trade while enforcing budget balance, the mechanism space collapses to the simple family of fixed-price mechanisms \citep{Colini-Baldeschi16,Hagerty87}. A fixed price mechanism $M_p \in \M$ is characterized by a price $p \in [0,1]$, which is posted to both agents. The trade happens if they both accept (i.e., $\vs \le p \le \vb$), with the buyer paying exactly $p$ to the seller. In other words, $M_p$ allocates in the rectangle $[0,p] \times [p,1],$ see also \Cref{fig:payments}. Clearly, fixed-price mechanisms enforce budget balance. We have the following Proposition, whose simple proof is deferred to Appendix~\ref{app:preliminaries}.
    \begin{restatable}{proposition}{fixed}{\textnormal{(Fixed Price Mechanisms)}}
    \label{prop:fixed_price}
        Consider any mechanism $M \in \M$ that enforces budget balance, then there exists a fixed-price mechanism $M_p \in \M$ such that the following holds for any distribution $\cD$
        \[
            \Esub{v \sim \cD}{\gft(M,v)} \le \Esub{v \sim \cD}{\gft(M_p,v)}
        \]
    \end{restatable}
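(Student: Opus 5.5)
The plan is to prove a pointwise statement that does not depend on $\cD$: for every budget-balanced $M\in\M$ with allocation region $A$, we exhibit a price $p$ such that $A\subseteq[0,p]\times[p,1]$. This immediately gives $\gft(M,v)\le\gft(M_p,v)$ for every $v\in[0,1]^2$. Indeed, if $v\in A$ then $v$ lies in the rectangle and both sides equal $\vb-\vs$. If $v\notin A$, the left side is $0$, while the right side is either $0$ or $\vb-\vs\ge 0$, because on the rectangle $\vs\le p\le\vb$. Taking expectations then yields the claim simultaneously for all distributions $\cD$. If $A=\emptyset$, any price works, so we assume $A\neq\emptyset$.

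By \Cref{thm:mechanisms}, $A$ is monotone and closed, and the payments are the Myerson payments of \Cref{def:payments}. Budget balance means $q(\bs,\bb)\ge p(\bs,\bb)$ for every $(\bs,\bb)\in A$. We define the candidate price as the largest seller coordinate in $A$,
\[
p^\star=\max\{x\in[0,1] \mid \exists y:(x,y)\in A\},
\]
which is attained since $A$ is closed. Fix a point $(p^\star,y)\in A$, and let $y_0=\min\{y'\mid(p^\star,y')\in A\}$, which exists by closedness. Then $(p^\star,y_0)\in A$. The inclusion $A\subseteq[0,p^\star]\times[0,1]$ holds by definition of $p^\star$, so it remains to show that every $(s,b)\in A$ has $b\ge p^\star$.

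The key step is to apply budget balance at a well-chosen auxiliary point rather than at $(s,b)$ itself; budget balance at $(s,b)$ only yields $b\ge s$. Take $(s,b)\in A$. Since $s\le p^\star$ and $(p^\star,y_0)\in A$, monotonicity (closure in the north-west direction, moving the first coordinate down) gives $(s,y_0)\in A$. At this point we bound both payments:
\begin{itemize}
\item the buyer's payment satisfies $q(s,y_0)=\min\{y\mid(s,y)\in A\}\le b$, since $(s,b)\in A$;
\item the seller's payment satisfies $p(s,y_0)=\max\{x\mid(x,y_0)\in A\}\ge p^\star$, since $(p^\star,y_0)\in A$.
\end{itemize}
Budget balance at $(s,y_0)$ then gives
\[
b\;\ge\; q(s,y_0)\;\ge\; p(s,y_0)\;\ge\; p^\star .
\]
Hence $A\subseteq[0,p^\star]\times[p^\star,1]$, and setting $p=p^\star$ concludes the proof.

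The main obstacle is exactly this choice of auxiliary point. Using $(s,y_0)$ couples the smallest buyer threshold in column $s$ with the largest seller threshold in row $y_0$, which is what forces the "staircase" shape of $A$ to collapse to a single corner. The remaining details, namely closedness of $A$ to guarantee that the maxima and minima are attained, and the empty-region case, are routine.
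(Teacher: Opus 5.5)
Your proof is correct and follows the paper's argument: you show $A\subseteq[0,p]\times[p,1]$ by applying budget balance to the Myerson payments, and then conclude by pointwise domination of $\gft$. The only difference is that the paper applies budget balance once, at the corner $(0,1)$, whose Myerson payments are exactly your $p^\star$ and the smallest buyer coordinate in $A$; this makes your per-point auxiliary point $(s,y_0)$ unnecessary, though it is harmless.
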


    \subsection{The Learning Task}
        The goal of this paper is to design learning algorithms for profit and gain-from-trade maximization in bilateral trade that are differentially private. Given sample access to a distribution $\cD$ over $[0,1]^2$, we want to find mechanisms that maximize the expected profit or gain-from-trade, when valuations are drawn according to $\cD$. Note that the agents' valuations may be correlated. In the following and henceforth, we explicitly mention the variables and/or distributions with respect to which probability statements are made only when there is any ambiguity. If the context is clear, we omit this information. In general, the sources of randomness we consider are limited to three: the randomness inherent in the sample $S$, the randomness of the fresh sample $v$ drawn from the same distribution, and the internal randomness of the differentially private algorithm. 
        
        \begin{definition}
            Let $\acc > 0$ and $\fail \in [0,1]$. 
            Let $\cA$ be a possibly randomized learning algorithm that takes as input i.i.d. samples $S$ from $\cD$ and outputs a mechanism in $\M$. We say that $\cA$ is $(\acc,\fail)$-optimal with respect to metric $f$ if
            \[
                \Psub{\cA,S}{\sup_{M \in \M}\Esub{v\sim \cD}{f(M,v) - f(\cA(S),v)} \le  \acc} \ge 1-\fail.
            \]
        \end{definition}

        This definition is a standard ``PAC-learning'' style one: with probability $1-\fail$ with respect to the sampling phase and its internal randomization, the algorithm $\cA$ finds a mechanism $\cA(S)$ that is at most an additive $\acc$ apart from the best mechanism, \emph{on a fresh sample}.
        An important family that we study in this paper is that of the $\sigma$-smooth distributions \citep[e.g.,][]{HaghtalabRS24}, characterized by density functions that are not ``too peaked''. 
        \begin{definition}[Smooth Distributions]
        \label{def:smoothness}
            A distribution $\cD$ over $[0,1]^2$ is said to be $\sigma$-smooth, for some $\sigma \in (0,1]$, if the following inequality holds for any Borel set $A$ in $[0,1]^2$:
            \[
                \Psub{v \sim \cD}{v \in A} \le \frac{\cL(A)}{\sigma},
            \]
            where $\cL$ denotes the Lebesgue measure. A random variable is $\sigma$-smooth if its distribution is $\sigma$-smooth.
        \end{definition}

    \subsection{Differential Privacy}

        We now recall some notions from the differential privacy literature \citep{DworkR14}. Suppose $\mathcal{S}$ is a given family of possible data sets\footnote{We allow the dataset to contain the same point multiple times, so that $\cS$ may contain multisets.} over some domain, we say that two data sets $S,S'\in \mathcal{S}$ are neighboring if they differ in exactly one data point.         
        \begin{definition}
        Let $\varepsilon>0$. A randomized algorithm $\mathcal{A}: \mathcal{S}\rightarrow \mathcal{Y}$ is $\varepsilon$-differentially private, if for all neighboring data sets $S,S'\in \mathcal{S}$ and all outputs $M\in \mathcal{Y}$
        \[\mathbb{P}\left[\cA(S)=M\right] \leq e^{\varepsilon}\mathbb{P}\left[\cA(S')=M\right],\]
        where the probabilities are over the randomness of $\mathcal{A}$.
        \end{definition}
        
            For simplicity, we state the definition of differential privacy for \emph{finite} target spaces $\cY$, used for all our positive results. However, for our impossibility results we also need the one for general target spaces, which is fairly immediate\footnote{Given a $\sigma$-algebra $\Sigma$ on $\cY$, a measurable randomized algorithm is $\eps$-differentially private if \(\P{\cA(S) \in \tilde \M} \le e^{\eps} \P{\cA(S') \in \tilde \M}\) for all $\tilde \M \in \Sigma.$}. In this paper, our goal is to learn (good) mechanisms for bilateral trade; hence, the data sets are i.i.d. valuations of the type $v = (\vs,\vb)$ drawn from a distribution $\cD$ supported over $[0,1]^2$, and the output space is the family of mechanisms $\M$. A crucial ingredient in our approach is the \emph{exponential mechanism} \citep{McSherryT07}, which relies on the notion of sensitivity.

    \begin{definition}[Sensitivity]
        Let $f:\mathcal{S}\times \mathcal{Y} \rightarrow \mathbb{R}_{\geq 0}$. The sensitivity of $f$ with respect to two neighboring data sets $S$ and $S' \in \cS$ is 
        \[
            \Delta^{S,S'}_f = \max_{M\in \mathcal{Y}}\left\vert f(S,M) - f(S',M) \right\vert.    
        \]
        The sensitivity $\Delta_f$ of $f$ is given by the largest sensitivity for any neighboring sets: $\sup \Delta^{S,S'}_f$ for $S$ and $S'$ neighboring data sets in $\cS$.
    \end{definition}

    In this paper, we consider finite target space $\cY$. Given some score function $f$ with sensitivity $\Delta_f$, the exponential mechanism consists in sampling from $\cY$ with probability proportionate to $\exp\left(\varepsilon\cdot \frac{f(S, M)}{2 \Delta_f}\right)$. That is, the sampling algorithm $\expalg_f$ selects $M$ with probability
    \[
        \P{\expalg_f(S) = M} = \frac{\exp\left({\eps\cdot\frac{f(S,M)}{2 \Delta_f}}\right)}{\sum_{M'\in \mathcal{Y}}\exp\left({\eps\cdot\frac{f(S,M')}{2 \Delta_f}}\right)}.
    \]
    The exponential mechanism is $\eps$-differentially private \citep{McSherryT07}.

    \subsection{A Meta Theorem}
    \label{subsec:meta}
        We present a Meta-Theorem that spells out requirements to achieve, at the same time, approximate optimality and differential privacy. Our approach consists in finding a suitable subfamily $\Mnet$ of mechanisms, and then running the exponential mechanism on it, using as score function the empirical performance on samples. More precisely, we require a ``uniform convergence'' bound on $\Mnet$, and that such a family is representative enough of the whole mechanism space, i.e., that is not too far from the optimal mechanism.
        Since the same argument works for both gain-from-trade and profit, we use $f$ as a proxy for both objectives, and $ f(\cdot, S)$ as its empirical estimate on dataset $S$:
        \[
            f(M, S) = \frac{1}{n} \sum_{v \in S} f(M,v).
        \]

        Given a finite sub-family $\Mnet$ of mechanisms, we consider the exponential mechanism supported on $\Mnet$ that uses $f(\cdot, S)$ as its score function. Since both profit and gain from trade have values in $[-1,1]$, then the sensitivity of $f$ is upper bounded by $\nicefrac 2n.$ In the following meta-theorem we formalize the requirements on $\Mnet$ to achieve $(\acc,\fail)$-optimality and $\eps$-differential privacy, along with the corresponding sample complexity required. 

        \begin{restatable}{theorem}{meta}{\emph{(Meta-Theorem)}}
        \label{thm:meta}
            Fix any $\acc>0$, $\fail \in [0,1]$, and $\eps > 0$, and consider the objective $f$. Let $S$ be a random i.i.d. sample of size $n$, with $v$ independently drawn from the same distribution.
            Assume that there exists a finite family $\Mnet \subseteq \M$ that respects the following properties:
            \begin{itemize}
                \item[(i)] It holds that
                \[
                    \sup_{M \in \M} \Esub{v}{f(M,v)} - \max_{M \in \Mnet} \Esub{v}{f(M,v)} \le\frac{\acc}4    
                \]
                \item[(ii)] There exists $n_0 = n_0(\acc,\fail)$ such that when $|S| \ge n_0$, the following inequality holds with probability at least $1-\nicefrac{\fail}2$ over the sampling of $S$:
                \[
                    \max_{M \in \Mnet}|\Esub{v}{f(M,v)} - f(M, S)| \le \frac{\acc}4.
                \] 
            \end{itemize}
            If $|S| \ge  n_0+ \frac{16\log(\nicefrac{2\lvert \Mnet\rvert}{\fail})}{\acc\eps}$, then there exists a randomized learning algorithm that is $(\acc,\fail)$-optimal and $\eps$-differentially private.
        \end{restatable}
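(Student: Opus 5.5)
The plan is to take as the algorithm the exponential mechanism $\expalg_f$ supported on $\Mnet$, with score $f(M,S)$ and sensitivity bound $\Delta_f \le \nicefrac{2}{n}$. Privacy then follows directly from the guarantee of \citet{McSherryT07}: using the upper bound $\nicefrac 2n$ in place of the exact sensitivity only lowers the effective privacy parameter. The output always lies in $\Mnet \subseteq \M$. The remaining work is to establish $(\acc,\fail)$-optimality, which I will do by splitting the total error into three terms, each at most $\nicefrac{\acc}4$ (or $\nicefrac{\acc}{2}$ for the exponential-mechanism term), and allocating failure probability $\nicefrac{\fail}2$ to the sampling step and $\nicefrac{\fail}2$ to the internal randomness.

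First, the utility of the exponential mechanism. Let $\hat M = \arg\max_{M\in\Mnet} f(M,S)$ and $\mathrm{OPT}_S = f(\hat M,S)$. For any threshold $t>0$, every mechanism with $f(M,S)\le \mathrm{OPT}_S - t$ has unnormalized weight at most $\exp(\eps(\mathrm{OPT}_S-t)n/4)$, while the normalizing sum is at least $\exp(\eps\,\mathrm{OPT}_S n/4)$. A union bound over at most $|\Mnet|$ such mechanisms then gives
\[
\P{f(\expalg_f(S),S) \le \mathrm{OPT}_S - t} \le |\Mnet|\exp\left(-\frac{\eps n t}{4}\right).
\]
Setting $t=\nicefrac{\acc}{4}$, the right-hand side is at most $\nicefrac{\fail}{2}$ once $n \ge \frac{16\log(\nicefrac{2|\Mnet|}{\fail})}{\acc\eps}$. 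This matches the stated bound exactly, which confirms that $t=\nicefrac\acc4$ is the intended choice.

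Next, I chain the inequalities on the intersection of two events: the event of property (ii), which holds since $n\ge n_0$, and the exponential-mechanism event above. Let $M^\star$ be a mechanism achieving $\max_{M\in\Mnet}\Esub{v}{f(M,v)}$, and write $M_{\mathrm{out}}=\expalg_f(S)$. Then
\[
\Esub{v}{f(M_{\mathrm{out}},v)} \ge f(M_{\mathrm{out}},S)-\tfrac\acc4 \ge \mathrm{OPT}_S - \tfrac\acc2 \ge f(M^\star,S)-\tfrac\acc2 \ge \Esub{v}{f(M^\star,v)} - \tfrac{3\acc}4 \ge \sup_{M\in\M}\Esub{v}{f(M,v)} - \acc,
\]
where the last step uses property (i). A union bound over the two failure events gives total failure probability at most $\fail$. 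One subtlety is that the exponential-mechanism bound holds conditionally on $S$ for every $S$, so combining it with the sample event is immediate. Because $M_{\mathrm{out}}$ is random and data-dependent, it matters that (ii) holds uniformly over $\Mnet$; this is exactly what lets us apply it to $M_{\mathrm{out}}$ and to $M^\star$ simultaneously.

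The only step requiring care is the exponential-mechanism tail bound and its constants. I need to check that the exponent $\eps f/(2\Delta_f)$, evaluated with $\Delta_f=\nicefrac2n$, equals $\eps n f/4$, and that the threshold $\nicefrac\acc4$ produces the factor $16/(\acc\eps)$. Both checks go through as computed above. Finally, I need the sum $n_0 + \frac{16\log(\nicefrac{2|\Mnet|}{\fail})}{\acc\eps}$ to dominate each of the two requirements separately, which it does trivially.
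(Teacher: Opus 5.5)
Your proposal is correct and follows the paper's proof essentially step for step. You use the exponential mechanism on $\Mnet$ with exponent $\eps n f(\cdot,S)/4$, and the tail bound $|\Mnet|e^{-\eps\acc n/16}\le\nicefrac{\fail}{2}$ for the event that the output is more than $\nicefrac{\acc}{4}$ below $\opt_S$. You then take a union bound with the uniform-convergence event of property (ii) and apply the same four-step chain of inequalities using (i), (ii), the exponential-mechanism event, and (ii) again.
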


        The proof of the theorem is deferred to Appendix~\ref{app:meta}. We conclude with some comments on the two properties required by the meta-theorem. Property (i) states that the subfamily $\Mnet$ contains a mechanism whose performance is $\acc$-close to the optimal value that $f$ may achieve, i.e., that $\Mnet$ has a small \emph{discretization error} with respect to $f$. On the other hand, property (ii) requires that we have enough samples to obtain \emph{uniform convergence} in the subfamily; stated differently, the objective $f$ is well estimated over all the mechanisms in $\Mnet$.

\section{Profit Maximization}

    In this Section, we investigate private and nearly optimal mechanisms for profit maximization. First, in \Cref{subsec:lower_profit}, we construct a hard family of distributions for which optimality and differential privacy are incompatible. Then, in the rest of the section, we construct the desired mechanisms assuming smoothness of the underlying distribution. In \Cref{subsec:wrapping} we achieve this result by following the pipeline from \Cref{thm:meta}: in \Cref{subsec:restriction_profit}, we present a finite family of mechanisms that well approximates the optimal profit, while in \Cref{subsec:unif_convergence}, we show its uniform convergence properties. 
    
    \subsection{Private Learning is Impossible for General Distributions}
    \label{subsec:lower_profit}

    We prove that private learning of profit-maximizing mechanisms for bilateral trade is impossible.

    \begin{theorem}
    \label{thm:impossibility_profit}
        For any learning algorithm $\cA$ for profit maximization in bilateral trade, precision parameter $\alpha < \nicefrac{1}{64}$, privacy parameter $\varepsilon>0$ and number of samples $N$, there exists a hard distribution $\cD$ for which the learning algorithm $\cA$ on $N$ i.i.d. samples cannot be $\eps$-differentially private and satisfy: 
        \begin{equation}
        \label{eq:optimality_expectation}
                \sup_{M \in \M} \E{\prof(M,v)} \le \E{\prof(\A(S),v)} + \alpha,
        \end{equation}
            where the expectation is with respect to $S$ and $v$ drawn from $\cD$, and the randomization of $\A$. 
    \end{theorem}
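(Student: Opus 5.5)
The plan is a packing argument against the $\eps$-DP constraint, using a continuum of hard distributions. For each parameter $t$ in an interval $T$ of length $\Theta(1)$ we build a distribution $\cD_t$ with two properties. First, it contains a heavy atom located at a point depending on $t$, so that every mechanism that is $O(\alpha)$-optimal for $\cD_t$ must have an allocation boundary passing very close to that atom; concretely, its payments at the atom must be within $\eta$ of the atom's coordinates, for a tiny $\eta$. Second, the sets of near-optimal mechanisms $\mathcal G_t$ are pairwise disjoint for $|t-t'|>2\eta$. The natural choice is to reuse the construction showing that profit is not approximable by fixed discretizations (as in \citet{DiGregorioDFS25}).

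A concrete version takes $\cD_t$ to put mass $1/2$ on a point $(t, t+c)$, with $c$ a constant and $t\in[1/4,1/2]$, and the rest on a fixed "background" such as a point $(0,1)$. Sitting exactly on the atom yields profit $c$. Any monotone region whose Myerson payments are off by $\delta$ either loses the atom, costing $c/2$ in expectation, or loses $\delta/2$ in expectation. Hence $\alpha$-optimality in expectation, via Markov, forces $\P{\cA(S)\in\mathcal G_t}\ge 1/2$, where $\mathcal G_t$ is the set of mechanisms whose payment pair at the atom is $4\alpha$-close to $(t,t+c)$. Since $\alpha<1/64$, sets $\mathcal G_t$ with $|t-t'|>8\alpha$ are disjoint.

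Privacy enters through group privacy. If I simply compare datasets, two samples of size $N$ from $\cD_t$ and $\cD_{t'}$ can be coupled so that they differ in all atom points, about $N/2$ of them. Group privacy then gives
\[
\P{\cA(S)\in \mathcal G_t}\le e^{\eps N}\,\P{\cA(S')\in\mathcal G_t}.
\]
Summing over roughly $1/(16\alpha)$ disjoint sets then needs $e^{-\eps N}/(32\alpha)\le 1$, which only bounds $N$ and yields a sample-complexity lower bound, not impossibility. To get impossibility for every $N$, the number of disjoint candidates must be unbounded while the Hamming distance stays bounded. So I will instead use a continuum $t\in[1/4,1/2]$ and shrink the precision window to zero. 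The resolution is to require exact optimality in the limit: make the profit loss linear in the payment offset with a constant slope, so that $\alpha$-optimality pins $t$ down only to $O(\alpha)$. That is still finite, so I must restructure.

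Following the paper's hint that this mirrors non-discretizability, the restructured plan uses atoms whose $t$ lies in an uncountable set, with the output required to "know" $t$ exactly. I would make the profit cliff-like: place a second atom at $(t-\eta', t+c)$ with $\eta'\to0$. This forces the boundary column to pass in between, which is only possible if the mechanism essentially identifies $t$ to arbitrary precision. Under DP, datasets from $\cD_t$ and $\cD_{t'}$ agree in distribution on the background, and on any finite family $t_1,\dots,t_K$ group privacy with distance $N$ gives $\sum_k \P{\cA(S_k)\in\mathcal G_{t_k}}\le K e^{\eps N}\cdot\P{\cA(S_1)\in \cup_k \mathcal G_{t_k}}\cdot$(averaged). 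Averaging over $k$ bounds the minimum success by $e^{\eps N}/K$. Letting $K\to\infty$, which is possible because the cliff width is chosen after $N$, $\eps$ and $\cA$ are fixed, contradicts success $\ge1/2$. \textbf{The main obstacle} is precisely designing $\cD$ depending on $N$ so that $K> 2e^{\eps N}$ disjoint $\alpha$-optimal regions exist. I expect the constant $1/64$ to come from this cliff bookkeeping.
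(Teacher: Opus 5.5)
Your privacy skeleton is sound and is the one the paper uses. Markov's inequality turns the in-expectation guarantee into $\P{\cA(S)\in\mathcal G_i}\ge \nicefrac12$, where $\mathcal G_i$ is the set of $2\alpha$-optimal mechanisms for $\cD_i$. Group privacy along a chain of at most $N$ single-point swaps then gives $\P{\cA(S_j)\in\mathcal G_i}\ge e^{-N\eps}/2$. Pairwise disjointness of the $\mathcal G_i$ yields a contradiction once the number $K$ of distributions is large compared with $e^{N\eps}$. You also correctly see that this needs, for a \emph{fixed} tolerance $\alpha$, an unbounded number of distributions with pairwise disjoint near-optimal sets.

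That family is the entire content of the theorem, and you leave it open (your ``main obstacle''). The cliff you sketch does not supply it. The atoms $(t,t+c)$ and $(t-\eta',t+c)$ have the same buyer value, so the rectangle $[0,t]\times[t+c,1]$ earns the full $c$ on both, and nothing forces a boundary between them. Moving the corner outward to $(t+\delta,t+c-\delta)$ still earns $c-2\delta$ on both atoms of every $\cD_{t'}$ with $|t'-t|\le\delta$. So the near-optimal sets of $\cD_t$ and $\cD_{t'}$ still intersect whenever $|t-t'|=O(\alpha)$, however small $\eta'$ is. You are back to $O(1/\alpha)$ candidates, i.e.\ only a sample-complexity lower bound. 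Your first concrete family has a further defect: $t+c\le 1$ forces $c\le\nicefrac12$, and with the rest of the mass on $(0,1)$, trading only at $(0,1)$ is already optimal. Hence all $\cD_t$ share an optimal mechanism.

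The missing idea is a multi-scale family in which every pair of distributions is separated by a \emph{constant} profit gap. The paper takes $\cD_i$ uniform on the segment $\ell_i$ from $(0,\nicefrac12)$ to $(\xi_i,1)$, with $\xi_i=\delta^i$ and $i\le 1/\delta$. The optimum is $\nicefrac34-\nicefrac{\xi_i}{2}$, attained by the triangle with vertices $(0,\nicefrac12)$, $(\xi_i,1)$, $(0,1)$. Using only monotonicity and Myerson payments, it shows that any $2\alpha$-optimal mechanism for $\cD_i$ earns at most $\nicefrac23$ on every $\cD_j$, $j\neq i$:
\begin{itemize}
\item If $\ell_i$ is the flatter segment, the mechanism must trade at some $v_i(t_0)$ with $t_0\in[\xi_j/\xi_i,\nicefrac13]$. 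Its seller coordinate already exceeds $\xi_j$, so by monotonicity every traded point of $\ell_j$ pays buyer price at most $(1+t_0)/2\le\nicefrac23$.
\item If $\ell_i$ is the steeper segment, trading at any $v_j(s)$ with $s\in[\xi_i/\xi_j,\nicefrac12]$ caps the buyer price along $\ell_i$ at $(1+s)/2$. This costs at least $(1-s)^2/4\ge\nicefrac1{16}$ on $\cD_i$, so the mechanism trades on at most a $\nicefrac12+\delta$ fraction of $\ell_j$.
\end{itemize}
Because consecutive slopes differ by the factor $1/\delta$, this gap does not depend on how many distributions there are. That is exactly what lets $\delta$ be chosen after $N$ and $\eps$. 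Without a construction of this kind, your argument does not prove the theorem.
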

    \begin{proof}
        Let $\alpha' = 2\alpha$ and consider a family of hard input distributions as follows. Let $\delta$ be a small parameter we set later (for simplicity, we assume $\nicefrac{1}{\delta}$ is integer) and define $\xi_i = \delta^{i}$, for $i\in\{1,\ldots,\delta^{-1}\}$. For any such $i$, we introduce $\mathcal{D}_i$, the uniform distribution on the line $\ell_i$ connecting the points $(0,\nicefrac 12)$ and $(\xi_i,1)$. By construction, the optimal mechanism $M_i$ for profit on $\mathcal{D}_i$ allocates in the triangle $(0,\nicefrac 12)$--$(\xi_{i},1)$--$(0,1)$, yielding an (expected) profit of $ \int_{0}^{\xi_i} \left(\nicefrac{1}{2}+(\nicefrac{1}{(2\xi_i)} -1)x\right)\frac{1}{\xi_i} dx = \nicefrac 34-\nicefrac{\xi_i}{2}$. Crucially, mechanisms which performs well on a given $\cD_i$ are suboptimal for the other distributions of the family, as formalized in the following claim, whose proof is deferred to Appendix \ref{app:missing_lower_profit}.
\begin{restatable}{claim}{lbprofits}{}
\label{claim:lbprofits} Let $\alpha',\delta<\nicefrac{1}{32}$. Consider any $i,j\in\{1,\ldots,\delta^{-1}\}$, with $i\neq j$. If $\Esub{v \sim \cD_i}{\prof(M,v)} \ge \nicefrac 34 -\nicefrac{\xi_i}{2}-\alpha'$, then $\Esub{v \sim \cD_j}{\prof(M,v)}\le  \nicefrac{2}{3} < \nicefrac 34 -\nicefrac{\xi_j}{2}-2\alpha'$.
\end{restatable}
        Stated differently it $\mathcal{M}_i$ denotes the set of mechanisms that have profits competitive with $M_i$ on $\mathcal{D}_i$ (i.e., whose expected profit under $\cD_i$ is at least $\nicefrac 34 - \nicefrac{\xi_i}2 -\alpha'$), then any $M \in \mathcal{M}_i$ cannot be competitive with $M_j$ on $\mathcal{D}_j$. 

Now, assume towards contradiction that there exists a $\eps$-differentially private learning algorithm $\cA$ satisfying \Cref{eq:optimality_expectation} whenever the valuations are sampled by any distribution in the hard family. Therefore, if the underlying distribution is $\mathcal{D}_i$ (for an arbitrary $i \in \{1, \ldots, \delta^{-1}\}$), the algorithm outputs a mechanism $M\in \mathcal{M}_i$ with expected profit at least $\nicefrac{3}{4}- \nicefrac{\xi_i}{2} - \alpha'$ with probability at least $\nicefrac{1}{2}$. To see why this is the case, notice that by assumption we have: 
$$
\Esub{\cA, S\sim \cD^N_i}{\Esub{v\sim \cD_i}{\prof(M_i, v)-\prof(\cA(S), v)}}\leq \alpha, 
$$
which implies, by Markov's inequality, since $M_i$ is the optimal mechanism for $\cD_i$,
\begin{equation*}
\Psub{\cA, S\sim \cD_i^N}{\cA(S)\notin \M_i} = \Psub{\cA, S\sim \cD_i^N}{\Esub{v\sim \cD_i}{\prof(M_i, v)-\prof(\cA(S), v)} > \alpha'} 
\leq \frac{\alpha}{\alpha'} = \frac{1}{2}
\end{equation*}

Now denote by $S_i^N$ a sample that only draws $N$ points from distribution $\mathcal{D}_i$ and by $S_{i,j}^{k}$ a sample where $k$ of the points drawn from $\mathcal{D}_i$ are replaced by points drawn from $\mathcal{D}_j$. 
Thus, by definition of differential privacy we have the following chain of inequalities:
            \begin{align}
               \frac{1}{4} <\, \P{\A(S_i) \in \mathcal{M}_i} &\le e^{\eps} \P{\A(S^1_{i,j}) \in \M_i}\tag{$S_i$ and $S^1_{i,j}$ differ in one point}\\
                &\le e^{\eps}\left[e^{\eps} \P{\A(S^2_{i,j}) \in \mathcal{M}_i}\right]  \tag{$S^1_{i,j}$ and $S^2_{i,j}$ differ in one point}\\
                &\le e^{N\eps}\P{\A(S_{j})\in \mathcal{M}_i}\tag{iterating $N$ times} 
            \end{align}
        Thus, we have $\nicefrac{1}{4}\cdot e^{-N\varepsilon} < \P{\A(S_{j})\in \mathcal{M}_i}$.
        Summing up over all candidate distributions $\mathcal{D}_i$ with $i\neq j$, we have 
        \[
            \P{\cup_{i\neq j}\{\cA(S_j)\in \M_i\}} = \sum_{i\neq j} \P{\A(S_{j}) \in \mathcal{M}_i} > e^{-N\varepsilon}\frac{1}{4} \left(\frac{1}{\delta}-1\right)\geq e^{-N\varepsilon}\cdot \frac{1}{8\delta}, 
        \]
        where the first equality follows by the pairwise disjointedness of $\{\M_i\}_{i=1}^{\delta^{-1}}$, which is implied by \Cref{claim:lbprofits}. 
        We can now choose $\delta < \min \{\nicefrac{1}{32}, \nicefrac{e^{-N\varepsilon}}{4}\}$, so that \Cref{claim:lbprofits} holds and the right-hand-side term of the previous display is greater than $\nicefrac{1}{2}$. Thus with probability $p$ at least $\nicefrac{1}{2}$, the algorithm should release a mechanism from $\cup_{i\neq j}\mathcal{M}_i$ on distribution $\mathcal{D}_j$. By the law of total probability and \Cref{claim:lbprofits}, the expected profit of $\cA$ on $\mathcal{D}_j$ is then no more than: 
        $$
        p\cdot \nicefrac{2}{3}+(1-p)\cdot \opt_j = \opt_j - p(\opt_j-\nicefrac{2}{3}) < \nicefrac{1}{2}\cdot \opt_j + \nicefrac{1}{3},
        $$
        where $\opt_j = \Esub{v\sim\cD_j}{\prof(M_j, v)}$.
        Taking the difference between $\opt_j$ and the above we get 
        \[
            \nicefrac{1}{2}\cdot \opt_j - \nicefrac{2}{6} = \nicefrac{1}{2}\cdot(\nicefrac{3}{4}-\nicefrac{\xi_j}{2})-\nicefrac{2}{6}> \nicefrac{1}{2}\cdot(\nicefrac{3}{4}-\nicefrac{1}{64})-\nicefrac{2}{6}>\nicefrac{1}{64} > \alpha.
        \]
        Therefore, \Cref{eq:optimality_expectation} cannot hold for $\cA$ when $\cD_j$ is the underlying valuation distribution, yielding a contradiction.
        
    \end{proof}

    \subsection[The eta-simple Mechanisms]{The $\eta$-simple Mechanisms}
    \label{subsec:restriction_profit}

        We introduce a discrete family of mechanisms, parameterized by a precision $\eta$ which we assume, for convenience, to be a power of $2$, i.e., $\eta = 2^{-H}$, for some positive integer $H$. This assumption is without loss of generality, as rounding the precision to the closest power of $2$ only affects the sample complexity by a constant factor. Consider the uniform grid $V_{\eta}$ of step-size $\eta$ of $[0,1]^2$, i.e., the points of the form $(k \eta, j\eta)$ for $k$ and $j$ in $0, \dots, \nicefrac{1}{\eta}$. The grid naturally divides the $[0,1]^2$ square into squared ``tiles'' of the form $[k \eta, (k+1)\eta] \times [j \eta, (j+1)\eta]$, we denote with $\cT_{\eta}$ the set of such tiles. We define the family of $\eta$-simple mechanisms $\M_{\eta}\subseteq \M$ as all the mechanisms whose (monotone) allocation region is given by the union of tiles in $\cT_{\eta}$ (see also \Cref{fig:approximations_prof}). By definition, the allocation regions of $\eta$-simple mechanisms are given by the union of segments joining points in the underlying uniform grid, this allows to bound the cardinality of $\M_{\eta}$ via a simple combinatorial argument which is deferred to Appendix~\ref{app:missing_net}.  

        \begin{restatable}{lemma}{cardbound}   \label{prop:card_bound}
            The following bound on the cardinality of $\eta$-simple mechanisms holds: $\lvert \M_\eta\rvert \leq {(2e)^{\nicefrac{1}{\eta}}}$. 
        \end{restatable}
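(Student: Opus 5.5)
We plan to encode every $\eta$-simple mechanism by the boundary of its allocation region, viewed as a monotone lattice path on the grid $V_\eta$, and then count such paths. Since the allocation region $A$ of $M\in\M_\eta$ is a union of tiles in $\cT_\eta$ and is closed in the north-west direction, it is determined column by column. For each column index $k\in\{0,\dots,\nicefrac1\eta-1\}$, the tiles of $A$ lying in the column $[k\eta,(k+1)\eta]\times[0,1]$ must form an upward-closed set. Indeed, if a tile is in $A$, then every tile above it in the same column is in $A$. So the column is described by a threshold $h_k\in\{0,1,\dots,\nicefrac1\eta\}$, where the tiles $[k\eta,(k+1)\eta]\times[j\eta,(j+1)\eta]$ with $j\ge h_k$ are allocated and $h_k=\nicefrac1\eta$ means the column is empty. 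Monotonicity in the seller direction says that moving west can only enlarge the region. This forces $h_0\le h_1\le\dots\le h_{\nicefrac1\eta-1}$.

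The map $M\mapsto(h_k)_k$ is injective, because the allocation region, and hence the Myerson payments, are determined by it through \Cref{thm:mechanisms}. It therefore suffices to count non-decreasing sequences of length $m=\nicefrac1\eta$ with values in $\{0,\dots,m\}$. This is the number of multisets of size $m$ drawn from $m+1$ values, namely $\binom{2m}{m}$. Equivalently, it is the number of monotone staircase paths from $(0,0)$ to $(1,1)$ made of $m$ east steps and $m$ north steps on $V_\eta$. The region's boundary is exactly such a path, possibly with some degenerate edges along the sides of the square.

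To conclude, we bound $\binom{2m}{m}\le\left(\frac{e\cdot 2m}{m}\right)^m=(2e)^m=(2e)^{\nicefrac1\eta}$ using the standard estimate $\binom{a}{b}\le(ea/b)^b$. The cruder bound $4^m$ would also suffice, but the stated form matches the lemma exactly.

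The only delicate point is verifying that the encoding is faithful. We must check that tile unions which are monotone in the sense of \Cref{def:monotone_allocation} correspond exactly to non-decreasing thresholds. Care is needed with the topological closure, since the boundaries of adjacent tiles are shared. Tiles are closed sets and the allocation region is required to be closed, so two regions with different threshold sequences differ on the interior of some tile. Hence they are genuinely distinct, and conversely every region is captured by some threshold sequence. If an empty region or the full square needs separate treatment, it only adds a constant factor, which is absorbed since $\binom{2m}{m}\le 4^m<(2e)^m/2$ for $m\ge1$.
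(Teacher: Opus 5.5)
Your argument is correct and is essentially the paper's proof: your non-decreasing threshold sequence $(h_k)_k$ is another encoding of the monotone lattice path from $(0,0)$ to $(1,1)$ on $V_\eta$ that the paper counts, so both arguments give $|\M_\eta|\le\binom{2/\eta}{1/\eta}\le(2e)^{1/\eta}$ via $\binom{a}{b}\le(ea/b)^b$. The one slip is in your closing aside, where $4^m<(2e)^m/2$ is false for $m\in\{1,2\}$ (e.g.\ $16>2e^2$ at $m=2$); the aside is not needed, since $h_k\equiv 1/\eta$ and $h_k\equiv 0$ already encode the empty region and the full square, and in any case $\binom{2m}{m}+1\le 4^m\le(2e)^m$ holds directly.
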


        The power of $\eta$-simple mechanisms resides in the fact that they nicely approximate the profit of any mechanism, as long as the underlying distribution is smooth. The approximating map is fairly natural: given a generic mechanism $M$, just consider the union of all the tiles in $\cT_{\eta}$ which are fully contained in the allocation region of $M$, some sort of $\eta$-simple ``inner'' approximation. The definition below is illustrated in \Cref{fig:approximations_prof}.
        
        \begin{definition}[$\eta$-approximations]
        \label{def:approximations_prof}
        For any precision parameter $\eta$ and any $M \in \M$ with allocation region $A$, we define $\app{\eta}{M} \in \M_{\eta}$ as the mechanism with allocation region given by the union\footnote{If the union is empty, we conventionally let the allocation region be the single point $(0, 1)$.} of all $T \in \cT_{\eta}$ such that $T \subseteq A$.
        \end{definition}
        
        We can then prove the approximation result needed in the first point of the meta-theorem, and complement it with a stronger bound on the general $p^{\textnormal{th}}$ moment of the approximation errors. 
        
    \begin{lemma}
    \label{lemma:restriction}
    Let $v$ be a $\sigma$-smooth random variable on $[0, 1]^2$. For any precision $\eta>0$, the following bound holds:
    \begin{equation*}
        \max_{M\in \mathcal M}\E{\prof(M, v)} - \max_{M\in \mathcal M_\eta} \E{\prof(M, v)} \le \frac{2\eta}{\sigma}
    \end{equation*}
    \end{lemma}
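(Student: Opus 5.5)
Fix an arbitrary $M \in \M$ with monotone allocation region $A$ and Myerson payments $p,q$, and compare it with its inner approximation $\app{\eta}{M} \in \M_\eta$ from \Cref{def:approximations_prof}. Since $\max_{M'\in\M_\eta}\E{\prof(M',v)} \ge \E{\prof(\app{\eta}{M},v)}$, it suffices to show that for every $M$
\[
\E{\prof(M,v)} - \E{\prof(\app{\eta}{M},v)} \le \frac{2\eta}{\sigma},
\]
and then take the supremum over $M$.

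The pointwise profit difference splits into two regions. Let $A_\eta \subseteq A$ be the allocation region of $\app{\eta}{M}$.

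\emph{On $A_\eta$:} both mechanisms trade. The seller payment of $\app{\eta}{M}$ at $(\vs,\vb)$ is the east projection onto $A_\eta$, and $\max\{x : (x,\vb)\in A_\eta\} \le \max\{x:(x,\vb)\in A\}$, so the seller receives weakly less. The buyer payment is the south projection, and $\min\{y:(\vs,y)\in A_\eta\} \ge \min\{y:(\vs,y)\in A\}$, so the buyer pays weakly more. Hence $\prof(\app{\eta}{M},v) \ge \prof(M,v)$ on $A_\eta$, and this region contributes no loss.

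\emph{On $A\setminus A_\eta$:} $M$ trades and $\app{\eta}{M}$ does not. Here profit is $q-p\le 1$, and in fact it suffices to bound the loss by $\P{v \in A\setminus A_\eta}$.

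So the whole loss is at most $\P{v\in A\setminus A_\eta} \le \cL(A\setminus A_\eta)/\sigma$ by $\sigma$-smoothness. The remaining step is to show $\cL(A\setminus A_\eta)\le 2\eta$. The set $A\setminus A_\eta$ is contained in the union of tiles in $\cT_\eta$ that meet $A$ but are not contained in it, i.e., tiles crossed by the boundary of $A$. That boundary is a monotone staircase curve, nondecreasing from the side $\vb$ low to the top, in the north-west closed sense. A monotone curve in a grid of $\nicefrac1\eta\times\nicefrac1\eta$ tiles can meet at most about $2/\eta$ tiles: along the curve, each new tile is entered by crossing either a vertical or a horizontal grid line, and there are only $\nicefrac1\eta$ of each in the monotone direction. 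Each tile has area $\eta^2$, so the total area is at most about $2\eta$, giving the bound $2\eta/\sigma$.

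The main obstacle is making the tile-counting tight enough to get exactly the constant $2$, as opposed to $2+O(\eta)$. Boundary-touching corner tiles and the degenerate convention when $A_\eta$ is empty need care. I would handle this by counting, for each column of tiles, only the tiles partially intersecting $A$. By monotonicity these form a contiguous vertical run, and consecutive columns' runs overlap in at most one row, which yields at most $\nicefrac1\eta+\nicefrac1\eta$ tiles. Measure-zero boundary effects are irrelevant under smoothness.
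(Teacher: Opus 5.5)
Your proposal is correct and follows essentially the same route as the paper. You compare with the inner approximation $\app{\eta}{M}$ and observe that profit can only increase on its allocation region, since its Myerson payments are weakly more favorable to the broker. You then bound the remaining loss by $\P{v\in A_M\setminus A_{\app{\eta}{M}}}\le 2\eta/\sigma$ using smoothness and the fact that at most $2/\eta$ tiles are crossed by the monotone boundary; your column-by-column run count is a more explicit version of the paper's boundary-movement counting in its appendix claim.

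The only real difference is that you argue for an arbitrary $M$ and take the supremum. This spares you the existence of a profit-maximizing mechanism, which the paper imports from its appendix.
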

    \begin{proof}
        Let $M^\star$ be the profit-maximizing mechanism for the given distribution. $M^\star$ exists, since the maximum is attained in $\M$, as we prove in Appendix \ref{sec:max_attained}. Let $\phi^{\star}$ the $\eta$-simple mechanism that approximates it, i.e. $\phi^{\star}=\app{\eta}{M^{\star}}$. Denote with $A_{M^\star}$, respectively $A_{\phi^\star}$ their allocation regions. By definition of $\phi^\star$, the allocation region of the former mechanism contains the latter, which has an important impact on the prices proposed by the two mechanisms. Indeed, consider any point $v \in A_{\phi^\star}$, by the definition of Myerson payments and the fact that $A_{\phi^\star}\subseteq A_{M^\star}$, it follows that $\prof(M^\star,v) \le \prof(\phi^\star,v)$. We then have the following chain of inequalities:
        \begin{align*}
            \E{\prof(M^\star, v)} &=\E{\prof(M^\star, v)\ind{v\in A_{\phi_\eta(M^\star)}}} + \E{\prof(M^\star, v)\ind{v\in A_{M^\star}\setminus A_{\phi_\eta(M^\star)}}}\\
            &\leq\E{\prof(\phi_\eta(M^\star), v)\ind{v\in A_{\phi_\eta(M^\star)}}} + \P{v\in A_{M^\star}\setminus A_{\phi_\eta(M^\star)}}\\
            &\le \max_{M\in \M_\eta}\E{\prof(M, v)} + \P{v\in A_{M^\star}\setminus A_{\phi_\eta(M^\star)}},
        \end{align*}
        where the inequality follows from the fact that the profit is always bounded above by $1$. In \Cref{cl:cardinality} (Appendix~\ref{app:missing_net}), we prove that $\P{v\in A_M\setminus A_{\phi_\eta(M)}} \le \nicefrac{2\eta}{\sigma}$; thus concluding the proof. 
        In a nutshell, \Cref{cl:cardinality} holds as there are at most $\nicefrac{2}{\eta}$ tiles in $\cT_{\eta}$ that have non-trivial intersection with $A_M$ (i.e., they intersect with $A_M$ but are not fully contained in it). Moreover, each such cell has probability at most $\nicefrac{\eta^2}\sigma$ by smoothness, and their union is a superset of $A_M\setminus A_{\phi_\eta(M)}$ (See also \Cref{fig:approximation_difference}).
    \end{proof}
        \begin{figure}[t!]
          \centering
          \begin{subfigure}{0.22\linewidth}
            \centering
            \includegraphics[width=\linewidth]{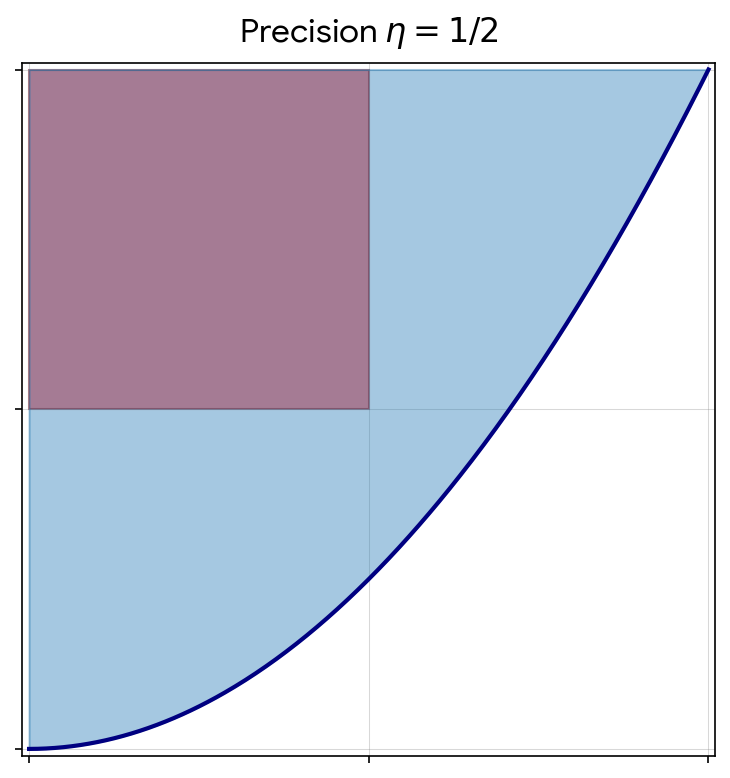}
          \end{subfigure}%
          \hspace*{6pt}
          \begin{subfigure}{0.22\linewidth}
            \centering
            \includegraphics[width=\linewidth]{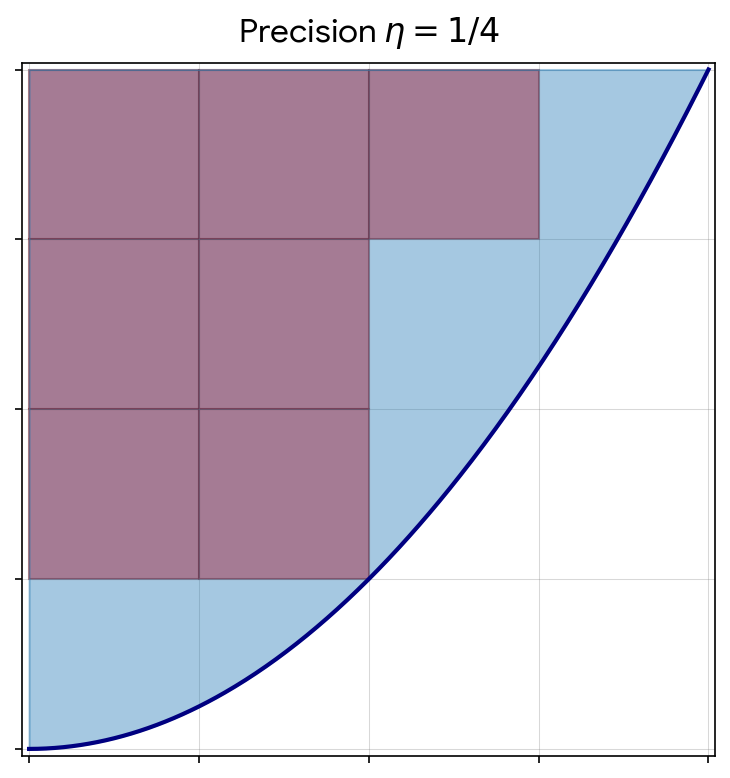}
          \end{subfigure}
          \hspace*{6pt}
          \begin{subfigure}{0.22\linewidth}
            \centering
            \includegraphics[width=\linewidth]{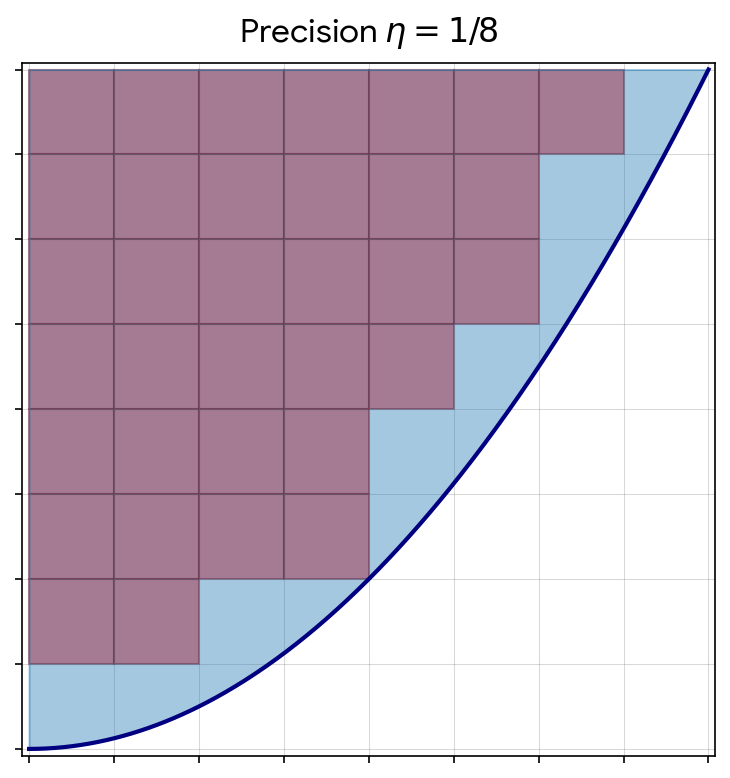}
          \end{subfigure}
          \hspace*{6pt}
          \begin{subfigure}{0.22\linewidth}
            \centering
            \includegraphics[width=\linewidth]{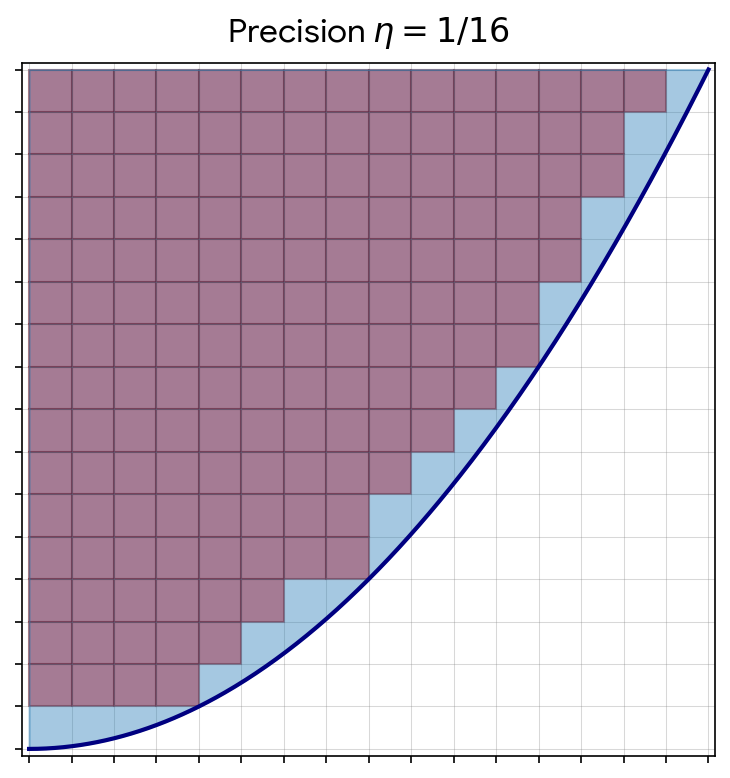}
          \end{subfigure}
        \caption{A progressive approximation of a mechanism with $\phi_\eta$.}
        \label{fig:approximations_prof}
        \end{figure}
    \begin{restatable}{lemma}{lpnet}
    \label{lem:lp_net}
    Let $v$ be a $\sigma$-smooth random variable on $[0, 1]^2$. For every $\eta > 0$ and $p \ge 2$: 
    \begin{equation*}
        \E{\lvert\prof(M, v)-\prof(\app{\eta}{M}, v)\rvert^p} \leq 2(2^{p+2}+1) \cdot \frac{\eta}{\sigma}
    \end{equation*}
    \end{restatable}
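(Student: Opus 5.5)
We bound the $p$-th moment by splitting according to where $v$ falls relative to the two allocation regions. Write $A = A_M$ and $A_\eta = A_{\app{\eta}{M}}$, so that $A_\eta \subseteq A$ by \Cref{def:approximations_prof}. For the degenerate empty-union case, where $A_\eta$ is the single point $(0,1)$, that point has probability zero under smoothness, so it can be ignored. The error $D(v) = \prof(M,v)-\prof(\app{\eta}{M},v)$ vanishes outside $A$. Since profits lie in $[-1,1]$, we have $|D|\le 2$ and hence $|D|^p \le 2^p$ everywhere. On $A\setminus A_\eta$, therefore, the contribution is at most $2^p\,\P{v\in A\setminus A_\eta} \le 2^p\cdot \nicefrac{2\eta}{\sigma}$ by \Cref{cl:cardinality}. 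This already accounts for a term of the stated form.

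The remaining term comes from $v\in A_\eta$, where both mechanisms trade but charge different Myerson prices. Writing $v=(\vs,\vb)$, the seller price of $M$ is $p_M(\vb)=\max\{x:(x,\vb)\in A\}$, and $p_\eta(\vb)$ is the analogous quantity for $A_\eta$. Similarly, $q_M(\vs)$ and $q_\eta(\vs)$ are the south projections. Then
\[
  |D(v)| \le \bigl(p_M(\vb)-p_\eta(\vb)\bigr) + \bigl(q_\eta(\vs)-q_M(\vs)\bigr) =: X(\vb)+Y(\vs),
\]
with $X,Y\in[0,1]$, and by convexity $|D|^p\le 2^{p-1}(X^p+Y^p)$. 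The key geometric observation is that $X(\vb)$ is the horizontal width of the strip $A\setminus A_\eta$ at height $\vb$. This strip lies inside the union of the at most $\nicefrac{2}{\eta}$ boundary tiles identified in \Cref{cl:cardinality}.

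The natural first attempt is to bound $X\le 1$, so that $X^p\le X$, and then integrate. However, $\E{X(\vb)}$ is not controlled by the area of the strip, since the density of $v$ inside $A_\eta$ interacts with the width at height $\vb$. To handle this, decompose $A_\eta$ into horizontal rows of tiles of height $\eta$. In each row, $X$ is bounded by the width of the boundary tiles of $A$ in that row, plus at most one extra $\eta$, because $A$ is monotone and its boundary is a staircase-like curve. We then bound $\P{v\in A_\eta,\ X(\vb) > t}$ via a layer-cake argument. For $X(\vb)\ge t$ to hold at height $\vb$, the boundary of $A$ must traverse at least $t/\eta - 1$ tiles horizontally in that row. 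Summing over rows, the total horizontal traversal of a monotone curve is at most $1$, which should give $\E{X^p\ind{v\in A_\eta}}\lesssim 4\eta/\sigma$ (or similar) after the densities of the row strips are bounded by smoothness.

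I expect this step to be the main obstacle. A row strip $[0,1]\times[j\eta,(j+1)\eta]$ has mass up to $\eta/\sigma$, and there are $1/\eta$ rows, so a naive bound only gives $O(1/\sigma)$. The fix I would try is to use $X^p\le X$ together with the bound $X\le$ (tiles crossed in the row)$\cdot\eta+\eta$ inside that row. This yields
\[
  \E{X\ind{v\in A_\eta}}\le \sum_j \frac{\eta}{\sigma}\bigl(k_j\eta+\eta\bigr),
\]
where $k_j$ is the number of tiles crossed in row $j$. Since $\sum_j k_j \le 2/\eta$, this sum is $O(\eta/\sigma)$. The symmetric argument over columns handles $Y$. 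Collecting the constants, the prefactors $2^{p-1}$ from convexity and the $2^p$ from the boundary term combine into $2(2^{p+2}+1)\eta/\sigma$.
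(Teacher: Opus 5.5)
Your proof is correct, but it aggregates the row and column errors differently from the paper.

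\textbf{Shared ingredient.} Both arguments rest on the same geometric fact. For $v\in A_\eta$ with $\vb$ in row $j$, the seller-price gap satisfies $X(\vb)\le \Delta_j+\eta$, where $\Delta_j$ is the horizontal variation of $\partial A$ inside that row. Monotonicity forces $\sum_j\Delta_j\le 1$, and the same holds for $Y$ over columns.

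\textbf{The paper's route.} The paper buckets rows and columns by the dyadic level $2^{-i-1}<M_r\le 2^{-i}$ of their maximal gap. It spends the whole variation budget separately at each level to get $|R_i|,|C_i|\le 2^{i+2}$ (Claim~\ref{cl:bound_cl_ro}). It then sums $\sum_i 2^{-ip}\,2^{i+2}$, a geometric series that converges only because $p\ge 2$. At $p=1$ this method loses a $\log(\nicefrac{1}{\eta})$ factor, which the paper acknowledges when bounding term (D) in Appendix~\ref{app:uniform_convergence}.

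\textbf{Your route.} You drop the power via $X^p\le X$ and spend the budget once, jointly over all rows, weighted by the row masses $\nicefrac{\eta}{\sigma}$. This gives $\sum_j \frac{\eta}{\sigma}(k_j\eta+\eta)\le \frac{3\eta}{\sigma}$. The argument is shorter, avoids the dyadic decomposition, and gives an $O(\nicefrac{\eta}{\sigma})$ bound for every $p\ge 1$. Your constants total $2^{p+1}+2^{p-1}\cdot 6=5\cdot 2^{p}$, which is below the stated $2(2^{p+2}+1)$.

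\textbf{Cosmetic points.} On $A\setminus A_\eta$ you have $|D|=|\prof(M,v)|\le 1$, because $\app{\eta}{M}$ does not trade there, so the factor $2^p$ is unnecessary. The layer-cake paragraph can be dropped. So can the tile counts $k_j$: use directly $X(\vb)\le p_M((j+1)\eta)-p_M(j\eta)+\eta$ on row $j$. The first terms telescope to at most $p_M(1)-p_M(0)\le 1$.
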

    \begin{proof}
        Consider a generic mechanism $M$ with allocation region $A_M$ and payments $p_M, q_M$, and its $\eta$-simple approximation $\app{\eta}{M}$, with allocation region $A_{\phi}$ and payments $p_{\phi}, q_{\phi}$. We can decompose the $p^{\text{th}}$ norm to the $p$-th power depending on whether the realized valuation falls within $A_{\phi}$ or in $A_M \setminus A_{\phi}$. In the latter case, the analysis is aligned with that of \Cref{lemma:restriction}, noting that the profit induced by $\app{\eta}{M}$ is zero outside its allocation region:
        \begin{equation}
        \label{eq:v_notin_Aphi}
            \E{\lvert\prof(M, v)-\prof(\app{\eta}{M}, v)\rvert^p\ind{v \in A_M \setminus A_{\phi}}} \le \P{v \in A_M \setminus A_{\phi}}\le \frac{2\eta}{\sigma},
\end{equation}
    where the last inequality follows by \Cref{cl:cardinality}.
         The argument for $A_{\phi}$ is more subtle, as the payments induced by the simple mechanism may be considerably different from those computed by $M$, as illustrated in \Cref{fig:error_buyer,fig:seller_error}. Since the profit is composed by the price paid by the buyer, corresponding to the vertical projection, and that paid to the seller, horizontal projection, we partition $[0,1]^2$ into columns $C$ and rows $R$. A column is simply a rectangle of the type $[k\eta, (k+1)\eta] \times [0,1]$, while a row has the form $[0,1] \times [j\eta, (j+1)\eta]$. We can associate to each column (respectively row), the maximum gap between the price paid by the buyer (respectively to the seller) in the two mechanisms:
        \(
            M_c = \sup_{v \in c\cap A_\phi} \left\{q_{\phi}(v) - q_M(v)\right\}\) and \(M_r = \sup_{v \in r\cap A_\phi} \left\{p_M(v) - p_{\phi}(v)\right\}.
        \)
        We can then partition the columns and rows according to their gap: $R_i = \{r \in R| 2^{-i-1} < M_r \le 2^{-i}\}$ and $C_i = \{c \in C| 2^{-i-1} < M_c \le  2^{-i}\}$, to get the following chain of inequalities:
        \begin{align*}
            &\E{\lvert\prof(M, v)-\prof(\app{\eta}{M}, v)\rvert^p\ind{v \in A_{\phi}}} \\
            &\le 2^{p-1}\E{\left(|q_{\phi}(v) - q_M(v)|^p + |p_M(v) - p_{\phi}(v)|^p\right)\ind{v \in A_{\phi}}} \tag{As $f(x)= |x|^p$ is convex for $p \ge 1$}\\
            &\leq 2^{p-1}\sum_{i=0}^{\infty}\E{\sum_{c \in C_i}|q_{\phi}(v) - q_M(v)|^p\ind{v \in A_{\phi} \cap c} + \sum_{r \in R_i}|p_M(v) - p_{\phi}(v)|^p\ind{v \in A_{\phi} \cap r}}\\
            &\leq 2^{p-1}\sum_{i=0}^{\infty}\E{\sum_{c \in C_i}M_c^p\ind{v \in A_{\phi} \cap c} + \sum_{r \in R_i}M_r^p\ind{v \in A_{\phi} \cap r}} \tag{By def. of $M_r$ and $M_c$}\\
            &\leq 2^{p-1}\sum_{i=0}^{\infty}2^{-ip}\left(\sum_{c \in C_i}\P{v \in A_{\phi} \cap c} + \sum_{r \in R_i}\P{v \in A_{\phi} \cap r} \right) \tag{By def. of $R_i$ and $C_i$}\\
            &\le 2^{p-1}\sum_{i=0}^{\infty}2^{-ip}\cdot \frac{\eta}{\sigma}\left(|C_i| + |R_i|\right)\le  \frac{\eta}{\sigma} 2^{p+2}\sum_{i=0}^{\infty}2^{(1-p)i} = \frac{\eta}{\sigma} \cdot \frac{2^{{p+2}}}{1-2^{1-p}} \le 2^{p+3}\cdot \frac{\eta}{\sigma} \tag{$\sigma$-smoothness and $p \ge 2$}
        \end{align*}
        \begin{figure}[t!]
          \centering
          \begin{subfigure}{0.315\linewidth}
            \centering
            \includegraphics[width=\linewidth]{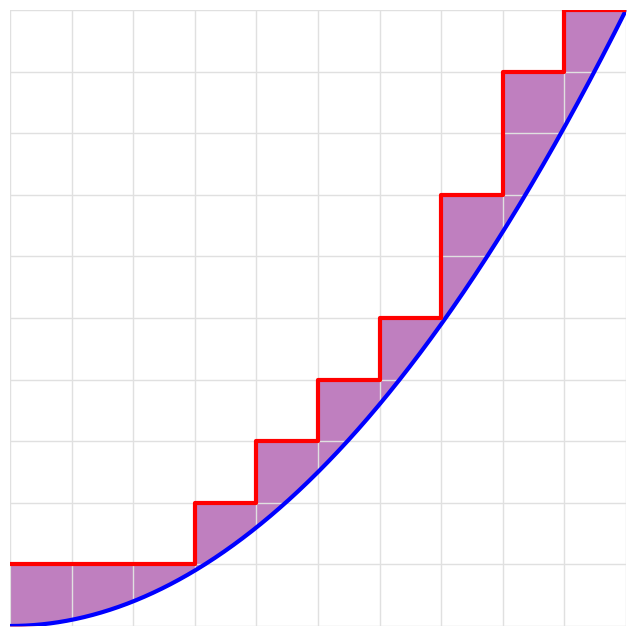}
            \caption{Allocation regions' difference.}\label{fig:approximation_difference}
          \end{subfigure}%
          \hspace*{6pt}
          \begin{subfigure}{0.315\linewidth}
            \centering
            \includegraphics[width=\linewidth]{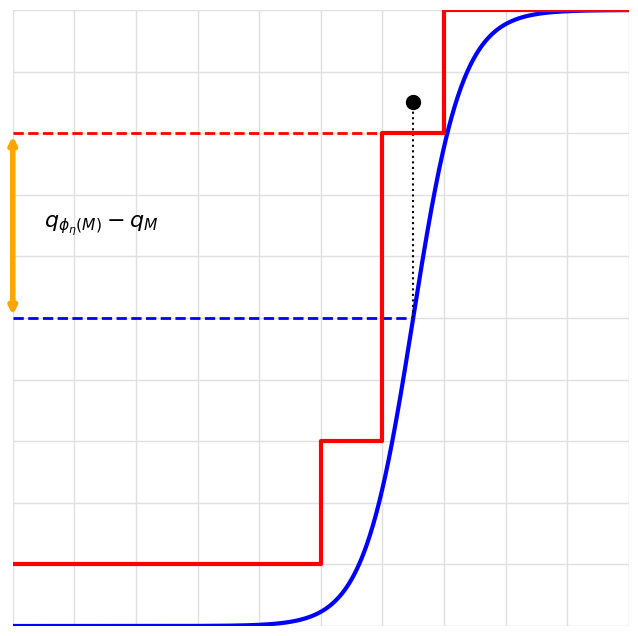}
            \caption{Large buyer's payment gap.}\label{fig:error_buyer}
          \end{subfigure}
          \hspace*{6pt}
          \begin{subfigure}{0.315\linewidth}
            \centering
            \includegraphics[width=\linewidth]{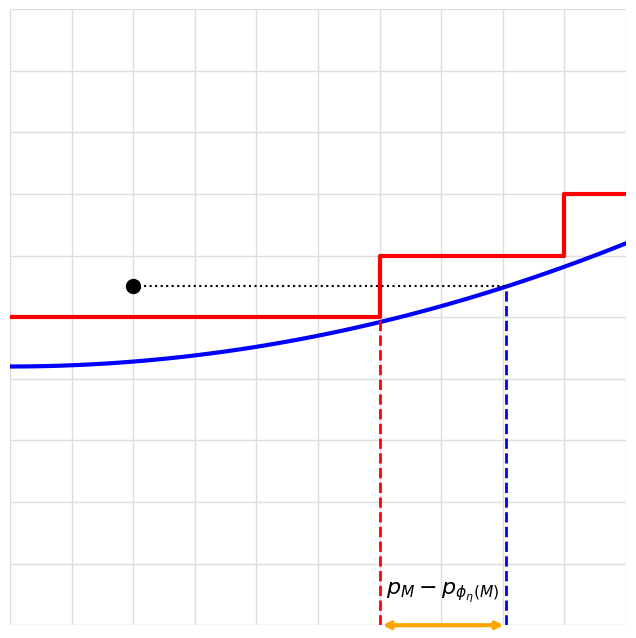}
            \caption{Large seller's payment gap.}\label{fig:seller_error}
          \end{subfigure}
        \caption{Supporting visualizations for \Cref{lemma:restriction,lem:lp_net}. Blue is used for the boundaries of generic mechanisms, while red is for the approximating mechanisms.}
        \label{fig:challenges_lemma3}
        \end{figure}
        
        Note, the core of our argument lies in the second to last inequality and crucially relies on the monotonicity of the allocation region: it is not possible to have more than $2^{i+2}$ rows/columns characterized by a gap in $(2^{-i-1}, 2^{-i}]$! For a formal proof, we defer to \Cref{cl:bound_cl_ro} in Appendix \ref{app:missing_net}.
    \end{proof}

    \subsection{Uniform Convergence}
    \label{subsec:unif_convergence}

        We now prove (fast) uniform convergence on a suitable family of simple mechanisms $\Mnet$. In particular, we consider precision $\eta = \nicefrac{\alpha \sigma}{8}$, so that $\Mnet = \M_{\nicefrac{\alpha \sigma}{8}}$. Given a generic mechanism $M$ and sample $S=\{v_1,v_2,\dots,v_n\}$, we denote with $\prof(M, S)$ the empirical performance of $M$ on $S$: $\prof(M,S) = \nicefrac{1}{n} \sum_{i} \prof(M,v_i).$
        \begin{theorem}
        \label{thm:high_prob_conv_profit}
        Fix any $\acc>0$, $\fail \in (0,1)$. Let $S$ be a $\sigma$-smooth i.i.d. sample of $n$ pairs of valuations, with $v$ drawn independently from the same distribution. Assuming $n\geq \frac{40^5\cdot \log^2(\nicefrac{50}{\alpha\sigma})}{\alpha^2\sigma} + \frac{128\log \nicefrac{2}{\beta}}{\acc^2}$, the following uniform convergence bound holds: 
        \begin{equation*}
            \Psub{S}{\max_{M \in \Mnet} \Bigl\lvert\prof(M,S) - \Esub{v}{\prof(M, v)}\Bigr\rvert \geq \frac{\alpha}4} \leq \frac{\fail}2
        \end{equation*}
        \end{theorem}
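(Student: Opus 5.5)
We will prove the bound by probabilistic chaining along the dyadic hierarchy of $\eta$-simple mechanisms, using \Cref{lem:lp_net} to control the increments. Set $\eta_0 = \nicefrac{\alpha\sigma}{8} = 2^{-H}$ and, for $h = 0,1,\dots,H$, let $\eta_h = 2^{-h}$. For any $M \in \Mnet$ we build the chain
\[
M = M_H,\quad M_{h} = \app{\eta_h}{M} \text{ for } h < H,
\]
where the coarsest element is $M_0 = \app{1}{M}$. Since the tiles in $\cT_{\eta_h}$ are unions of tiles in $\cT_{\eta_{h+1}}$, we have $\app{\eta_h}{\app{\eta_{h+1}}{M}} = \app{\eta_h}{M}$. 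Hence $M_h$ is determined by $M_{h+1}$, and the number of distinct increments $(M_{h+1},M_h)$ at level $h$ is at most $|\M_{\eta_{h+1}}| \le (2e)^{2^{h+1}}$ by \Cref{prop:card_bound}. We then telescope:
\[
\prof(M,S)-\E{\prof(M,v)} = \bigl(\prof(M_0,S)-\E{\prof(M_0,v)}\bigr) + \sum_{h=0}^{H-1} \bigl(X_h(S) - \E{X_h(v)}\bigr),
\]
where $X_h(\cdot) = \prof(M_{h+1},\cdot)-\prof(M_h,\cdot)$.

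\textbf{The coarse term.} There are only $O(1)$ coarse mechanisms $M_0$. Hoeffding's inequality gives deviation at most $\nicefrac{\alpha}{8}$ with probability $1-\nicefrac{\beta}{4}$ once $n \gtrsim \nicefrac{128\log(\nicefrac 2\beta)}{\alpha^2}$. This is where the second summand of the sample requirement comes from.

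\textbf{The increments.} Each $X_h$ is bounded by $2$ in absolute value. Apply \Cref{lem:lp_net} with precision $\eta_h$ to $M_{h+1}$, noting that $\app{\eta_h}{M_{h+1}} = M_h$; with $p=2$ this gives variance at most $C\eta_h/\sigma = C2^{-h}/\sigma$. Bernstein's inequality combined with a union bound over the at most $(2e)^{2^{h+1}}$ increments at level $h$ then shows that, with probability $1-\nicefrac{\beta}{4H}$, every level-$h$ increment deviates by at most
\[
t_h \approx \sqrt{\frac{2^{-h}}{\sigma}\cdot\frac{2^{h}+\log(\nicefrac{H}{\beta})}{n}} + \frac{2^{h}+\log(\nicefrac{H}{\beta})}{n}.
\]
The leading variance term is about $\sqrt{\nicefrac{1}{\sigma n}}$ at every level, essentially independent of $h$. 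Summing over the $H = \log_2(\nicefrac{8}{\alpha\sigma})$ levels gives about $H\sqrt{\nicefrac{1}{\sigma n}}$, which is at most $\nicefrac{\alpha}{16}$ once $n \gtrsim \nicefrac{H^2}{\alpha^2\sigma}$. This matches the $\nicefrac{\log^2(\nicefrac{1}{\alpha\sigma})}{\alpha^2\sigma}$ term in the hypothesis.

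\textbf{Main obstacle.} The hard part is the Bernstein range term $\nicefrac{2^h}{n}$ at the finest levels. At $h \approx H$ it is of order $\nicefrac{1}{\alpha\sigma n}$, which is still $\ll \alpha$ under the hypothesis since $\alpha\sigma n \gg \alpha^{-1}$; summed geometrically it is at most $\nicefrac{2^{H+1}}{n}$, so it remains fine. I would also try the sharper moment bound from \Cref{lem:lp_net} with general $p$, through a Rosenthal-type inequality, if the constants get tight, but I expect $p=2$ with Bernstein to suffice. Choosing the per-level deviation budgets so that they sum to $\nicefrac{\alpha}{8}$ and the failure probabilities sum to $\nicefrac{\beta}{2}$ then completes the proof, with the generous constant $40^5$ absorbing the slack.
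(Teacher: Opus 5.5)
You have the chaining skeleton right. The nesting $\app{\eta_h}{\app{\eta_{h+1}}{M}}=\app{\eta_h}{M}$, the level-$h$ count $(2e)^{2^{h+1}}$, and the increment variance $34\cdot 2^{-h}/\sigma$ from \Cref{lem:lp_net} with $p=2$ are all correct. The entropy part of your Bernstein deviations sums to $O(H/\sqrt{\sigma n})$, which the first term of the hypothesis absorbs.

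The gap is the dependence on $\beta$, not the range term you flag. You union-bound each level separately at confidence $\beta/(4H)$ and then add the deviations. So level $h$ pays an extra $\sqrt{2V_h\log(8H/\beta)/n}$, with $V_h\approx 34\cdot 2^{-h}/\sigma$. Summed over $h$, this is of order $\sqrt{\log(H/\beta)/(\sigma n)}$, or $\log(1/\sigma)\sqrt{\log(H/\beta)/n}$ if you also use the trivial cap $V_h\le 4$. Your argument therefore needs $n=\Omega\bigl(\log(H/\beta)/(\sigma\alpha^2)\bigr)$. The hypothesis only provides $128\log(2/\beta)/\alpha^2$, with no $1/\sigma$ and no logarithmic prefactor. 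Take $\sigma$ small and $\log(1/\beta)$ large enough that this second term dominates. Then the stated $n$ falls short of what your bound needs by a factor of order $1/\sigma$, or $\log^2(1/\sigma)$ with the cap.

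Re-splitting the failure budget cannot fix this within your scheme. Any $\beta_h\le\beta/2$ costs at least $\sqrt{2V_h\log(2/\beta)/n}$ at level $h$. So the coefficient of $\sqrt{\log(1/\beta)/n}$ is $\sum_h\sqrt{2V_h}$, not a constant.

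You also attribute the $128\log(2/\beta)/\alpha^2$ term to the wrong place. The coarse level has zero deviation: $\app{1}{M}$ either allocates on all of $[0,1]^2$ (profit $\equiv-1$) or only at $(0,1)$ (profit $0$ almost surely). It needs no Hoeffding.

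The missing idea is to separate the expectation from the concentration. Let $U(S)=\max_{M\in\Mnet}\lvert\prof(M,S)-\Esub{v}{\prof(M,v)}\rvert$. The paper first bounds $\E{U(S)}$ by chaining, with no confidence parameter at all:
\begin{itemize}
\item symmetrization to a Gaussian process;
\item a clean event on the empirical second moments of the increments, obtained from Bernstein and \Cref{lem:lp_net} with $p=2,4$;
\item the Gaussian maximal inequality at each level.
\end{itemize}
This gives $\E{U(S)}\le 145\log(n/\sigma)/\sqrt{\sigma n}$, which the first term of the hypothesis makes at most $\alpha/8$. The paper then notes that $U$ has bounded differences $2/n$. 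McDiarmid therefore yields $U(S)\le\E{U(S)}+\sqrt{2\log(2/\beta)/n}$ with probability at least $1-\beta/2$, and this is exactly where $128\log(2/\beta)/\alpha^2$ comes from.

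Your level-wise Bernstein bounds could legitimately replace the symmetrization/Gaussian step, since a Bernstein maximal inequality at each level gives a comparable expectation bound. You would still need one of two things to reach the stated hypothesis:
\begin{itemize}
\item the bounded-differences (McDiarmid) step; or
\item a Talagrand-style tail argument that jumps directly from $\app{1}{M}$ to a level $h_0$ with $2^{h_0}\approx\log(1/\beta)$.
\end{itemize}
As written, your proof yields the conclusion only under a requirement of the form $n=\Omega\bigl((\log^2(\nicefrac{1}{\alpha\sigma})+\log(\nicefrac{H}{\beta}))/(\alpha^2\sigma)\bigr)$.
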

        \begin{proof}
            We slightly abuse notation and denote with $\mathcal M_h$ (shorthand of $\M_{2^{-h}}$) the class of $2^{-h}$-simple mechanisms, so that $\app{h}{\cdot}$ is the generic approximating $2^{-h}$-simple mechanism. The \textit{chain} we use to prove the strong uniform convergence bound above is exactly $\{\phi_h(M)\}_{h=0}^H$, for $H=\log_2 \nicefrac{8}{\alpha \sigma}$.

            As a first step, we relate the uniform convergence error in $\Mnet$ to the maximum over a suitable vector of Gaussian variables. As the underlying procedure (symmetrization, then Rademacher Complexity, then Gaussian complexity, and finally a simple telescopic argument) is standard, we defer the proof to Appendix~\ref{app:missing_uniform}.
            \begin{restatable}{claim}{telescope}
            \label{cl:unif_to_telescope}
                Let $\boldsymbol{g} = \{g_i\}_{i=1}^n$ be a vector of independent standard Gaussian variables, then: 
            \begin{align*}
    &\Esub{S}{\max_{M \in \Mnet} \Bigl\lvert\prof(M,S) - \Esub{v}{\prof(M, v)}\Bigr\rvert} \\
    & \leq \sqrt{2\pi}\left(\sum_{h=0}^{H-1}\Esub{S,\textbf{g}}{\max_{M \in \Mnet} \left\lvert \frac{1}{n}\sum_{i=1}^n(\prof(\phi_{h+1}(M),v_i) - \prof(\phi_h(M),v_i)) \cdot g_i\right\rvert}+ \sqrt{\frac{2}{\pi n}}\right)
    \end{align*}
    \end{restatable}

    Given this result, we have to control the expected maximum appearing inside the term above, across all levels of approximation. To do this, we define a target clean event $\clean$, enforcing that, for each level $h \in \{0, \dots, H-1\}$ and for every mechanism $M\in\hat\M$, the following guarantee is satisfied: 
    \begin{equation*}
    \sum_{i=1}^n\left(\prof(\phi_{h+1}(M), v_i)-\prof(\phi_{h}(M), v_i)\right)^2 < 35n\cdot 2^{-h} \cdot \nicefrac{1}{\sigma}
    \end{equation*}

\begin{claim}
\label{cl:clean_event_control} As long as $n\geq \frac{35^3}{\sigma\alpha^2}$, then $\P{\clean} \ge 1-\acc$.
\end{claim}
\begin{proof}[Proof of \Cref{cl:clean_event_control}]
    The bound on the clean event follows by a suitable application of Bernstein’s inequality (see \Cref{prop:bernstein} in Appendix~\ref{app:concentrations} for the exact version of the bound we use). For any sample index $i$, mechanism $M$ and level $h$, we introduce the  random variable $Z_i$ as follows
    \[
        Z_i = (\prof(\phi_{h+1}(M), v_i)-\prof(\phi_h(M), v_i))^2,
        \]
        where, for brevity, we omit the dependence on $M$ and $h$. The expected value of $Z_i$ can be bounded using \Cref{lem:lp_net}, with $p=2$; note, we are crucially relying on the simple observation that $\app{h}{\app{h+1}{M}} = \app{h}{M}$, since each tile in $\cT_{h+1}$ is contained in a bigger one in $\cT_{h}$: $\E{Z_i} \le 34\cdot \nicefrac{2^{-h}}{\sigma}.
        $
        Similarly, we can bound the second moment of $Z_i$ (and thus its variance) by applying \Cref{lem:lp_net} for $p=4$: $\Var{Z_i} \le \E{Z_i^2} \le 130 \cdot \nicefrac{2^{-h}}{\sigma}.$ As a last ingredient to apply Bernstein, we observe that $Z_i - \E{Z_i} \le 4$, as the profit function is upper bounded by $1$ in absolute value. We have: 
    \begin{align*}
    \P{\sum_{i=1}^n Z_i \geq 35n \cdot 2^{-h}\cdot \nicefrac{1}{\sigma}} &\leq \exp\left(-\frac{(35n\cdot 2^{-h}\cdot \nicefrac{1}{\sigma}-\sum_{i=1}^n\E{Z_i})^2}{2 \left(130 n\cdot 2^{-h}\cdot\nicefrac{1}{\sigma}\right)+ \nicefrac{8}{3}\cdot (35n\cdot 2^{-h}\cdot \nicefrac{1}{\sigma}-\sum_{i=1}^n\E{Z_i})}\right) \\ 
    &\leq \exp\left(-\frac{(n\cdot 2^{-h}\cdot \nicefrac{1}{\sigma})^2}{2 \left(130 n\cdot 2^{-h}\cdot\nicefrac{1}{\sigma}\right)+ \nicefrac{8}{3}\cdot (n\cdot 2^{-h}\cdot \nicefrac{1}{\sigma})}\right)\\
    &= \exp\left(-\frac{3n\cdot 2^{-h}\cdot \nicefrac{1}{\sigma}}{788}\right) \le \exp\left(-\frac{3n\cdot \alpha}{6304}\right) \tag{As $2^{-h} \ge \nicefrac{\alpha \sigma}{8}$}
    \end{align*}
    Note, in the second inequality we use that $f(x) = \nicefrac{x^2}{a+bx}$ is monotonically increasing for $x>0$ and positive $a$ and $b$, and we know that $\E{Z_i} \le 34\cdot \nicefrac{2^{-h}}{\sigma}.$ At this point, we union bound over all the levels $h\in \{0, \dots, H-1\}$ and all the mechanisms in $\Mnet$. From \Cref{prop:card_bound}, we know that $\lvert \Mnet \rvert \leq (2e)^{\nicefrac{8}{\acc\sigma}}$. We thus need to satisfy the following inequality, which is satisfied with our lower bound on $n$: 
    $$
    (2e)^{\nicefrac{8}{\acc\sigma}}\cdot \log_2(\nicefrac{8}{\acc\sigma}) \cdot \exp\left(-\frac{3n\cdot \alpha}{6304}\right)\leq \acc.\qedhere
    $$ 
    \end{proof}
    We can go back to studying the right-hand-side term appearing in \Cref{cl:unif_to_telescope}, conditioning on the clean event $\cE$, whose probability we have assessed in \Cref{cl:clean_event_control}.
    
    \begin{claim}[Bounding the links in the chain]
    \label{cl:telescop_conc} The following inequality holds:
    \begin{equation*}
        \Esub{S, \textbf{g}}{\max_{M \in \Mnet} \Bigl\lvert \frac{1}{n}\sum_{i=1}^n(\prof(\phi_{h+1}(M),v_i) - \prof(\phi_h(M),v_i)) \cdot g_i\Bigr\rvert\,\Big|\, \clean} \leq \frac{22}{\sqrt{{\sigma n}}}
    \end{equation*}
    \end{claim}
    \begin{proof}[Proof of \Cref{cl:telescop_conc}]
        Since we are conditioning on $\clean$, we know that the following inequality holds at all levels $h\in \{0, \dots, H-1\}$ and for all mechanisms $M\in \Mnet$:
        \begin{equation}\label{eq:clean_var}
            \frac{1}{n}\sum_{i=1}^n\left(\prof(\phi_{h+1}(M), v_i)-\prof(\phi_{h}(M), v_i)\right)^2 < 35\cdot 2^{-h} \cdot \nicefrac{1}{\sigma}
        \end{equation}
        For any fixed realization of the sample $S$, we observe that the random variables of the form 
        \[
        \frac{1}{n}\sum_{i=1}^n(\prof(\phi_{h+1}(M),v_i) - \prof(\phi_h(M),v_i)) \cdot g_i
        \]
        follow a Gaussian distribution. Conditioning on the clean event and exploiting \Cref{eq:clean_var}:
        \begin{equation}
        \label{eq:clean_var_bound}
            \mathbb{V}\text{ar}_{\boldsymbol{g}}\left(\frac{1}{n}\sum_{i=1}^n(\prof(\phi_{h+1}(M), v_i)-\prof(\phi_{h}(M), v_i)) \cdot g_i \right) \leq \nicefrac{35}{n}\cdot 2^{-h}\cdot \nicefrac{1}{\sigma}.
        \end{equation}

        The bound on the variance allows us to apply a folklore result for Gaussian random variables, namely that for any finite set $G$ of arbitrary Gaussian variables it holds that $\E{\max_{g \in G}|g|} \le 2\sqrt{B\log|G|} $, where $B$ is a uniform bound on the variances (this result is stated precisely as \Cref{prop:max} in Appendix~\ref{app:concentrations}). To apply such inequality, we then need to bound the number of Gaussian variables that appear in the $\max$. At each level, and for each mechanism $M$, we are considering only the approximations $\phi_{h+1}(M)$ and $\phi_h(M)$ so that   $|G| \le |\mathcal M_{h+1} \times \mathcal M_{h}|$. By the bound in \Cref{prop:card_bound}: 
        \[
        |G| \leq \lvert \mathcal M_{h+1}\rvert \cdot \lvert \mathcal M_{h}\rvert \leq \lvert \mathcal M_{h+1}\rvert^2 \leq (2e)^{2 \cdot 2^{h}}.
        \]
        Plugging the above and \Cref{eq:clean_var_bound} into the folklore bound reported in \Cref{prop:max} we get:
        \begin{align*}
        &\Esub{S, \textbf{g}}{\max_{M \in \Mnet } \Bigl\lvert \frac{1}{n}\sum_{i=1}^n(\prof(\phi_{h+1}(M),v_i) - \prof(\phi_h(M),v_i)) \cdot g_i\Bigr\rvert\,\Big|\, \clean}\\
        &\leq 2\sqrt{\nicefrac{35}{n}\cdot 2^{-h}\cdot \nicefrac{1}{\sigma}\cdot 2\cdot 2^h \cdot \log(2e)} \leq 22\sqrt{\nicefrac{1}{\sigma n}}\qedhere
        \end{align*}
    \end{proof}
    Let's now take a step back: in \Cref{cl:unif_to_telescope} we have reduced the problem of controlling the estimation error to handling the sum of the $\max$ of certain random processes. Then, we have introduced a target clean event $\clean$ and argued that (i) the clean event has large probability (\Cref{cl:clean_event_control}) and (ii) conditioning on $\clean$ then the sum appearing in \Cref{cl:unif_to_telescope} is ``easy'' to bound (\Cref{cl:telescop_conc}). Combining these two results via the law of total probability we get the following bound on the \emph{expected} estimation error, its simple proof is deferred to Appendix~\ref{app:missing_uniform}.
    \begin{restatable}{claim}{unifconv}
    \label{cl:unif_conv}
        If $n \geq \frac{35^3}{\sigma\alpha^2}$, then the following uniform convergence bound holds:
    $$
    \Esub{S}{\max_{M \in \hat \M} \Bigl\lvert\prof(M,S) - \Esub{v}{\prof(M, v)}\Bigr\rvert} \leq  \log \left(\frac n{\sigma}\right)\frac{145}{\sqrt{n\sigma}}
    $$
    \end{restatable}

    To conclude the proof of the theorem, we need to get a bound on the uniform convergence error that holds with \emph{high probability}. We achieve this by combining \Cref{cl:unif_conv} with McDiarmid's inequality (we report the exact statement as \Cref{prop:bounded_difference} in Appendix~\ref{app:concentrations}). To this end, we only need to argue that the uniform convergence error $U$ (as a map on the samples $S$) satisfies the \textit{bounded difference} condition. Formally, the function $U$ defined as 
    \(
        U(S) = \max_{M \in \Mnet} \Bigl\lvert\prof(M,S) - \Esub{v}{\prof(M, v)}\Bigr\rvert
    \)
    should respect the property that for any neighboring dataset $S$ and $S'$ (i.e., which differ by a single point), it holds that $|U(S) - U(S')| \le \nicefrac{2}{n}$. To see why this is the case, observe that the expectation $\Esub{v}{\prof(M,v)}$ does not depend on $S$ or $S'$, so that we can thus focus only on the first term inside the absolute value. Considering the vector $\{\prof(M, S)\}_{M\in \Mnet}$, any of its entries may vary by at most $\nicefrac{2}{n}$, since $\prof$ is bounded in absolute value by $1$ and we are taking the average over $n$ valuations. By McDiarmid's inequality, we thus have that, for any positive $t$, the following inequality holds:
    \(
        \P{U(S) \ge \E{U(S)} + t} \le \exp\left(- \frac{nt^2}{2}\right).
    \)
    
    By \Cref{cl:unif_conv}, valid due to our lower bound on $n$, we know that $\E{U(S)} \le \log \left(\frac n{\sigma}\right)\frac{145}{\sqrt{n\sigma}}$, so that:
    \begin{equation}
    \label{eq:inequality_bounded_diff}
     \P{U(S) \geq \nicefrac{\acc}{4}} \leq \P{U(S) - \E{U(S)} \geq \sqrt{\nicefrac{2}{n}\log \nicefrac{2}{\fail}}} \leq \nicefrac{\fail}2,
     \end{equation}
     where the first inequality follows by our assumed lower bound on $n$ and the second follows from McDiarmid's inequality by solving $\exp(\nicefrac{-nt^2}{2}) = \frac{\fail}2$ for $t$.
    \end{proof}
    
    \subsection{An $(\acc, \beta)$-optimal and computationally efficient $\eps$-private algorithm}
    \label{subsec:wrapping}
    
        We have all the ingredients to invoke \Cref{thm:meta}, using as $\Mnet$ the family of $\eta$-simple mechanisms, for $\eta=\nicefrac{\acc\sigma}{8}$. Property (i) follows by \Cref{lemma:restriction}, while (ii) is due to the uniform convergence result in \Cref{thm:high_prob_conv_profit}. Finally, $|\Mnet|$ is bounded in \Cref{prop:card_bound}. All in all, the exponential mechanism on $\Mnet$ which uses as score function the empirical profit verifies the following theorem. 

    \begin{theorem}
    \label{thm:main_profit}
        Fix any $\acc > 0, \beta \in (0, 1)$ and $\eps>0$. Let $S$ be a $\sigma$-smooth i.i.d. sample of size $n$. There exists a polynomial-time randomized learning algorithm which is $(\acc, \beta)$-optimal and $\eps$-differentially private for $\prof$ maximization, provided 
        \[
            n\geq \frac{40^5\cdot \log^2(\nicefrac{50}{\alpha\sigma})}{\alpha^2\sigma} + \frac{128\log \nicefrac{2}{\beta}}{\acc^2} + \frac{256\log(\nicefrac{2}{\beta})}{\acc^2\sigma\eps}\in \Theta\left(\frac{1}{\alpha^2 \sigma}\left[\log^2(\nicefrac{1}{\alpha\sigma})+ \frac{\log \nicefrac{1}{\beta}}{\eps}\right]\right)
        \]
    \end{theorem}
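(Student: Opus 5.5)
The statement has two parts: $(\acc,\beta)$-optimality with $\eps$-privacy, and polynomial running time. For the first, we apply \Cref{thm:meta} with $\Mnet=\M_\eta$, $\eta=\nicefrac{\acc\sigma}{8}$ (rounded to a power of $2$), and score $f(\cdot,S)=\prof(\cdot,S)$, whose sensitivity is at most $\nicefrac2n$.

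\textbf{Optimality and privacy.} Property (i) of \Cref{thm:meta} follows from \Cref{lemma:restriction}, since the discretization error is at most $\nicefrac{2\eta}{\sigma}=\nicefrac{\acc}{4}$. Property (ii) is exactly \Cref{thm:high_prob_conv_profit}, which supplies
\[
n_0=\frac{40^5\log^2(\nicefrac{50}{\acc\sigma})}{\acc^2\sigma}+\frac{128\log\nicefrac2\beta}{\acc^2}.
\]
The extra sample requirement in \Cref{thm:meta} is $\frac{16\log(\nicefrac{2|\Mnet|}{\beta})}{\acc\eps}$. Using \Cref{prop:card_bound}, $\log|\Mnet|\le \frac{8\log(2e)}{\acc\sigma}$, so this term is at most $\frac{16\log\nicefrac2\beta}{\acc\eps}+\frac{128\log(2e)}{\acc^2\sigma\eps}$. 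Since $\acc,\sigma\le1$ and $\beta<1$ gives $\log\nicefrac2\beta\ge\log 2$, both pieces are absorbed into $\frac{256\log(\nicefrac2\beta)}{\acc^2\sigma\eps}$; only constants need checking here. The $\Theta(\cdot)$ simplification is then immediate.

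\textbf{Efficient sampling.} This is the main obstacle, because $|\Mnet|$ is exponential in $\nicefrac1\eta$. The plan is to identify each monotone union of tiles with a monotone lattice path on the grid $V_\eta$. The path starts at the top-left side of the square, moves only right or down along grid edges, and ends at the bottom-right side; its boundary separates the allocated tiles from the rest. Boundary conventions, including the degenerate single-point region $(0,1)$, must be handled so that this correspondence is a bijection.

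\textbf{Decomposing the score along edges.} The key claim is that the empirical profit splits as a sum over the path's edges. Take a point $v=(\vs,\vb)$ in the allocation region. Its buyer payment $q$ is the height of the horizontal boundary edge lying in the column of $v$. Its seller payment $p$ is the abscissa of the vertical boundary edge lying in the row of $v$. Accordingly, we assign each horizontal edge $e$ at height $y$ over column $c$ the weight
\[
w(e)=\frac1n\sum_{v_i\in c,\ \vb_i\ge y} y,
\]
and each vertical edge at abscissa $x$ along row $r$ the weight
\[
w(e)=-\frac1n\sum_{v_i\in r,\ \vs_i\le x} x.
\]
Points lying on tile borders have measure zero; we break ties by a fixed convention. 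With these weights, $\prof(M,S)=\sum_{e\in\text{path}(M)}w(e)$, so that
\[
\exp\Bigl(\tfrac{\eps n}{4}\prof(M,S)\Bigr)=\prod_{e\in\text{path}(M)}\exp\Bigl(\tfrac{\eps n}{4}w(e)\Bigr),
\]
where $\frac{\eps n}{4}=\frac{\eps}{2\Delta_f}$ with $\Delta_f=\nicefrac2n$.

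\textbf{Sampling via path weights.} Following \citet{TakimotoW03}, we compute by backward dynamic programming over the $O(\eta^{-2})$ grid nodes the quantity $Z(u)$, defined as the sum over all completions of the path from node $u$ of the product of their edge weights. We then sample the path edge by edge: from node $u$, we move along edge $e=(u,u')$ with probability $\frac{\exp(\frac{\eps n}{4}w(e))\,Z(u')}{Z(u)}$. A telescoping product shows that each path is output with probability exactly proportional to its exponential weight, so this procedure reproduces the exponential mechanism exactly and privacy is preserved. Computing the weights takes $O(n\eta^{-2})$ time and the dynamic program takes $O(\eta^{-2})$ arithmetic operations. To avoid overflow, numerical care such as working with log-weights suffices. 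The whole algorithm is therefore polynomial in $n$, $\nicefrac1\acc$ and $\nicefrac1\sigma$.
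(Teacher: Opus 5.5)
Your route is the paper's. You apply \Cref{thm:meta} to $\Mnet=\M_{\alpha\sigma/8}$, with (i) from \Cref{lemma:restriction} and (ii) from \Cref{thm:high_prob_conv_profit}. You realize the exponential mechanism by a Takimoto--Warmuth random walk whose edge weights are exactly those of \Cref{def:weight_edges} and whose $Z(u)$ are the node weights of \Cref{def:weight_nodes}; your telescoping product is a tidier version of the chain-rule argument for \Cref{theor:sampling}.

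One step fails as written: the constant absorption. With $\log|\Mnet|\le \frac{8\log(2e)}{\alpha\sigma}$ from \Cref{prop:card_bound}, you need $16\alpha\sigma\log(\nicefrac{2}{\beta})+128\log(2e)\le 256\log(\nicefrac{2}{\beta})$. But as $\beta\to1$ the left side is about $128\cdot 1.69\approx 217$, while the right side tends to $256\log 2\approx 177$. The $n_0$ part of the bound cannot make up the difference, since it carries no $\nicefrac{1}{\eps}$ factor. So for small $\eps$ and $\beta$ near $1$, the requirement $n\ge\frac{16\log(2|\Mnet|/\beta)}{\alpha\eps}$ of \Cref{thm:meta} is not implied; even $|\Mnet|\le 4^{1/\eta}$ falls just short.

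The repair is to use the exact count $|\M_\eta|=\binom{2m}{m}$, with $m=\nicefrac{1}{\eta}=\nicefrac{8}{\alpha\sigma}$, together with $\binom{2m}{m}\le 4^m/\sqrt{\pi m}$. The needed inequality is $(2m-1)\log(\nicefrac{2}{\beta})\ge\log\binom{2m}{m}$. Since $\log(\nicefrac{2}{\beta})\ge\log 2$ and $\log\binom{2m}{m}\le 2m\log 2-\tfrac12\log(\pi m)$, it suffices that $\pi m\ge 4$, which holds because $m\ge 8$ for $\alpha\le 1$. The paper itself only says that $|\Mnet|$ is bounded by \Cref{prop:card_bound}, so this looseness is shared, but your explicit claim is false as stated.

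Two smaller points:
\begin{itemize}
\item The explicit constants, including those of \Cref{thm:high_prob_conv_profit}, presuppose that $\nicefrac{\alpha\sigma}{8}$ is itself a power of $2$. Rounding $\eta$ down, as you propose, can nearly double $\log|\Mnet|$ and breaks the constant $256$ again. You should either assume this or accept a worse constant.
\item In the paper's coordinates, with $\vb$ increasing upward, the boundary of a monotone region is an increasing staircase. The paths therefore run right/up from $(0,0)$ to $(1,1)$, as in \Cref{def:path_grid}. Your right/down path from top-left to bottom-right is the same object only if the buyer axis is drawn pointing down. Pinning the endpoints at these two corners, with the path allowed to run along the sides of the square, also settles the bijection and the degenerate region you left open.
\end{itemize}
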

    The only part of the Theorem that we still need to prove concerns how to run the exponential mechanism on the exponentially large $\Mnet$. We present here a computationally efficient sampling routine, deferring the proofs to Appendix~\ref{subsec:missing35}. Before proceeding, recall the definition of the directed graph $G_\eta = (V_\eta, E_\eta)$ as in \Cref{subsec:restriction_profit}: $V_\eta$ is given by the points in the uniform grid of step size $\eta$, while the edges $E_\eta$ are the natural ones from the grid, oriented either from left to right or bottom-up. 
    \begin{definition}
    \label{def:path_grid}
    A path $\pi$ on $G_\eta$ is complete if it starts in $(0, 0)$ and ends in $(1, 1)$. A complete path $\pi$ identifies a mechanism $M_\pi \!\in\! \M_\eta$ of allocation region
    \(
                A_{\pi} \!=\! \{y \in [0,1]^2\!: \!\exists x \in \pi \text{ \normalfont{s.t.} } x_1 \geq y_1, x_2\leq y_2\}. 
    \)
    \end{definition}
    
    From now on, we thus consider only the set of complete paths
    in $G_\eta$, which we denote as $\pathspace$. There is a natural bijection between mechanisms in $\M_{\eta}$ and paths $\pathspace$, so if we consider the exponential mechanism over $\M_{\eta}$, we can equivalently consider a suitable distribution $\mu$ over $\pathspace$. 
    \begin{restatable}{theorem}{sampling}
    \label{theor:sampling}
    Let $S$ be a sample of $n$ valuations, and let $\mu$ be the distribution on $\pathspace$: 
    \begin{equation}
    \label{eq:mu_def}
    \mu(\pi) = \frac{\exp(n\cdot \nicefrac{\eps}{4} \cdot \prof(M_\pi, S))}{\sum_{\pi'\in \pathspace}\exp(n \cdot \nicefrac{\eps}{4} \cdot \prof(M_{\pi'}, S))}\quad \forall \, \pi\in \pathspace
    \end{equation}
    There exist edge probabilities $\{q_e\}_{e\in E_\eta}$, computable in time $O(\nicefrac{n}{\eta^2})$, such that:
    
    \begin{enumerate}[(i)]
        \item $q_e\in[0, 1]\ \forall\, e\in E_\eta$.
        \item $\sum_{e\in \neighedge{u}}q_e = 1\ \forall \, u\in V_\eta$, where $\neighedge{u}$ is the set of outgoing edges from node $u$ in $E_\eta$.
        \item $\prod_{e\in \pi}q_e = \mu(\pi)\ \forall\, \pi\in \pathspace$, almost surely with respect to the sampling of $S$.
    \end{enumerate}
    \end{restatable}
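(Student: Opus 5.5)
The plan is to show that the exponent $n\cdot\nicefrac{\eps}{4}\cdot\prof(M_\pi,S)$ splits additively over the edges of $\pi$, and then to use the standard weight-pushing (backward dynamic programming) construction of \citet{TakimotoW03}.

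\textbf{Additive decomposition of the profit.} Fix a complete path $\pi$. For each sample $v_i=(v_{s,i},v_{b,i})$ in the interior of $A_\pi$, the Myerson payments of \Cref{def:payments} are determined by the boundary $\pi$. The buyer pays $q$, the height of the horizontal edge of $\pi$ in the column containing $v_{s,i}$. The seller receives $p$, the abscissa of the vertical edge of $\pi$ in the row containing $v_{b,i}$. Moreover, $v_i\in A_\pi$ exactly when the horizontal edge of $\pi$ in its column lies at height at most $v_{b,i}$, or equivalently when the vertical edge of $\pi$ in its row lies at abscissa at least $v_{s,i}$. I therefore define, for a horizontal edge $e$ at height $y$ spanning column $c$,
\[
w_e=\sum_{i:\,v_{s,i}\in c,\ v_{b,i}\ge y} y ,
\]
and, for a vertical edge $e$ at abscissa $x$ spanning row $r$,
\[
w_e=-\sum_{i:\,v_{b,i}\in r,\ v_{s,i}\le x} x .
\]
A monotone path contains exactly one horizontal edge per column and one vertical edge per row, so each sample is charged its buyer payment once and its seller payment once. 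This gives $n\,\prof(M_\pi,S)=\sum_{e\in\pi}w_e$.

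\textbf{Boundary cases.} The identity can fail only for samples on grid lines, where tile boundaries and ties make the column or row ambiguous. By $\sigma$-smoothness such samples occur with probability zero, which accounts for the ``almost surely'' in (iii). The degenerate allocation region consisting of the single point $(0,1)$ (footnote to \Cref{def:approximations_prof}) also has to be checked against its path $\pi$ along the bottom and right sides. The only delicate point there, the measure-zero set $\{(0,1)\}$, is again handled by smoothness. Computing all $w_e$ takes $O(n)$ work per edge, hence $O(\nicefrac{n}{\eta^2})$ in total, since $|E_\eta|=O(\nicefrac{1}{\eta^2})$.

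\textbf{Weight pushing.} With $\beta_e=\exp(\nicefrac{\eps}{4}\, w_e)$ we have $\mu(\pi)\propto\prod_{e\in\pi}\beta_e$. I define the partition function $Z(u)=\sum_{\pi:u\to(1,1)}\prod_{e\in\pi}\beta_e$, with $Z((1,1))=1$. It satisfies the recursion $Z(u)=\sum_{e=(u,u')\in\neighedge{u}}\beta_e Z(u')$, which I evaluate backwards in $O(|E_\eta|)$ time. I then set $q_e=\nicefrac{\beta_e Z(u')}{Z(u)}$ for $e=(u,u')$.
\begin{itemize}
\item Property (i) holds because all quantities are positive and $q_e$ is one summand of $Z(u)$ divided by $Z(u)$.
\item Property (ii) is exactly the recursion for $Z(u)$.
\item Property (iii) follows by telescoping: $\prod_{e\in\pi}q_e=\nicefrac{\prod_{e\in\pi}\beta_e}{Z((0,0))}$, and $Z((0,0))$ is the normalizer in \eqref{eq:mu_def}.
\end{itemize}
Nodes on the top or right border have a single outgoing edge, so every walk from $(0,0)$ ends in $(1,1)$.

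\textbf{Main obstacle.} The recursion itself is routine; the real work is the decomposition step. The profit must be decomposed across the edges of the path and not across tiles, so that each sample's membership in $A_\pi$ and both of its payments are decided by a single local edge. The inequalities in the definitions of the $w_e$ must be matched carefully to the closed-region convention and to the Myerson projections. Numerically, the $Z(u)$ can be computed in log-space.
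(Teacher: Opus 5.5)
Your proposal is correct and is essentially the paper's proof: your $\beta_e$ coincide with the paper's edge weights over the same interest regions $A_e$, your partition function $Z(u)$ is the paper's node weight $w_u$ (\Cref{lem:sum_future}), and $q_e=\beta_e Z(u')/Z(u)$ is exactly \Cref{eq:prob_definition}; the only cosmetic difference is that you obtain (iii) by telescoping, whereas the paper checks it edge by edge via the chain rule of conditional probabilities. One minor slip in a side remark: the path along the bottom and right sides induces the whole square (the always-trade mechanism, profit $-1$), while the no-trade region corresponds to the path along the left and top sides---this is harmless, since your column/row argument already covers every complete path.
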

    The theorem implies we can \emph{simulate} the distribution $\mu$ supported over paths with a random walk on $V_\eta$, starting at $(0, 0)$ and ending in $(1,1)$, which builds the sampled path ``one edge at a time''. Note, the number of weights to keep in memory is then polynomial (as $|E_{\eta}| \in O(\nicefrac{1}{\eta^2})$). 
    
    We conclude the section by explaining how to compute these $q_e$, in polynomial time. As a first step, we introduce a notion of edge-weights $w_e$, in such a way that the profit of a generic mechanism on a sample set can be decomposed as the sum of the logarithm of the weights of its edges: $\prof(M_{\pi},S) \propto \sum_{e\in \pi} \log w_e$.
    \begin{definition}[Weighting the edges in $G_\eta$] 
    \label{def:weight_edges}
    Consider an edge $e$ in $G_\eta$. We define its interest region $A_e$:
    \begin{itemize}
        \item $A_e = [0, {k}{\eta}]\times [{j}{\eta}, {(j+1)}{\eta}]$ if $e$ is a vertical edge connecting $({k}{\eta}, {j}{\eta})$ to $({k}{\eta}, {(j+1)}{\eta})$.
        \item $A_e = [{k}{\eta}, {(k+1)}{\eta}]\times [{j}{\eta}, 1]$ if $e$ is a horizontal edge connecting $({k}{\eta}, {j}{\eta})$ to $({(k+1)}{\eta}, {j}{\eta})$.
    \end{itemize}
    Let $S = \{v_1, \dots, v_n\}$ be a sample on $n$ valuations. We assign weights $w_e$ to the edges of $G_\eta$: 

    \begin{equation*}
        w_e = \begin{cases}
            \exp\left(-\nicefrac{\eps}{4}\cdot{k}{\eta}\sum_{i=1}^n\ind{v_i\in A_e}\right)\ &\text{if $e$ connects $({k}{\eta}, {j}{\eta})$ to $({k}{\eta}, {(j+1)}{\eta})$.} \\
            \exp\left(\nicefrac{\eps}{4}\cdot{j}{\eta}\sum_{i=1}^n\ind{v_i\in A_e}\right) \ &\text{if $e$ connects $({k}{\eta}, {j}{\eta})$ to $({(k+1)}{\eta}, {j}{\eta})$.}
            \end{cases}
    \end{equation*}
    \end{definition}
    The second step of the construction consists in deriving the right ``transition probabilities''.
    Since each step in a random walk on $G_\eta$ presents a choice between at most two possible next nodes, or equivalently two edges, we define these transitions by balancing the immediate edge weight with the $\prof$ of the paths that follow. This leads to our second intuition: weighting nodes recursively by summing the $\prof$ contributions of all possible \textit{future} paths stemming from them.
    
    \begin{definition}[Weighting the nodes in $G_\eta$]
    \label{def:weight_nodes}
    Let $u$ be a generic vertex of $G_\eta$ and let $\neigh{u}$ be its out-neighborhood. Let $w_{(1, 1)}=1$ and recursively define the weight of $u$ as $w_u = \sum_{s\in \neigh{u}}w_{(u, s)}w_s$.
    \end{definition}

        \begin{figure}[t!]
          \centering
          \begin{subfigure}{0.27\linewidth}
            \centering
            \includegraphics[width=\linewidth]{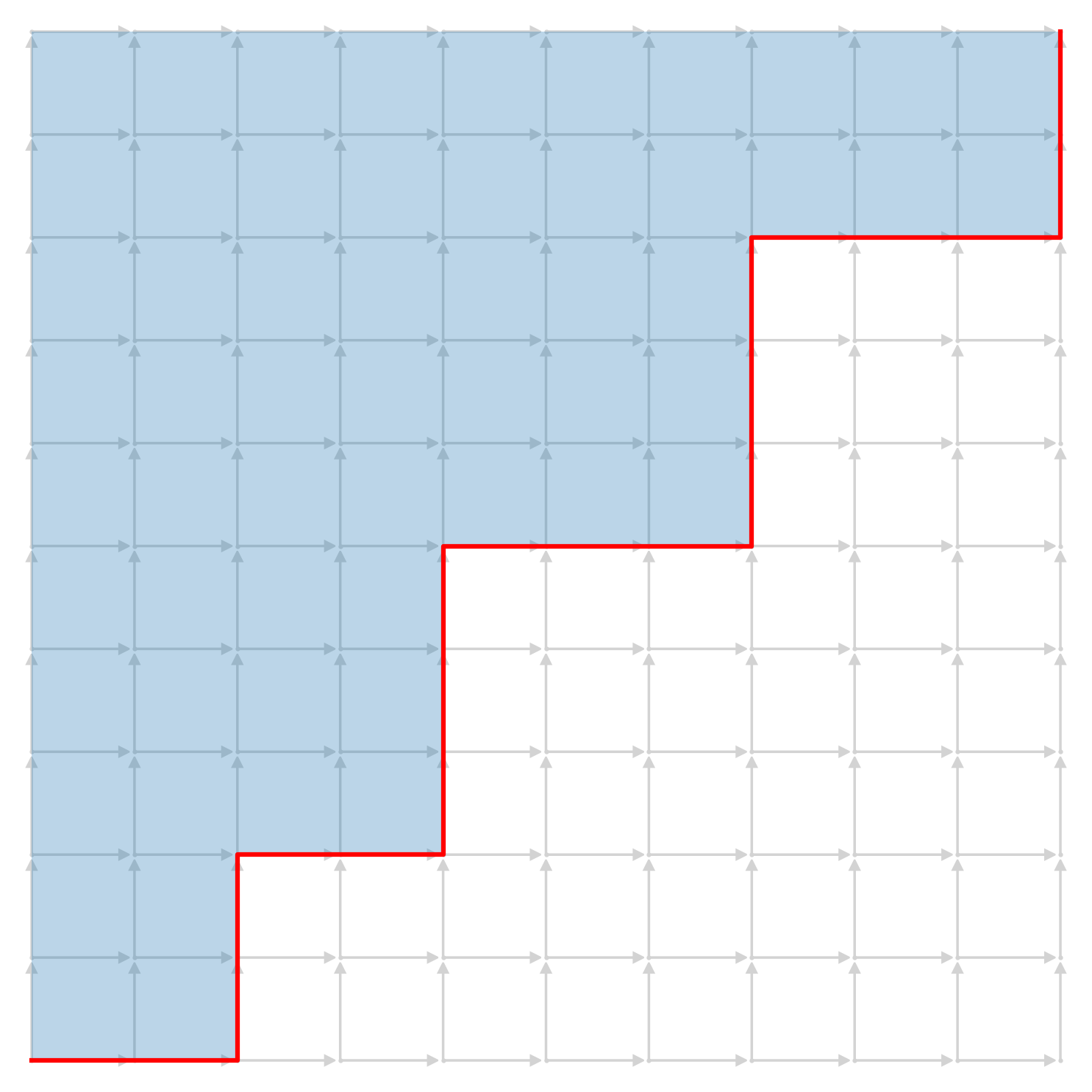}
          \end{subfigure}%
          \hspace*{26pt}
          \begin{subfigure}{0.27\linewidth}
            \centering
            \includegraphics[width=\linewidth]{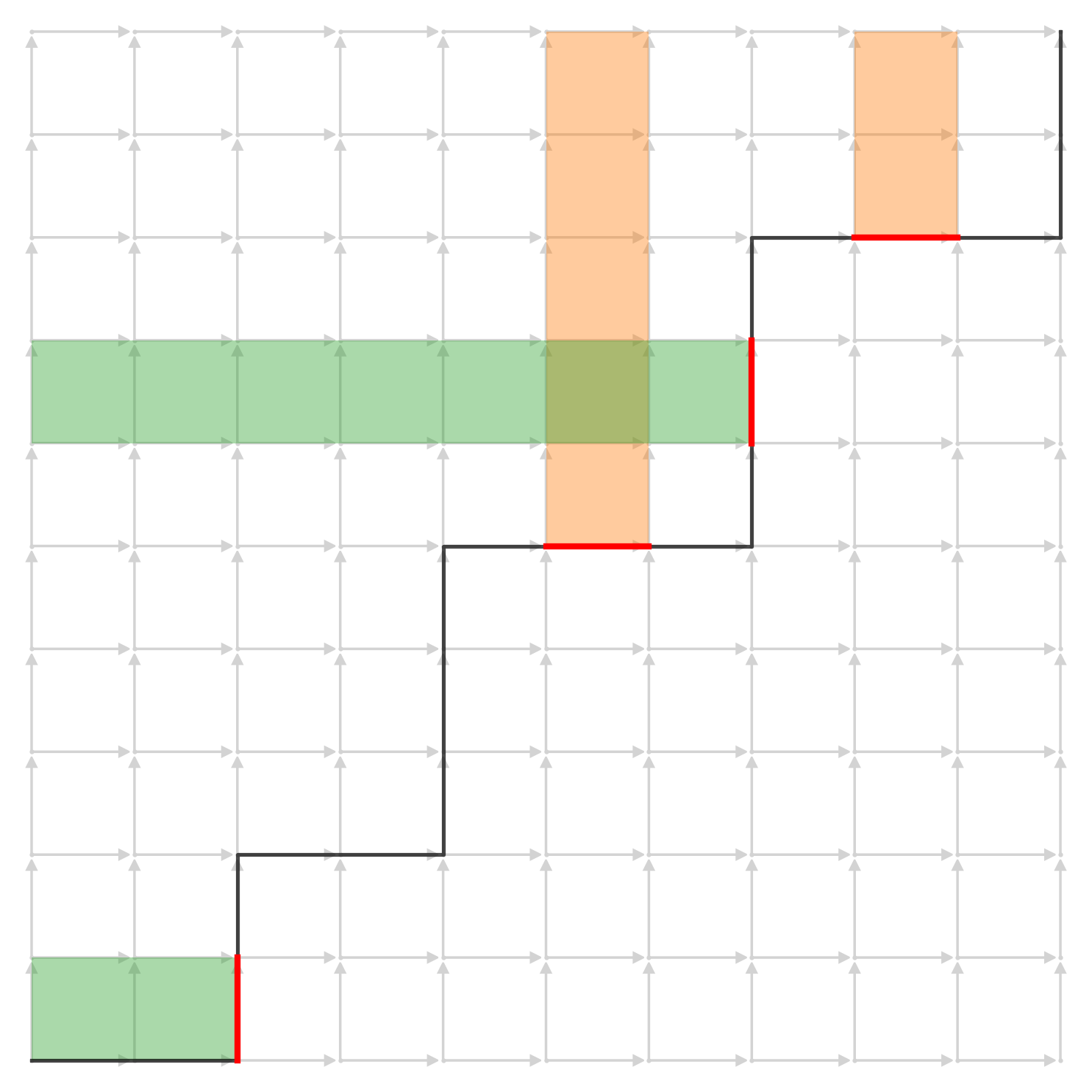}
          \end{subfigure}
        \caption{To the left, a path (in red) and its induced allocation region (in blue) are illustrated. To the right, four interest regions (as defined in \Cref{def:weight_edges}) are shown for four different edges: two vertical and two horizontal.}
        \label{fig:graph_algorithm}
        \end{figure}

    Due to the recursive definition, we stress that $\{w_u\}_{u\in V_\eta}$ can be computed by dynamic programming in polynomial time. At any point in the random walk, given $\{w_e\}_{e\in E_\eta}$ and $\{w_u\}_{u\in V_\eta}$, we can now account for both the immediate contribution of the next edge to $\prof$ and the aggregated contribution of all potential future paths. The natural idea is then multiplying the two and normalizing to get probabilities. More specifically, for any node $u\neq (1, 1)$ in $V_\eta$, if $s\notin \neigh{u}$, then we define $q_{(u, s)} = 0$; if $\neigh{u} = \{s\}$, then we define $q_{(u, s)} = 1$; if $\neigh{u} = \{s, r\}$: 

     \begin{equation}
        \label{eq:prob_definition}
        q_{(u, s)} = \frac{w_{(u, s)}w_s}{w_{(u, s)}w_s + w_{(u, r)}w_r} \quad \quad q_{(u, r)} = \frac{w_{(u, r)}w_r}{w_{(u, s)}w_s + w_{(u, r)}w_r}.
    \end{equation}
    This concludes the presentation of the efficient sampling routine. We conclude this section with a remark on the implementability of our approach. Namely, we comment on what the algorithm can do if it does not know the smoothness parameter $\sigma$ (or a constant upper bound of it).

    \begin{remark}[On not knowing $\sigma$]
        \label{rmk:profit}
            We can run the exponential mechanism on $\M_\eta$ with $\eta = \nicefrac{\alpha}{8}$, irrespective of $\sigma$. With minor adjustments to the previous analysis, it can be shown that $\tilde \Theta\left(\nicefrac{\sigma}{\alpha^2\eps}\right)$ samples are sufficient to (privately) obtain a $(\nicefrac{\alpha}{\sigma},\beta)$-optimal mechanism. The latter bound is of the same order as that in \Cref{thm:main_profit}, modulo rescaling accordingly the desired precision parameter.
    \end{remark}
    
\section{Gain-From-Trade Maximization}

    In this section, we study gain from trade maximization, with the goal of learning the best fixed price mechanism $M_p$. First, in \Cref{subsec:lower_gft}, we prove that private learning is unattainable for general distributions. Then, in \Cref{subsec:upper_gft}, we show how to instantiate our Meta-Theorem (\Cref{thm:meta}). 

    \subsection{Private Learning is Impossible for General Distributions}
    \label{subsec:lower_gft}

    In this section, we argue that, in general, it is not possible to privately learn the best fixed-price mechanism for bilateral trade. We report the statement of the impossibility result, and a proof sketch, while the complete argument is deferred to Appendix~\ref{app:lower_gft}.

        \begin{restatable}{theorem}{gftimpossible}
        \label{thm:impossibility_gft}
        For any learning algorithm $\cA$ for gain from trade maximization, precision parameter $\alpha \in (0,\nicefrac 1{80})$, privacy parameter $\varepsilon>0$ and number of samples $N$, there exists a hard distribution $\cD$ for which the learning algorithm $\cA$ on $N$ i.i.d. samples cannot be $\eps$-differentially private and satisfy:  
            \[
                \sup_{p \in [0,1]} \E{\gft(p,v)} \le \E{\gft(\A(S),v)} + \alpha,
            \]
            where the expectation is with respect to $S$ and $v$ drawn from $\cD$, and the randomization of  $\cA$. 
        \end{restatable}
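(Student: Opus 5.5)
We will mirror the proof of \Cref{thm:impossibility_profit}. The argument there used only two ingredients: a family of $\nicefrac1\delta$ distributions whose sets of near-optimal outputs are pairwise disjoint, and group privacy. The privacy part carries over verbatim. If $\cA$ is $\eps$-DP and satisfies the expected-optimality guarantee with precision $\alpha$, set $\alpha'=2\alpha$ and let $\M_i$ be the set of prices that are $\alpha'$-optimal for $\cD_i$. By Markov's inequality, $\P{\cA(S_i)\in\M_i}\ge\nicefrac12$. Applying the definition of differential privacy $N$ times along the hybrid samples $S^k_{i,j}$ gives $\P{\cA(S_j)\in\M_i}\ge \tfrac14 e^{-N\eps}$.

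If the $\M_i$ are pairwise disjoint and $\delta<\nicefrac{e^{-N\eps}}4$, summing over $i\ne j$ shows that on $\cD_j$ the algorithm outputs, with probability at least $\nicefrac12$, a price that is not $2\alpha'$-optimal for $\cD_j$. Its expected loss on $\cD_j$ is then at least $\nicefrac12\cdot 2\alpha'=2\alpha>\alpha$, a contradiction. We will also need the $\M_i$ to be measurable sets of prices, for the general-space version of DP; they are, since $p\mapsto\E{\gft(p,v)}$ is a measurable function.

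The real work is constructing the hard family for fixed prices. The intended gap must be a constant, say $\nicefrac1{40}$, so that it dominates $\alpha'=2\alpha<\nicefrac1{40}$. For each $i\in\{1,\dots,\nicefrac1\delta\}$, let $\xi_i=\delta^i$ (or any distinct points in a small interval near $\nicefrac12$). Define $\cD_i$ as a mixture of point masses arranged so that the optimal price is pinned near $\xi_i$ and any price off a tiny window around $\xi_i$ loses a constant.

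A natural attempt puts mass $\nicefrac12$ on a seller/buyer pair $(\xi_i-\gamma_i,\,c)$ and mass $\nicefrac12$ on $(0,\,\xi_i+\gamma_i')$. The first atom rewards prices $p\ge \xi_i-\gamma_i$ and the second rewards $p\le \xi_i+\gamma_i'$, so only prices in a window around $\xi_i$ trade both atoms. We will choose geometrically shrinking windows. The window for $\cD_i$ should be contained in a region where the window of every $\cD_j$, $j\neq i$, is excluded. This forces any price good for $\cD_i$ to fail to trade one atom of $\cD_j$, losing a constant multiple of that atom's gain from trade.

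Concretely, we aim to mimic the known ``non-discretizable'' construction for gain from trade \citep{AzarFF24}. The optimal price for each $\cD_i$ is essentially unique within an interval of length $\delta^{i}$. A price optimal within $2\alpha$ for $\cD_i$ must lie in $[\xi_i-\delta^{i+1}/2,\,\xi_i+\delta^{i+1}/2]$, and such prices are separated from those of every other index.

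The main obstacle is the quantitative analogue of \Cref{claim:lbprofits}. We must verify, with explicit constants compatible with $\alpha<\nicefrac1{80}$, two facts. First, any price within $2\alpha$ of optimal for $\cD_i$ loses more than $4\alpha$ on each $\cD_j$ with $j\neq i$. Second, the optimal values stay bounded away from the loss threshold.

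We will handle this with a short case analysis. In the first case the price lies below the seller's value in $\cD_j$, so the first atom does not trade and we lose its constant gain. In the second case the price lies above the buyer's value in $\cD_j$, so the second atom does not trade. We will tune the atom positions so that each atom carries gain from trade at least about $\nicefrac{1}{10}$. This makes each lost trade cost a constant exceeding $4\alpha$, and the bound $\alpha<\nicefrac1{80}$ leaves enough slack for the final contradiction.
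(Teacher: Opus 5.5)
Your plan is correct once you commit to placing the $\xi_i$ near $\nicefrac12$. Its skeleton matches the paper's: a family of distributions whose near-optimal prices lie in pairwise disjoint windows, combined with group privacy across $N$ replacements. The execution differs in two places.

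\textbf{Hard family.} The paper smears your two atoms into a continuous mixture. It draws $x$ uniformly from $[\nicefrac{9}{20},\nicefrac{11}{20}]$ or from $I_i=[\gamma_i-\nicefrac{\xi}{3},\gamma_i+\nicefrac{\xi}{3}]$, and outputs $(0,x)$ if $x\ge\gamma_i$ and $(x,1)$ otherwise. A price outside $I_i$ then misses half of the $I_i$-trades, which is exactly your two-case analysis. Your discrete version is simpler. Put mass $\nicefrac12$ on $(\xi_i-\theta,1)$ and on $(0,\xi_i+\theta)$, with the $\xi_i$ equally spaced in $[\nicefrac{9}{20},\nicefrac{11}{20}]$ and $2\theta$ below the spacing. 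Then every price in $[\xi_i-\theta,\xi_i+\theta]$ is exactly optimal, and every other price loses at least $\nicefrac{9}{40}$. So the $2\alpha$-optimal sets are precisely these disjoint windows, for any $\alpha<\nicefrac{9}{80}$, and what you call the main obstacle is a two-line check.

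\textbf{Probabilistic step.} The paper averages to fix a single dataset $S_i$ with $\Psub{\cA}{\cA(S_i)\in I_i}\ge\nicefrac{2}{11}$, then applies group privacy between fixed datasets. You apply Markov to random samples instead. This is equally valid, since group privacy holds for every pair of size-$N$ datasets and therefore survives integration over any coupling.

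You also do not need the cross-suboptimality property or the bound on optimal values. Once the $\M_i$ are disjoint, $\P{\cA(S_j)\in\M_j}\ge\nicefrac12$ and $\sum_{i\neq j}\P{\cA(S_j)\in\M_i}\ge(\nicefrac1\delta-1)\tfrac12 e^{-N\eps}$ already add up to more than one for small $\delta$. This is how the paper concludes.

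One correction: discard the option $\xi_i=\delta^i$, which is a leftover from the profit construction. With that choice the atom $(0,\xi_i+\theta)$ carries gain from trade of order $\delta^i$. A price such as $\nicefrac12$ then still trades the atom $(\xi_i-\theta,c)$ whenever $c\ge\nicefrac12$, and loses only $O(\delta^i)$ on every $\cD_i$. All near-optimal sets overlap and the packing collapses. Your own requirement that each atom carry gain from trade of at least about $\nicefrac1{10}$ already rules this out. The geometrically shrinking windows are harmless once the $\xi_i$ sit near $\nicefrac12$, but they are unnecessary.
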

        \begin{proof}[Proof Sketch]
             We construct a ``hard'' family of distributions, indexed by the uniform grid $\Gamma$ over the interval $[\nicefrac{9}{20},\nicefrac{11}{20}]$, with step-size $\xi$, to set later. To each such $\gamma_i \in \Gamma$, we associate a distribution $\cD_i$ with the following two properties: (i) the optimal price given $\cD_i$ is $\gamma_i$ and (ii) any price ``far'' from $\gamma_i$ results in a gain from trade that is suboptimal by more than $\alpha$. 

             Consider now any learning algorithm $\cA$. By point (ii), for each $\gamma_i \in \Gamma$ there should exist a sample $S_i$ such that $\cA(S_i)$ posts a price close to $\gamma_i$, let's say in some neighborhood $I_i$ of it, but far from \emph{all the other} $\gamma_j \in \Gamma$, so that $I_i \cap I_j = \emptyset.$  For the generic pair $\gamma_i, \gamma_j \in \Gamma$, the corresponding samples $S_i$ and $S_j$ may be different, but \emph{for sure}, they differ by at most $N$ points. Therefore, we can construct a sequence of sets $S_i=S^0_{i,j}, S^1_{i,j}, \dots, S^N_{i,j}=S_j$ such that $S^k_{i,j}$ and $S^{k+1}_{i,j}$ differ by at most one point. We can then repeatedly apply the definition of $\eps$-differentially private algorithm on $\A$:
            \begin{align}
                \Psub{\A}{\A(S_i) \in I_j} &\ge e^{-\eps} \Psub{\A}{\A(S^1_{i,j}) \in I_j}\tag{$S_i$ and $S^1_{i,j}$ differ in one point}\\
                &\ge e^{-\eps}\left[e^{-\eps} \Psub{\A}{\A(S^2_{i,j}) \in I_j}\right]  \tag{$S^1_{i,j}$ and $S^2_{i,j}$ differ in one point}\\
                &\ge e^{-N\eps}\Psub{\A}{\A(S_{j}) \in I_j}\tag{iterating $N$ times}
            \end{align}

            Now, in order to be $\alpha$-optimal, the algorithm $\cA$ should respect that $\P{\cA(S_i) \in I_i}$ is greater than some constant, let's say $\nicefrac 13$, for all $i$. Since the $I_i$ are all disjoint, we have
            \[                 \Psub{\A}{\A(S_i) \notin I_i} \ge \sum_{j\neq i} \Psub{\A}{\A(S_i) \in I_j} \ge \left(\frac{1}{\xi} - 2\right)e^{-N\eps} \Psub{\A}{\A(S_{i}) \in I_i} \ge \frac{1}{3}\left(\frac{1}{\xi} - 2\right)e^{-N\eps}.
            \]
            We can take $\xi$ as small as possible, while $\eps$ and $N$ are fixed, thus arriving to the contradiction that the mechanism proposed on sample $S_i$ prefers posting away from $\gamma_i$, thus violating (i) and (ii). 
        \end{proof}

    \subsection{A Fixed Grid over Prices} \label{subsec:upper_gft}

    Given accuracy $\alpha$ and the smoothness parameter $\sigma$, we restrict our attention to the family of mechanisms corresponding to the price-grid $\hat\cG=\{ \nicefrac{\alpha\sigma\cdot i}{4}, \, i=0, \dots, \nicefrac{4}{\alpha\sigma}\}$. 
    Under the smoothness assumption, the function $p \to \E{\gft(M_p,v)}$ is Lipschitz
    of parameter $\nicefrac{1}{\sigma}$ (Lemma 1 of \citet{Cesa-BianchiCCF24jmlr}), so that we have the following discretization bound. 
    \begin{lemma}
    \label{lem:discretization_gft}
        Consider any accuracy $\alpha$ and $\sigma$-smooth random variable $v$, we have the following inequality:
        \(
                    \max_{p \in [0,1]} \Esub{v}{\gft(M_p,v)} - \max_{p \in \hat \cG} \Esub{v}{\gft(M_p,v)} \le\frac{\acc}4.    
        \)
    \end{lemma}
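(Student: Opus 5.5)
The plan is to combine the Lipschitz property quoted just before the statement with the fact that every price in $[0,1]$ lies close to some grid point of $\hat\cG$. Write $G(p) = \Esub{v}{\gft(M_p,v)}$. By Lemma 1 of \citet{Cesa-BianchiCCF24jmlr}, under $\sigma$-smoothness $G$ is $\nicefrac{1}{\sigma}$-Lipschitz on $[0,1]$: $|G(p)-G(p')| \le \nicefrac{|p-p'|}{\sigma}$. To keep the argument self-contained, I would also re-derive this bound. For $p<p'$,
\[
|\gft(M_p,v)-\gft(M_{p'},v)| \le |\vb-\vs|\cdot \ind{v \in ([0,p]\times[p,1]) \,\triangle\, ([0,p']\times[p',1])} \le \ind{v \in \Sigma},
\]
where $\Sigma = ([p,p']\times[0,1]) \cup ([0,1]\times[p,p'])$. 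The set $\Sigma$ has Lebesgue measure at most $2|p'-p|$, so smoothness gives a Lipschitz constant of $\nicefrac{2}{\sigma}$. This cruder constant would still suffice, because the rounding distance below is $\nicefrac{\alpha\sigma}{8}$. Alternatively, one can note that the symmetric difference has $\gft$ weight at most $|p'-p|$ on each strip, which recovers $\nicefrac{1}{\sigma}$.

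The key steps are as follows. First, I argue that the supremum over $p\in[0,1]$ is attained, by continuity of $G$ (which follows from the Lipschitz bound) on a compact set. Alternatively, I work with an arbitrary $p$ and take the supremum at the end. Let $p^\star$ be a maximizer.

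Second, I round $p^\star$ to the closest grid point $\hat p \in \hat\cG$. The grid has step $\nicefrac{\alpha\sigma}{4}$ and covers $[0,1]$, assuming $\nicefrac{4}{\alpha\sigma}$ is an integer (otherwise I add the endpoint $1$, which does not change the argument). Hence $|p^\star-\hat p| \le \nicefrac{\alpha\sigma}{8}$.

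Third, the Lipschitz bound gives
\[
G(p^\star) - \max_{p\in\hat\cG} G(p) \le G(p^\star)-G(\hat p) \le \frac{1}{\sigma}\cdot\frac{\alpha\sigma}{8} = \frac{\alpha}{8} \le \frac{\alpha}{4}.
\]

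The only delicate point is the Lipschitz constant. If I rely on my own strip argument with constant $\nicefrac{2}{\sigma}$, I obtain exactly $\nicefrac{\alpha}{4}$, so either route closes the proof. I expect no real obstacle beyond checking the measure of the symmetric difference of the two allocation rectangles.
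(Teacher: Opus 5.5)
Your proposal is correct and follows the paper's route. The paper simply invokes the $\nicefrac{1}{\sigma}$-Lipschitz property of $p \mapsto \Esub{v}{\gft(M_p,v)}$ from Lemma 1 of \citet{Cesa-BianchiCCF24jmlr} together with the grid step $\nicefrac{\alpha\sigma}{4}$; your self-contained strip bound with constant $\nicefrac{2}{\sigma}$, combined with nearest-point rounding at distance at most $\nicefrac{\alpha\sigma}{8}$, closes the argument equally well.

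One inaccuracy, in your optional aside only: the gain is not bounded by $|p'-p|$ pointwise on the strips. For example, $v=(0,p)$ lies in the allocation region of $M_p$ but not of $M_{p'}$, yet has gain $p$. The $\nicefrac{1}{\sigma}$ constant comes instead from the two strip contributions entering the difference with opposite signs, each bounded by $\nicefrac{|p'-p|}{\sigma}$. Your main argument does not need this.
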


    Concerning uniform convergence, it follows immediately from a simple union bound, which we defer to Appendix~\ref{app:missing42}. We can then plug the two preliminary Lemmata into \Cref{thm:meta}. 

    \begin{restatable}{lemma}{uniformgft}
    \label{lem:uniform_gft}
    Consider any accuracy $\alpha$, failure probability $\beta$, and $\sigma$-smooth random variable $v$. Denote with $S$ an i.i.d. sample from $\cD$, then    
        \(
            \max_{p \in \hat \cG}|\Esub{v}{\gft(M_p,v)} - \gft(M_p,S)| \le \frac{\acc}4,
        \)
    with probability at least $1-\nicefrac{\beta}{2}$, as long as $|S| \ge \nicefrac{8}{\alpha^2}\log\left(\nicefrac{32}{\alpha\beta\sigma}\right)$.
    \end{restatable}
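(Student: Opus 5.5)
The plan is to apply Hoeffding's inequality to each fixed price in the grid $\hat\cG$ and then take a union bound over the grid. Smoothness plays no role in the concentration step itself. It enters only through the cardinality of the grid, $|\hat\cG| = \nicefrac{4}{\alpha\sigma}+1$, which is the source of the $\sigma$ inside the logarithm.

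Fix a price $p \in \hat\cG$. The quantities $\gft(M_p,v_i) = (v_{b,i}-v_{s,i})\ind{v_{s,i}\le p\le v_{b,i}}$ for $v_i \in S$ are i.i.d. and their mean is $\Esub{v}{\gft(M_p,v)}$. Each of them lies in $[0,1]$, because the trade only happens when $\vb \ge \vs$; the cruder range $[-1,1]$ would also suffice up to constants. Hoeffding's inequality for variables with range $1$ then gives
\[
\P{\left|\gft(M_p,S) - \Esub{v}{\gft(M_p,v)}\right| > \nicefrac{\acc}{4}} \le 2\exp\left(-\nicefrac{2n(\acc/4)^2}{1}\right) = 2\exp\left(-\nicefrac{n\acc^2}{8}\right).
\]

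Next, I union bound over the $\nicefrac{4}{\alpha\sigma}+1 \le \nicefrac{8}{\alpha\sigma}$ prices in $\hat\cG$, using $\alpha\sigma\le 4$, which holds in the relevant regime. The total failure probability is then at most $\nicefrac{16}{\alpha\sigma}\exp(-\nicefrac{n\acc^2}{8})$. We need this to be at most $\nicefrac{\beta}{2}$, and that holds exactly when $n \ge \nicefrac{8}{\alpha^2}\log(\nicefrac{32}{\alpha\beta\sigma})$, which is the threshold in the statement.

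There is no real obstacle here. The only points to check are the constants, namely the range of $\gft$ and the count of grid points, so that the logarithmic term comes out as $\log(\nicefrac{32}{\alpha\beta\sigma})$.
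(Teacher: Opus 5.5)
Your proposal is correct and is essentially the paper's proof. Hoeffding for each fixed price, using that $\gft(M_p,v)\in[0,1]$, gives failure probability $2e^{-\alpha^2|S|/8}$. A union bound over the at most $\nicefrac{8}{\alpha\sigma}$ points of $\hat\cG$ then gives $\nicefrac{16}{\alpha\sigma}\,e^{-\alpha^2|S|/8}\le\nicefrac{\beta}{2}$ exactly under the stated sample size.
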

    \begin{theorem}
    \label{thm:main_gft}
        Fix any $\acc > 0, \beta \in (0, 1)$ and $\eps>0$. Let $S$ be a $\sigma$-smooth i.i.d. sample of size $n$. Then the exponential mechanism on $\hat \cG$ is $(\acc, \beta)$-optimal and $\eps$-differentially private for $\gft$ maximization, provided $n\geq \frac{16}{\alpha}\left(\frac{1}{\alpha}+\frac{1}{\eps}\right)\log\left(\frac{32}{\alpha\beta\sigma}\right)$.
    \end{theorem}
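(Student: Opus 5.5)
The proof is a direct instantiation of the Meta-Theorem (\Cref{thm:meta}). We take $f=\gft$ and $\Mnet=\{M_p : p\in\hat\cG\}$. This family is finite, with $|\Mnet| = \nicefrac{4}{\alpha\sigma}+1 \le \nicefrac{8}{\alpha\sigma}$, and it is contained in $\M$; each $M_p$ is a fixed-price mechanism, so it is DSIC, IR and budget balanced. The restriction to fixed-price mechanisms loses nothing: by \Cref{prop:fixed_price}, the supremum of expected gain from trade over budget-balanced mechanisms in $\M$ equals $\sup_{p\in[0,1]}\E{\gft(M_p,v)}$. Hence the benchmark in the optimality definition is exactly the one controlled by \Cref{lem:discretization_gft}.

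Next we verify the two hypotheses of \Cref{thm:meta}. Property (i) is precisely \Cref{lem:discretization_gft}. Property (ii) is \Cref{lem:uniform_gft}, which gives $n_0 = \nicefrac{8}{\alpha^2}\log(\nicefrac{32}{\alpha\beta\sigma})$. The sensitivity of the empirical gain from trade is at most $\nicefrac{2}{n}$, since $\gft\in[-1,1]$; in fact $\gft\ge 0$ on the allocation region of fixed-price mechanisms. This is the sensitivity already accounted for in \Cref{thm:meta}. Therefore the exponential mechanism on $\Mnet$ with score $\gft(\cdot,S)$ is $\eps$-differentially private and $(\alpha,\beta)$-optimal as soon as
\[
n \ge \frac{8}{\alpha^2}\log\left(\frac{32}{\alpha\beta\sigma}\right) + \frac{16}{\alpha\eps}\log\left(\frac{2|\Mnet|}{\beta}\right).
\]

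It remains to show that the stated condition implies this one, which is simple bookkeeping. Using $|\Mnet|\le \nicefrac{8}{\alpha\sigma}$, we get $\log(\nicefrac{2|\Mnet|}{\beta}) \le \log(\nicefrac{16}{\alpha\beta\sigma}) \le \log(\nicefrac{32}{\alpha\beta\sigma})$. The required bound is then at most $\frac{16}{\alpha}\bigl(\frac{1}{\alpha}+\frac{1}{\eps}\bigr)\log(\nicefrac{32}{\alpha\beta\sigma})$, as claimed.

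The only point needing care is this constant-matching step, in particular bounding $|\hat\cG|$ so that the logarithmic factors line up; it is not a real obstacle. Computational efficiency is also immediate, since the support of the exponential mechanism has size $O(\nicefrac{1}{\alpha\sigma})$ and can be enumerated directly.
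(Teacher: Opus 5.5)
Your proposal is correct and follows the paper's own proof, which plugs \Cref{lem:discretization_gft} (for property (i)) and \Cref{lem:uniform_gft} (for property (ii)) into \Cref{thm:meta}; your constant bookkeeping also checks out, since $|\hat\cG| \le \nicefrac{8}{\alpha\sigma}$ gives $\log(\nicefrac{2|\hat\cG|}{\beta}) \le \log(\nicefrac{32}{\alpha\beta\sigma})$. You also note, via \Cref{prop:fixed_price}, that the benchmark in property (i) must be read as budget-balanced (equivalently fixed-price) mechanisms rather than all of $\M$, a point the paper leaves implicit, and that reading is the right one.
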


    \begin{remark}
    \label{rmk:gft}
        In general, we do not need to know the smoothness parameter $\sigma$ exactly, as any estimate $\hat \sigma \le \sigma$ can be used. Since the sample-complexity dependency on the smoothness parameter is only logarithmic in \Cref{thm:main_gft}, then any $\hat \sigma$ such that $\nicefrac{ \hat \sigma}{\sigma}$ is polynomial in $\alpha$ and $\beta$ only affects the sample complexity bound by a constant factor. Alternatively, we can still recover roughly the same rate, by considering a price grid of step-size $\nicefrac{\alpha}{4}$, independent of $\sigma$. This yields an extra $\nicefrac{1}{\sigma}$ factor in the bound of \Cref{lem:discretization_gft}, so that the same error can be accounted for in \Cref{lem:uniform_gft} using $\tilde{\Theta}({\nicefrac{\sigma^2}{\alpha^2}})$ samples. Plugging this in the proof of \Cref{thm:meta} implies that we can get an $(\nicefrac{\alpha}{\sigma},\beta)$-optimal mechanism using $\tilde{\Theta}(\nicefrac{\sigma^2}{\alpha^2} + \nicefrac{\sigma}{\alpha \eps})$ samples (same order as that in \Cref{thm:main_gft}, modulo rescaling accordingly the desired precision parameter).  
    \end{remark}

\section*{Acknowledgments}

    The work of SDG, FF, and SL was supported in part by the MUR PRIN grant 2022EKNE5K (Learning in Markets and Society), while CS was supported by a Google Research Award. S.D.G. was also supported by the Institute for Complex Systems (Italian National Research Council).

\bibliographystyle{plainnat}
\bibliography{bibliography}

\newpage
\appendix
\section{Concentration Bounds}
\label{app:concentrations}

    In this Appendix, we report for completeness the exact versions of some famous probabilistic inequalities used in the main body. First, we report a well-known result relating the maximum absolute values of $N$ Gaussian variables with their variance; its simple proof can be found, for instance, in  Lemma 8 of \citet{DiGregorioDFS25}.
    \begin{proposition}
        \label{prop:max}
        Let $g_1\ldots g_N$ be Gaussian random variables, with null expectation and variance uniformly bounded by $k$. Then, the following inequality holds:
        \[
            \E{\max_{i \in [N]}|g_i|}\leq 2\sqrt{k\log N}.
        \]
    \end{proposition}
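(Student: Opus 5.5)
The plan is the standard exponential-moment argument followed by a union bound. Write $M = \E{\max_{i\in[N]}|g_i|}$. Since $|g_i| = \max\{g_i,-g_i\}$, the quantity $\max_{i}|g_i|$ is the maximum of the $2N$ centered Gaussian variables $\{g_i\}\cup\{-g_i\}$, each with variance at most $k$. Fix $\lambda>0$. By Jensen's inequality applied to the convex map $x\mapsto e^{\lambda x}$, and then bounding the exponential of a maximum by the sum of the exponentials (the terms are nonnegative),
\[
e^{\lambda M}\le \E{e^{\lambda \max_i |g_i|}} \le \sum_{i=1}^N \left(\E{e^{\lambda g_i}}+\E{e^{-\lambda g_i}}\right).
\]
No independence is needed anywhere, which is important because the Gaussians in \Cref{cl:telescop_conc} are correlated.

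Next I would use the Gaussian moment generating function. For a centered Gaussian $g_i$ with variance $k_i\le k$, we have $\E{e^{\pm\lambda g_i}}=e^{\lambda^2 k_i/2}\le e^{\lambda^2k/2}$. Hence $e^{\lambda M}\le 2N e^{\lambda^2 k/2}$, and taking logarithms,
\[
M\le \frac{\log(2N)}{\lambda}+\frac{\lambda k}{2}.
\]
Choosing $\lambda=\sqrt{2\log(2N)/k}$ gives $M\le\sqrt{2k\log(2N)}$.

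The only real obstacle is constant bookkeeping, namely turning $\sqrt{2k\log(2N)}$ into $2\sqrt{k\log N}$. For $N\ge 2$ we have $\log(2N)\le 2\log N$, so $\sqrt{2k\log(2N)}\le\sqrt{4k\log N}=2\sqrt{k\log N}$, as required. The degenerate case $N=1$ makes the right-hand side $0$, so the bound fails as literally stated. I would note that it should be read for $N\ge2$. Alternatively one can use $\log(2N)$ in place of $\log N$, or the bound $\E{|g|}\le\sqrt{k}$ when $N=1$. This is harmless in the applications, where $N=|G|\ge 2$ (e.g.\ $|G|\le(2e)^{2\cdot 2^h}$ with $h\ge0$).
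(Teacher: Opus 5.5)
Your proof is correct and is the standard argument: Jensen applied to $e^{\lambda x}$, bounding the exponential of the maximum by the sum over the $2N$ variables $\pm g_i$, the Gaussian moment generating function, and optimizing $\lambda$ to get $\sqrt{2k\log(2N)}\le 2\sqrt{k\log N}$. The paper does not prove this itself but cites an external lemma for it. Your caveat is also right: for $N=1$ with positive variance the right-hand side is $0$ while $\E{|g_1|}=\sqrt{2k_1/\pi}>0$, so the statement needs $N\ge 2$, which is harmless in the paper's applications (and one can always pad the family of Gaussians to reach $N\ge 2$).
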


    The version of the Chernoff–Hoeffding bound that we use in Appendix \ref{app:uniform_convergence} comes from Theorem 1.1. of \citet{Dubhashi_Panconesi_2009}.    

    \begin{proposition}[Chernoff–Hoeffding bounds]
    \label{prop:chernoff}
        Let $X = \sum_{i \in [n]} X_i$, where $X_i$, $i \in [n]$ are independently distributed in $[0, 1]$. Then,
        \begin{itemize}
            \item For all $t > 0$,
            \[
                \P{X > \E{X}+t},\P{X < \E{X}-t} \le e^{-\nicefrac{2t^2}n}.
            \]
            \item For $\eps$ > 0,
            \begin{align*}
                \P{X>(1+\eps)\E{X}} \le \exp\left({-\frac{\eps^2}{3}\E{X}}\right),\\
                \P{X<(1-\eps)\E{X}} \le \exp\left({-\frac{\eps^2}{2}\E{X}}\right).
            \end{align*}
            \item If $t > 2e\E{X}$, then
            \[
                \P{X>t} \le 2^{-t}.
            \]
        \end{itemize}
    \end{proposition}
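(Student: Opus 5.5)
The plan is to prove all three items with the exponential-moment (Chernoff) method. Write $\mu = \E{X}$ and $p_i = \E{X_i}$, and let $\lambda$ be a free parameter. By Markov's inequality applied to $e^{\lambda X}$ and independence of the $X_i$,
\[
\P{X > a} \le e^{-\lambda a}\prod_{i\in[n]}\E{e^{\lambda X_i}} \qquad \text{for every } \lambda>0,
\]
and symmetrically, for lower tails, we use $e^{-\lambda X}$. The remaining work is to bound each factor $\E{e^{\lambda X_i}}$ and then pick $\lambda$ well.

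\textbf{Additive bound.} For the first item I will use Hoeffding's lemma. Since $X_i - p_i$ is centered and takes values in an interval of length $1$, we have $\E{e^{\lambda(X_i-p_i)}} \le e^{\lambda^2/8}$. I will prove this in the standard way: bound $e^{\lambda x}$ on $[0,1]$ by its chord, by convexity. Then set $L(\lambda)$ to be the logarithm of the resulting expression and check $L(0)=L'(0)=0$ and $L''\le \nicefrac14$. Multiplying the $n$ factors gives
\[
\P{X>\mu+t}\le \exp\left(-\lambda t+\frac{n\lambda^2}{8}\right),
\]
and the choice $\lambda = \nicefrac{4t}{n}$ yields $e^{-\nicefrac{2t^2}{n}}$. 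The lower tail follows by applying the same bound to the variables $1-X_i$, which also lie in $[0,1]$.

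\textbf{Multiplicative bounds.} For the second item I use convexity once more: $e^{\lambda x}\le 1+x(e^\lambda-1)$ on $[0,1]$. This gives $\E{e^{\lambda X_i}}\le 1+p_i(e^\lambda-1)\le \exp\left(p_i(e^\lambda-1)\right)$, and hence $\E{e^{\lambda X}}\le \exp\left(\mu(e^\lambda-1)\right)$. For the upper tail, choosing $\lambda=\log(1+\eps)$ gives
\[
\P{X>(1+\eps)\mu}\le \left(\frac{e^{\eps}}{(1+\eps)^{1+\eps}}\right)^{\mu}.
\]
For the lower tail, choosing $\lambda=-\log(1-\eps)$ in the $e^{-\lambda X}$ version gives $\left(\nicefrac{e^{-\eps}}{(1-\eps)^{1-\eps}}\right)^{\mu}$. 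It then remains to prove two elementary inequalities:
\[
(1+\eps)\log(1+\eps)-\eps\ge \nicefrac{\eps^2}{3} \qquad\text{and}\qquad (1-\eps)\log(1-\eps)+\eps\ge \nicefrac{\eps^2}{2}.
\]
Both follow by differentiating twice and comparing second derivatives, starting from equality at $\eps=0$.

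\textbf{Main obstacle.} I expect this last step to be the main obstacle, because the first inequality holds only for $\eps\in(0,1]$. For large $\eps$ its left-hand side grows like $\eps\log\eps$, while $\nicefrac{\eps^2}{3}$ grows faster, so the inequality fails. The stated bound is therefore meant for $0<\eps\le 1$, as in \citet{Dubhashi_Panconesi_2009}, and I would restrict to that range. This is harmless for how the bound is used, and the regime of large deviations is handled separately by the third item. The lower-tail inequality is only meaningful for $\eps<1$ anyway.

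\textbf{Large deviations.} For the third item I reuse the untruncated upper-tail bound with $t=(1+\eps)\mu$. It reads
\[
\P{X>t}\le e^{-\mu}\left(\frac{e\mu}{t}\right)^{t}\le \left(\frac{e\mu}{t}\right)^{t}.
\]
When $t>2e\mu$ we have $\nicefrac{e\mu}{t}<\nicefrac12$, which gives $\P{X>t}\le 2^{-t}$. The degenerate case $\mu=0$ is trivial, since then $X=0$ almost surely.
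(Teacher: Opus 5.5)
The paper gives no proof of this proposition; it quotes Theorem 1.1 of \citet{Dubhashi_Panconesi_2009}. Your exponential-moment derivation is the standard proof of that theorem, and most of it is correct:
\begin{itemize}
\item Hoeffding's lemma with $\lambda=\nicefrac{4t}{n}$ gives the additive bound.
\item The chord bound $e^{\lambda x}\le 1+x(e^\lambda-1)$, with $\lambda=\log(1+\varepsilon)$ and $\lambda=-\log(1-\varepsilon)$, gives the two untruncated multiplicative bounds.
\item Rewriting the untruncated upper bound as $e^{-\mu}(e\mu/t)^t$ gives the third item.
\end{itemize}

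Your caveat on $\varepsilon$ is not just caution. The second item as written in the paper (``for $\varepsilon>0$'') is false. Take $n=1$, $X_1$ Bernoulli with $p=\nicefrac{1}{100}$, and $\varepsilon=50$. Then $\P{X>0.51}=0.01$, while $\exp(-\varepsilon^2 p/3)=e^{-25/3}\approx 2.4\cdot 10^{-4}$. So restricting the upper multiplicative tail to $0<\varepsilon\le 1$ is necessary. It costs nothing here: the paper only uses the first item (in \Cref{lem:uniform_gft}) and the third item (in Appendix~\ref{app:uniform_convergence}).

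One step does not work as you describe it. Let $f(\varepsilon)=(1+\varepsilon)\log(1+\varepsilon)-\varepsilon-\nicefrac{\varepsilon^2}{3}$. You have $f(0)=f'(0)=0$, but $f''(\varepsilon)=\frac{1}{1+\varepsilon}-\frac{2}{3}$ is negative on $(\nicefrac12,1]$. So comparing second derivatives only proves the inequality on $[0,\nicefrac12]$. There are two easy fixes:
\begin{itemize}
\item Note that $f'(\varepsilon)=\log(1+\varepsilon)-\frac{2\varepsilon}{3}$ increases on $[0,\nicefrac12]$ and decreases on $[\nicefrac12,1]$, with $f'(0)=0$ and $f'(1)=\log 2-\frac23>0$. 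Hence $f'\ge 0$ on all of $[0,1]$.
\item Or use $\log(1+\varepsilon)\ge\frac{2\varepsilon}{2+\varepsilon}$ (its derivative gap is $\frac{\varepsilon^2}{(1+\varepsilon)(2+\varepsilon)^2}\ge 0$). This gives $(1+\varepsilon)\log(1+\varepsilon)-\varepsilon\ge\frac{\varepsilon^2}{2+\varepsilon}\ge\frac{\varepsilon^2}{3}$ for $\varepsilon\le 1$.
\end{itemize}
Your second-derivative argument is fine for the lower-tail inequality, since there the comparison $\frac{1}{1-\varepsilon}\ge 1$ holds on all of $[0,1)$. For $\varepsilon\ge 1$ the lower tail is trivially zero because $X\ge 0$.
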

    
    We also use Bernstein's inequality, see e.g., Theorem 1.2 of \citet{Dubhashi_Panconesi_2009}. To be more adherent to the source we denote the variance with $\sigma$, thus overloading locally the symbol for the smoothness parameter. 
    \begin{proposition}[Bernstein's inequality]
        \label{prop:bernstein}
        Let the random variables $X_1, X_2, \dots, X_n$ be independent with $X_i - \E{X_i}<b$ for each $i=1, 2,\dots, n$. Let $X=\sum_i X_i$, and let $\sigma^2=\sum_i \sigma_i^2$ be the variance of $X$. Then, for any $t>0$, 
        \[
            \P{X > \E{X} + t} \le \exp\left( -\frac{t^2}{2\sigma^2+\tfrac{2}{3}bt}\right).
        \]
    \end{proposition}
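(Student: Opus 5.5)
This is Bernstein's inequality as quoted from Theorem 1.2 of \citet{Dubhashi_Panconesi_2009}, reported for completeness. I would prove it with the standard Chernoff (exponential moment) method. Write $Y_i = X_i - \E{X_i}$, so $\E{Y_i}=0$, $Y_i < b$, and $\E{Y_i^2}=\sigma_i^2$. For any $\lambda>0$, Markov's inequality and independence give
\[
\P{X > \E{X}+t} \le e^{-\lambda t}\prod_{i=1}^n \E{e^{\lambda Y_i}}.
\]
So the task reduces to a bound on each moment generating factor, followed by an optimization over $\lambda$.

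\textbf{Key step (bounding each factor).} The function $g(x) = (e^{x}-1-x)/x^2$, with $g(0)=\nicefrac12$, is nondecreasing on $\mathbb{R}$. Hence for $Y_i \le b$ we have $e^{\lambda Y_i} \le 1 + \lambda Y_i + \lambda^2 Y_i^2 g(\lambda b)$. Taking expectations and using $1+x\le e^x$ gives
\[
\E{e^{\lambda Y_i}} \le \exp\bigl(\lambda^2\sigma_i^2 g(\lambda b)\bigr).
\]
Next, expand $e^{u}-1-u = \sum_{k\ge 2} u^k/k!$ and use $k! \ge 2\cdot 3^{k-2}$. For $0\le u<3$ this yields
\[
g(u) \le \frac{1}{2(1-u/3)}.
\]
(If $b\le 0$ the variables are a.s.\ nonpositive after centering, hence zero, and the claim is trivial; so take $b>0$.)

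\textbf{Optimization.} Combining the two displays, for $0<\lambda<3/b$,
\[
\P{X>\E{X}+t} \le \exp\!\left(-\lambda t + \frac{\lambda^2\sigma^2}{2(1-\lambda b/3)}\right).
\]
Choose $\lambda = t/(\sigma^2 + bt/3)$, which indeed satisfies $\lambda b/3<1$. Then $1-\lambda b/3 = \sigma^2/(\sigma^2+bt/3)$, and the exponent simplifies to $-\lambda t + \lambda t/2 = -\tfrac{t^2}{2(\sigma^2+bt/3)} = -\tfrac{t^2}{2\sigma^2+\tfrac23 bt}$, which is the claim.

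\textbf{Main obstacle.} The only delicate point is the one-sided use of the bound $Y_i<b$: the monotonicity of $g$ must be invoked on the whole real line, including negative $Y_i$, rather than through a Taylor argument assuming $|Y_i|\le b$. I would check monotonicity of $g$ directly, via $g(x)=\int_0^1 (1-s)e^{sx}\,ds$, which is manifestly increasing in $x$. Everything else is routine algebra.
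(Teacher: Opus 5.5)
The paper gives no proof of this proposition: it quotes Theorem 1.2 of Dubhashi and Panconesi for completeness, so there is no in-paper argument to compare against. Your derivation is the standard Chernoff--Bernstein argument, and it is correct.

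\begin{itemize}
\item The representation $g(x)=\int_0^1(1-s)e^{sx}\,ds$ makes $g$ nondecreasing on all of $\mathbb{R}$. This is exactly what lets the one-sided hypothesis suffice.
\item The estimate $k!\ge 2\cdot 3^{k-2}$ gives $g(u)\le 1/(2(1-u/3))$ for $0\le u<3$.
\item With $\lambda=t/(\sigma^2+bt/3)$ the exponent collapses to $-\lambda t/2=-t^2/(2\sigma^2+\tfrac23 bt)$.
\end{itemize}

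A useful by-product is that you only ever use $Y_i\le b$. Your proof therefore also covers the non-strict form $Z_i-\E{Z_i}\le 4$ that the paper actually invokes when it applies the inequality in the proof of \Cref{cl:clean_event_control}.

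One small spot needs patching: the degenerate case $\sigma^2=0$. There your choice gives $\lambda=3/b$, so the claim ``$\lambda b/3<1$'' fails and the factor $1/(1-\lambda b/3)$ is undefined. That case is trivial, though. Every $Y_i$ is then $0$ almost surely, so $\P{X>\E{X}+t}=0$. Alternatively, let $\lambda\uparrow 3/b$ in the bound $e^{-\lambda t}$ to get $e^{-3t/b}\le e^{-3t/(2b)}$.

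Your handling of $b\le 0$ is also fine. Under the strict hypothesis it is vacuous, and under the non-strict one it forces $Y_i=0$ almost surely.
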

    Finally, we state the bounded differences inequality (also known as McDiarmid's Inequality). First, we introduce the bounded difference property, and then the inequality (respectively Definition 5.6 and Corollary 5.2 in \citet{Dubhashi_Panconesi_2009})
    
    \begin{definition}[Bounded differences condition]
        A function $f(x_1, \dots, x_n)$ satisfies the bounded differences condition with constants $d_i$, $i=1, \dots, n$, if 
        \[
            |f(a) - f(a')| \le d_i,
        \]
        whenever $a$ and $a'$ differ in just the $i^{\textnormal{th}}$ coordinate, $i=1, \dots, n$.
    \end{definition}

    \begin{proposition}[Method of bounded differences]
        \label{prop:bounded_difference}
        If $f$ satisfies the bounded differences property with constants $d_i$,  $i=1, \dots, n$, and $X_1, X_2, \dots, X_n$ are independent random variables, then for any $t>0$,
        \[
            \P{f(X_1, \dots, X_n) > \E{f(X_1, \dots, X_n)} + t} \le \exp\left(- \frac{2t^2}{d}\right),
        \]
        where $d = \sum_i d_i^2$.
    \end{proposition}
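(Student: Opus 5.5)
We would prove this by the standard Doob martingale argument combined with Hoeffding's lemma applied to the conditional increments. Write $f = f(X_1,\dots,X_n)$ and define the Doob martingale $Z_k = \E{f \mid X_1,\dots,X_k}$ for $k=0,\dots,n$. Then $Z_0 = \E{f}$, $Z_n = f$, and $f - \E{f} = \sum_{k=1}^n D_k$ with $D_k = Z_k - Z_{k-1}$. Each increment satisfies $\E{D_k \mid X_1,\dots,X_{k-1}} = 0$.

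The key step is to show that, conditionally on $X_1,\dots,X_{k-1}$, the increment $D_k$ lies in an interval of length at most $d_k$. By independence of the $X_i$, we can write $Z_k = g_k(X_1,\dots,X_k)$ with
\[
g_k(x_1,\dots,x_k) = \E{f(x_1,\dots,x_k,X_{k+1},\dots,X_n)}.
\]
Fix $x_1,\dots,x_{k-1}$ and set
\[
L_k = \inf_{y} g_k(x_1,\dots,x_{k-1},y) - Z_{k-1}, \qquad U_k = \sup_{y} g_k(x_1,\dots,x_{k-1},y) - Z_{k-1}.
\]
Then $L_k \le D_k \le U_k$. For any two values $y,y'$, the bounded differences condition applied pointwise inside the expectation gives $|g_k(\dots,y) - g_k(\dots,y')| \le d_k$. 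Hence $U_k - L_k \le d_k$. This is the step we expect to need the most care. The independence of the $X_i$ is what lets us integrate out the coordinates $X_{k+1},\dots,X_n$ with the same law regardless of the value of $X_k$, so that the pointwise bound transfers to $g_k$. Measurability of the sup and inf can be sidestepped by comparing against two fixed values $y,y'$ directly.

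Next we apply Hoeffding's lemma conditionally. A mean-zero variable supported in an interval of length $c$ satisfies $\E{e^{\lambda D}} \le e^{\lambda^2 c^2/8}$. Applied to $D_k$ given $X_1,\dots,X_{k-1}$, it yields
\[
\E{e^{\lambda D_k} \mid X_1,\dots,X_{k-1}} \le e^{\lambda^2 d_k^2/8}.
\]
Iterating the tower property from $k=n$ down to $k=1$ gives
\[
\E{e^{\lambda(f-\E{f})}} \le e^{\lambda^2 d/8}, \qquad d = \sum_i d_i^2.
\]

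Finally, Markov's inequality gives
\[
\P{f - \E{f} > t} \le e^{-\lambda t + \lambda^2 d/8}.
\]
Choosing $\lambda = 4t/d$ yields the bound $\exp(-2t^2/d)$.
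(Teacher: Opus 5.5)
Your argument is correct: it is the standard Doob-martingale proof. Its crucial point is that, conditionally on $X_1,\dots,X_{k-1}$, the increment $D_k$ ranges over an interval of \emph{length} at most $d_k$, rather than merely satisfying $|D_k|\le d_k$, and this is exactly what yields the constant $2$ in $\exp(-2t^2/d)$. Applying Hoeffding's lemma for each fixed $(x_1,\dots,x_{k-1})$ to $g_k(x_1,\dots,x_{k-1},X_k)$, which has mean $g_{k-1}(x_1,\dots,x_{k-1})$ by independence, is the clean way to carry out the measurability sidestep you mention.

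The paper does not prove this proposition itself; it imports it as Corollary 5.2 of Dubhashi and Panconesi, where it is derived along essentially the same lines (Azuma's inequality for martingales whose increments lie in ranges of bounded length, applied to the Doob martingale).
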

\section{Uniform Convergence for Profit Maximization with Smooth Distributions}
    \label{app:uniform_convergence}
     As a consequence of the previous result, one may use a bracketing argument~\citep{bracketing_annals} to extend uniform convergence to the more general class $\M$. The idea is that, for every mechanism $M\in \M$, we can find two functions stemming from $\M_\eta$ that are very close in expectation and that upper bound and lower bound the $\prof$ for $M$.

    \begin{theorem}
    \label{thm:generalunifconv}
    
    Fix any $\acc>0$, $\fail \in (0,1)$. Consider a $\sigma$-smooth distribution, an i.i.d. sample $S$ of $n$ valuations, and with $v$ a fresh sample from it. Assuming $n\geq100(\frac{40^5\cdot \log^4(\nicefrac{50}{\alpha\sigma})}{\alpha^2\sigma} + \frac{\log^2(\nicefrac{8}{\alpha\sigma})\cdot 128\log \nicefrac{2}{\beta}}{\acc^2})$, the following uniform convergence bound holds: 
    \begin{equation*}
        \Psub{S}{\sup_{M \in \M} \Bigl\lvert\prof(M,S) - \Esub{v}{\prof(M, v)}\Bigr\rvert \geq \acc} \leq \fail
    \end{equation*}
    \end{theorem}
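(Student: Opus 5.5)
The plan is a bracketing argument that reduces the supremum over the infinite class $\M$ to finitely many mechanisms in $\M_\eta$, for a precision $\eta$ of order $\alpha\sigma$ up to logarithmic factors.

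\textbf{Setup and three-term split.} Fix $M \in \M$. Let $\phi = \app{\eta}{M}$ be its inner approximation, and let $\psi \in \M_\eta$ be its \emph{outer} approximation, whose allocation region is the union of all tiles of $\cT_\eta$ intersecting $A_M$. This region is monotone, so $A_\phi \subseteq A_M \subseteq A_\psi$. The triangle inequality gives
\[
\bigl|\prof(M,S)-\E{\prof(M,v)}\bigr| \le \bigl|\prof(\phi,S)-\E{\prof(\phi,v)}\bigr| + \bigl|\prof(M,S)-\prof(\phi,S)\bigr| + \E{\bigl|\prof(M,v)-\prof(\phi,v)\bigr|}.
\]
The first term is bounded uniformly by \Cref{thm:high_prob_conv_profit}, applied with precision $\alpha/4$ in place of $\alpha$ and $\beta/2$ in place of $\beta$; it only involves $\phi \in \M_\eta$.

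\textbf{The expectation term.} I would rerun the proof of \Cref{lem:lp_net} with $p=1$. The geometric series $\sum_i 2^{(1-p)i}$ is now a sum of ones, but only $O(\log(1/\eta))$ dyadic levels matter, since gaps below $\eta$ are absorbed. This gives $\E{|\prof(M,v)-\prof(\phi,v)|}\lesssim \log(\nicefrac1\eta)\,\eta/\sigma$. Choosing $\eta \approx \alpha\sigma/\log(\nicefrac{1}{\alpha\sigma})$ makes this at most $\alpha/4$, and this choice is the source of the extra logarithmic factors in the sample size.

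\textbf{The empirical term.} This is where bracketing is needed, because the difference $\prof(M,\cdot)-\prof(\phi,\cdot)$ depends on the infinitely many $M$. I would dominate it pointwise by an envelope $g_{\phi,\psi}$ that depends only on the pair $(\phi,\psi)$. For $v\in A_M\setminus A_\phi\subseteq A_\psi\setminus A_\phi$ the difference is at most $1$. For $v\in A_\phi$, the inclusions give $q_\psi\le q_M\le q_\phi$ and $p_\phi\le p_M\le p_\psi$ by definition of Myerson payments. Hence
\[
|\prof(M,v)-\prof(\phi,v)| \le (q_\phi-q_\psi)(v)+(p_\psi-p_\phi)(v) =: g_{\phi,\psi}(v)
\]
on $A_\phi$, and we set $g_{\phi,\psi}=\ind{v\in A_\psi\setminus A_\phi}$ off $A_\phi$.

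Two facts about the envelope are then needed.
\begin{itemize}
\item[(a)] The mean $\E{g_{\phi,\psi}(v)}$ is $O(\log(\nicefrac1\eta)\eta/\sigma)$. This follows from the same row/column dyadic counting as in \Cref{lem:lp_net} and \Cref{cl:bound_cl_ro}, now comparing the two simple mechanisms $\phi$ and $\psi$, together with \Cref{cl:cardinality} for $A_\psi\setminus A_\phi$.
\item[(b)] Its empirical mean concentrates uniformly. The second moment of $g_{\phi,\psi}$ is also $O(\eta/\sigma)$, so Bernstein (\Cref{prop:bernstein}) plus a union bound over the at most $|\M_\eta|^2\le(2e)^{2/\eta}$ pairs (\Cref{prop:card_bound}) gives $\max_{\phi,\psi} g_{\phi,\psi}(S)\le \E{g}+\alpha/8$ with probability $1-\beta/4$. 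This requires $n\gtrsim \frac{1}{\eta\alpha}\log(\nicefrac{1}{\beta})+\frac{1}{\eta\alpha}\cdot\frac1\eta$, up to the choice of constants and logs.
\end{itemize}

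\textbf{Main obstacle.} I expect the hardest step to be keeping step (b) within the $\tilde O(\nicefrac{1}{\alpha^2\sigma})$ budget. A crude union bound at the fine scale $\eta$ costs roughly $1/(\eta\alpha)\approx 1/(\alpha^2\sigma)$, which is acceptable only because Bernstein exploits the small variance $\eta/\sigma$; this is where the $\log^2$ factors enter. If it is too lossy, I would instead run the chaining of \Cref{thm:high_prob_conv_profit} on the class of envelope functions themselves. Collecting the three contributions and the failure probabilities with a union bound gives the stated sample size.
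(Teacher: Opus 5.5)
Your argument is correct, and it uses the same bracketing idea as the paper. Both pass to $\M_\eta$ through the inner approximation $\phi=\phi_\eta(M)$ and an outer approximation $\psi$ (the paper's $\phi^+_\eta(M)$, up to tie-breaking). Both invoke \Cref{thm:high_prob_conv_profit} for the simple mechanisms and a truncated $p=1$ version of \Cref{lem:lp_net} for the expected payment gap.

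The bracket itself differs. The paper sandwiches $\prof(\phi^+_\eta(M),v)-\ind{v\in K_{M,\eta}}\le\prof(M,v)\le\prof(\phi_\eta(M),v)+\ind{v\in K_{M,\eta}}$. So its only random quantities are empirical profits of simple mechanisms and empirical frequencies of the boundary-tile sets $K_{M,\eta}$. One application of \Cref{thm:high_prob_conv_profit} covers both $\phi$ and $\phi^+$, which gives the reverse inequality at no extra cost. The frequencies are plain indicators handled by multiplicative Chernoff, and the payment gap between $\phi$ and $\phi^+$ is only needed in expectation.

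Your two-sided envelope $\lvert\prof(M,\cdot)-\prof(\phi,\cdot)\rvert\le g_{\phi,\psi}$ handles both directions at once and needs uniform convergence only for inner approximations. The cost is that the payment gap moves into the empirical term, so you must concentrate a real-valued class rather than indicators. In exchange, your union bound over at most $\lvert\M_\eta\rvert^2$ pairs $(\phi,\psi)$ is the cleaner count. The paper's claimed bijection between the sets $K_{M,\eta}$ and $\M_\eta$ is not accurate as stated: two mechanisms with the same $\phi_\eta(M)$ can have different boundary tiles. Any count of order $\lvert\M_\eta\rvert^{O(1)}$ still suffices there.

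Two slips need fixing.

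\textbf{Step (b).} This step is not a real obstacle, but the requirement you display is wrong. You have $0\le g_{\phi,\psi}\le 2$ and $\E{g_{\phi,\psi}(v)}\le c\,\acc$ for a small constant $c$ at your choice of $\eta$. Bernstein, or the multiplicative Chernoff bound (third item of \Cref{prop:chernoff}, applied to $g_{\phi,\psi}/2$), then gives failure probability $e^{-\Omega(n\acc)}$ per pair. The union bound over $(2e)^{2/\eta}$ pairs therefore needs only $n$ of order $\frac{1}{\acc\eta}+\frac{\log(1/\fail)}{\acc}$, with no second-moment estimate and no chaining. The expression $\frac{1}{\eta\acc}\log\frac{1}{\fail}+\frac{1}{\eta^2\acc}$ you wrote is of order $\frac{1}{\acc^3\sigma^2}$ and would exceed the stated sample size.

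\textbf{Parameter choice.} Your parameters are inconsistent. \Cref{thm:high_prob_conv_profit} controls the net $\M_{\acc\sigma/8}$. To work at scale $\eta\approx\acc\sigma/\log(1/(\acc\sigma))$ you must invoke it with $\acc$ replaced by $8\eta/\sigma$, not by $\acc/4$. That substitution is what produces the $\log^2$ multipliers in the statement, mirroring the paper's final rescaling of $\acc$ by $10\log(8/(\acc\sigma))$.

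Alternatively, the logarithm in the $p=1$ estimate can be avoided in both proofs. Let $f$ be the lower boundary of $A_M$. The gap in column $[k\eta,(k+1)\eta]\times[0,1]$ is at most $f((k+1)\eta)-f(k\eta)+\eta$, so the column gaps sum to at most $2$, and likewise for rows. This gives $\E{\lvert\prof(M,v)-\prof(\phi,v)\rvert}\le 6\eta/\sigma$ directly. With that bound, a constant-factor choice such as your $\acc/4$ works after adjusting constants.
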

    To prove the above, we need additional notation. For $M\in \M$ and $\eta > 0$, let $\phi_\eta^{+}(M)$ be the mechanism whose allocation region is characterized by $A_{\phi_\eta(M)}\cup K_{M, \eta}$, in the notation\footnote{Here we slightly abuse notation, since here we consider $K_{M, \eta}$ as the union of the tiles in $\cT_\eta$ intersecting $\partial A_M$, not the collection of those tiles.} of Appendix \ref{app:missing_net}. Crucially, $\phi_\eta^{+}(M)\in \M_\eta$. Moreover, let $U^\eta_M(v) = \prof(\phi_\eta(M), v)+\ind{v\in K_{M, \eta}}$ and $L^\eta_M(v) = \prof(\phi_\eta^+(M), v)-\ind{v\in K_{M, \eta}}$, with $\hat U_M^\eta$ and $\hat L_M^\eta$ their empirical averages on the sample $S$.
    \begin{proof}
    As a first step, observe that, for any mechanism $M\in\M$ and valuation $v\in [0, 1]^2$, by construction we have $U^{\eta}_M(v) \geq \prof(M, v) \ge L^{\eta}_M(v)$. This follows by a reasoning similar to the one in the proof of \Cref{lemma:restriction} (the inner mechanism achieves higher $\prof$). Set $\eta=\nicefrac{\acc\sigma}{8}$, as in \Cref{subsec:unif_convergence}.

    We first decompose the uniform convergence error, removing the absolute value from the analysis. We thus focus on one direction of the target inequality. We have the following chain of inequalities, which is a consequence of the definition of $L^{\eta}_M$ and $U^{\eta}_M$ and the triangle inequality on the supremum:
    \begin{align*}
        \sup_{M\in \M}\prof(M, S)-&\Esub{v}{\prof(M, v)}\\
        &\leq \sup_{M\in \M}\hat U^{\eta}_M -\Esub{v}{L^{\eta}_M(v)} = \sup_{M\in \M}\hat U^{\eta}_M-\Esub{v}{L^{\eta}_M(v)} \pm \Esub{v}{U^{\eta}_M(v)} \\
        & \leq \underbrace{\sup_{M\in \M}\prof(\phi_{\eta}(M), S)-\Esub{v}{\prof(\phi_{\eta}(M), v)}}_{(A)} +\underbrace{\sup_{M\in \M}\frac{1}{n}\sum_{i=1}^n \ind{v_i\in K_{M, {\eta}}}}_{(B)}\\ &\quad \, + 3\underbrace{\sup_{M\in \M}\Psub{v}{v\in K_{M, {\eta}}}}_{(C)}
        + \underbrace{\sup_{M\in \M}\Esub{v}{\prof(\phi_{\eta}(M), v)-\prof(\phi^+_{\eta}(M), v)}}_{(D)}
    \end{align*}
    We address the above terms separately. The (A) term can be upper bounded by $\nicefrac{\acc}{4}$ with probability $\nicefrac{\fail}{2}$, given the lower bound on $n$ for the uniform convergence bound on $\M_{\nicefrac{\acc\sigma}{8}}$, \Cref{thm:high_prob_conv_profit}. We call this lower bound $n_0$. Term (C) can be bounded by $\nicefrac{\acc}{4}$, using \Cref{cl:cardinality}.

    It remains to bound term (B) and (D). To bound (B), we use the version of the multiplicative Chernoff bound from \Cref{app:concentrations}, \Cref{prop:chernoff}, since the term inside the supremum is just the average of $n$ Bernoulli variables with mean bounded by $\nicefrac{\acc}{4}$, again by \Cref{cl:cardinality}. Fix any $M\in \M$, then the following holds: 
    $$
    \Psub{S}{\frac{1}{n}\sum_{i=1}^n\ind{v_i\in K_{M, {\eta}}}> 2\acc}\leq 2^{-2n\acc} 
    $$

    Notice that by construction of our approximation in \Cref{def:approximations_prof}, there is a bijection between $\{K_{M, \eta}\}_{M\in \M}$ and $\M_{\eta}$, meaning that $\lvert \{K_{M, \eta}\}_{M\in \M}\rvert = \lvert \M_{\eta}\rvert$. Using \Cref{prop:card_bound} and union bounding, we then get: 
    $$
    \Psub{S}{\max_{M\in \M}\frac{1}{n}\sum_{i=1}^n\ind{v_i\in K_{M, \eta}}> 2\acc} \leq 2^{-2n\acc} \cdot (2e)^{\nicefrac{8}{\acc\sigma}}, 
    $$
    which is upper bounded by $\nicefrac{\fail}{2}$ given $n\geq n_0$.

    \begin{figure}[t!]
        \centering
        \includegraphics[width=0.35\linewidth]{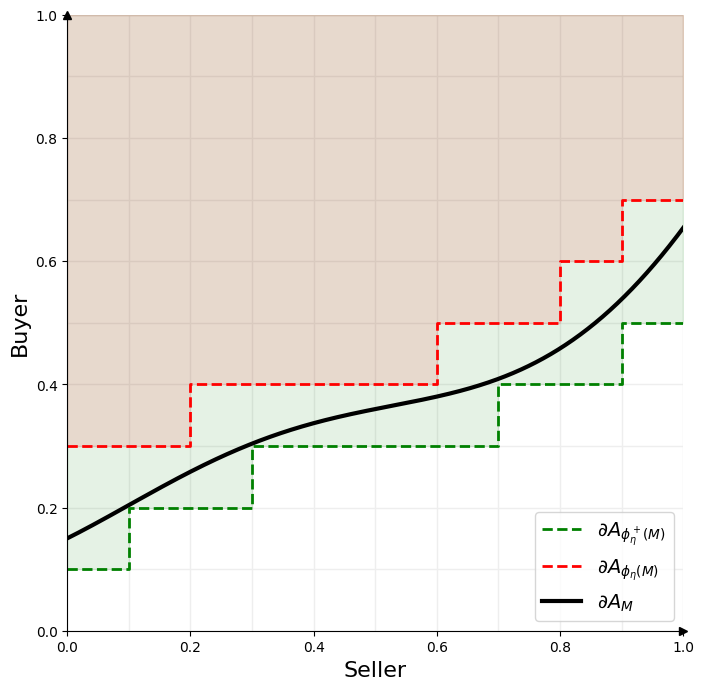}
        \caption{$\phi_{\eta}(M)$ and $\phi^+_\eta(M)$ for a specific mechanism $M$, for $\eta = \nicefrac{1}{10}$.}
        \label{fig:brackets}
    \end{figure}

    Term (D) is instead bounded by considering the same reasoning used at the end of the proof of \Cref{lem:lp_net}, in order to obtain a bound for $p=1$. Specifically, a slight adaptation of \Cref{cl:bound_cl_ro} gives that $C_i$ and $R_i$, in this case defined for the difference between the pricing functions of the inner and outer approximations, have at most $2^{i+3}$ elements each. However, with this value of $p$, the terms summed over the series are constant, so we cannot get any bound from that approach. What we can do is instead taking the sum until $i=\log_2(\nicefrac{1}{\eta})$, so that the remainder error is bounded by $\eta$ and we have the following: 

    \begin{equation*}
        \E{\lvert \prof(\phi_\eta(M), v)-\prof(\phi_\eta^+(M), v)\rvert}\leq \eta + \nicefrac{\eta}{\sigma}\cdot 32\cdot \log_2(\nicefrac{1}{\eta}) + \nicefrac{2\eta}{\sigma}\leq 7\acc\cdot \log(\nicefrac{8}{\acc\sigma})
    \end{equation*}

    Putting everything together and by a union bound with the two events controlling (A) and (B), we get, provided $n\geq n_0$: 
    $$
    \Psub{S}{\sup_{M\in \M}\prof(M, S)-\Esub{v}{\prof(M, v)} > 10\acc\cdot \log(\nicefrac{8}{\acc\sigma})}\leq \fail
    $$
    We then perform the same analysis for the other side of the inequality, observing that we do not need any additional high probability statements. The claimed result follows by rescaling $\acc$, and thus $n_0$, to remove the $10\log(\nicefrac{8}{\acc\sigma})$ factor.

    \end{proof}

\section{Missing Proofs}
\label{app:missing}

\subsection[Missing Proofs from Section 2.1]{Missing Proof from \Cref{sec:preliminaries}}
\label{app:preliminaries}

\fixed*
    \begin{proof}
        Consider any DSIC and IR mechanism $M$ that enforces budget balance. The point $(0,1)$ belongs to the allocation region $A_M$ of $M$ (otherwise $M$ never trades, with zero $\gft$), and results in payment $\underline q$ for the buyer and $\overline p$ for the seller, with $\underline q \ge \overline p$ (budget balance). Consider any $p \in [\overline p,\underline q]$ and the corresponding fixed price mechanism $M_p$. By monotonicity, the allocation region $A_M$ is contained in that of $M_p$, so that the latter performs at least the same trades as $M$ (and thus achieves at least the same $\gft$), while still enforcing budget balance. 
    \end{proof}

\subsection[Missing Proofs from Section 2.2]{Missing Proof from \Cref{subsec:meta}}
\label{app:meta}

\meta*
    \begin{proof}
        We consider the exponential mechanism described above, which we denote with $\cA$ for simplicity. $\cA$ uses $f(\cdot, S)$ as the score function, and as the supporting distribution, the uniform one over $\Mnet$. 

        For any set of samples $S$, denote with $\opt_S$ the $f$-value of the best $M \in \Mnet$ over $S$, and with $M_S^\star$ the associated best mechanism. We introduce the first clean event, namely 
        \[
            \clean_{\text{EXP}} = \{\opt_S - f(\cA(S), S) \le \nicefrac{\acc}4\}.
        \]
        For simplicity, for any set of samples $S$, we denote with $\M_{\text{sub}}$ the mechanisms in $\Mnet$ whose $f$-value over the samples $S$ is more than an additive $\nicefrac{\acc}4$ term away from $\opt_S$. We have the following chain of inequalities:
        \begin{align*}
            \P{\clean_{\text{EXP}}^C} &= \frac{\sum_{M \in \M_{\text{sub}}}\exp\left(\varepsilon\cdot \frac{f(M, S)}{\nicefrac 4n}\right)}{\sum_{M \in \Mnet}\exp\left(\varepsilon\cdot \frac{f(M, S)}{\nicefrac 4n}\right)} \tag{Design of $\cA$}\\
            &\le |\M_{\text{sub}}|\frac{\exp\left(\varepsilon\cdot \frac{\opt_S - \nicefrac{\acc}{4}}{\nicefrac 4n}\right)}{\sum_{M \in \Mnet}\exp\left(\varepsilon\cdot \frac{f(M, S)}{\nicefrac 4n}\right)} \tag{By definition of $\M_{\text{sub}}$}
            \\
            &\le |\Mnet|\frac{\exp\left(\varepsilon\cdot \frac{\opt_S - \nicefrac{\acc}{4}}{\nicefrac 4n}\right)}{\exp\left(\varepsilon\cdot \frac{\opt_S}{\nicefrac 4n}\right)} \tag{By definition of $\opt_S$ and because $\M_{\text{sub}} \subseteq \Mnet$}\\
            &= |\Mnet| e^{-\varepsilon\cdot \frac{{\acc}n}{16}} \le \frac{\fail}2,
        \end{align*}
        where the last inequality follows by our assumption on the number of samples $n.$ 

        We move our attention to the second clean event of our analysis, namely that the uniform convergence property (ii) is indeed realized. Let $\cE$ be given by the intersection of the two clean events. By a simple union bound, we have that $\cE$ has probability at least $1-\fail$. Assuming that $\cE$ is realized, we can prove that $\cA$ is $(\acc,\fail)$-optimal; the privacy guarantee is derived automatically by the properties of the exponential mechanism. We have the following chain of inequalities: 
        \begin{align*}
            \sup_{M \in \M} \Esub{v}{f(M,v)} &\le \max_{M \in \Mnet} \Esub{v}{f(M,v)} + \frac{\acc}4 \tag{By property (i)}\\
            &\le \opt_S + \frac{1}2\acc \tag{By property (ii) and optimality of $\opt_S$}\\
            &\le f(\cA(S), S) + \frac{3}4\acc \tag{By the first clean event}\\
            &\le \Esub{v}{f(\cA(S), v)} + \acc. \tag{By property (ii)}
        \end{align*}
        The above chain of inequalities is verified when the two clean events are realized, which happens with probability at least $1-\fail$. This concludes the proof.
    \end{proof}

\subsection{Missing Proofs from \Cref{subsec:lower_profit}}
\label{app:missing_lower_profit}
\lbprofits*
\begin{proof}
    In the following, let $v_i (t) = (t\xi_i, \nicefrac{1}{2}+\nicefrac{t}{2})$, for any $t\in [0, 1]$. Notice that $t\sim \text{Unif~[0, 1]}$ implies $v_i(t)\sim \cD_i$. Correspondingly, for any $t\in [0, 1]$, let $f_i(t)=\nicefrac{1}{2}+\nicefrac{t}{2}-t\xi_i$ be the maximum $\prof$ achievable from the valuation $v_i(t)$, i.e. the one achieved by $M_i$. Recall that the maximum profit achievable for $\cD_i$ is thus $\opt_i = \int_0^1 \nicefrac{1}{2}+\nicefrac{t}{2}-t\xi_i dt = \nicefrac{3}{4}-\nicefrac{\xi_i}{2}$. 
    
    We proceed by cases, depending on the relation between $i$ and $j$. Both for $i>j$ and for $i<j$, we would like to show that the following upper bound holds: 
    \begin{equation}
    \label{eq:prof_up_lower_bound}
    \Esub{v\sim \cD_j}{\prof(M, v)}\leq \frac{2}{3} 
    \end{equation}
    
    First, assume $i<j$. Consider the set $B_i = \{t\in[0, 1]: v_i(t)\in A_M\}$. A consequence of near-optimality on $\cD_i$ is that the Lebesgue measure of $B_i$ is strictly lower bounded by $1-2\alpha'$. To see why this is the case, assume by contradiction that this is not the case. Notice that $f_i$ is increasing in $t$ if $\xi_i<\nicefrac{1}{2}$, so that, among all subsets $B$ of $[0, 1]$ of measure at most $1-2\alpha'$, the one maximizing $\int_B f_i(t)dt$ is $B = [2\alpha', 1]$. This means that: 
    \begin{align*}
    \Esub{v\sim \cD_i}{\prof(M, v)}&\leq \int_{2\alpha'}^1 f_i(t)dt = \opt_i - \int_0^{2\alpha'}f_i(t)dt \\
    &= \opt_i - \left(\alpha'+2\alpha'^2\left(\frac{1}{2}-\xi_i\right)\right) < \opt_i -\alpha',
    \end{align*}
    where the last inequality follows by the upper bound on $\alpha'$ and $\delta$, noticing that $\xi_i\leq \delta$. We get a contradiction with $M$ being $\alpha'$ optimal on $\cD_i$, and the lower bound on the measure of $B_i$ holds.
    
    Now set $\rho_1 = \nicefrac{\xi_j}{\xi_i} = \delta^{j-i}\leq \delta$. Notice that the interval $[\rho_1, \nicefrac{1}{3}]$ has measure at least $\nicefrac{1}{3}-\delta > \nicefrac{1}{3}-\nicefrac{1}{32} > \nicefrac{1}{16}>2\alpha'$, which implies $B_i\cap [\rho_1, \nicefrac{1}{3}]\neq \emptyset$, since $B_i$ has measure at least $1-2\alpha'$. Pick $t_0\in B_i\cap [\rho_1, \nicefrac{1}{3}]$; by definition of $B_i$, we have $v_i(t_0) = \left(t_0\xi_i, \nicefrac{1+t_0}{2}\right)\in A_M$. Now, since $t_0\geq \rho_1$, we have $t_0\xi_i\geq \rho_1\xi_i = \xi_j$. Considering any $s\in [0, 1]$, we get $s\xi_j\leq t_0\xi_i$, which by monotonicity of $A_M$ implies $(s\xi_j, \nicefrac{1+t_0}{2})\in A_M$. By definition of Myerson payments, we thus get the buyer payment satisfying $q_M(s\xi_j, z)\leq \nicefrac{1+t_0}{2} \leq \nicefrac{1}{2}+\nicefrac{1}{6} = \nicefrac{2}{3}$ for any buyer coordinate $z \in [0, 1]$. This implies that $\prof(M, v_j(s))\leq q_M(v_j(s))\leq \nicefrac{2}{3}$. Taking the expectation over $\cD_j$, \Cref{eq:prof_up_lower_bound} holds.

    Assume now that $i>j$, so that $\ell_i$ is now the steeper line of the two. Let $\rho_2 = \frac{\xi_i}{\xi_j} = \delta^{i-j}\leq \delta$. We first prove that $A_M\cap \{v_j(s), s\in [\rho_2, \nicefrac{1}{2}]\} = \emptyset$. We prove this by contradiction. Assume that $v_j(s) = (s\xi_j, \nicefrac{1+s}{2})\in A_M$ for some $s\in [\rho_2, \nicefrac{1}{2}]$. Since $s\geq \rho_2$, we have $s\xi_j\geq \rho_2\xi_j = \xi_i$, which in turn implies, for any $t\in [0, 1]$, $t\xi_i\leq \xi_i\leq s\xi_j$. As for the first case of this proof, we then use monotonicity of $A_M$ to claim that $(t\xi_i, \nicefrac{1+s}{2})\in A_M$, for any $t\in [0, 1]$. Consequently, $q_M(t\xi_i, z)\leq \nicefrac{1+s}{2}$ for any $z\in [0, 1]$. Again by monotonicity, $v_i(t) = (t\xi_i, \nicefrac{1+t}{2}) \in A_M$ for any $t\geq s$, meaning that $p_M(t\xi_i, \nicefrac{1+t}{2})\geq t\xi_i$. Together, the upper bound on $q_M$ and the lower bound on $p_M$ imply that $\prof(M, v_i(t))\leq \nicefrac{1+s}{2}-t\xi_i$, for any $t\in [s, 1]$. The average loss with respect to the maximum profit achievable for $\cD_i$ is then lower bounded by: 
    $$
    \int_{s}^1f_i(t)-\prof(M, v_i(t))dt\geq \int_s^1\frac{t-s}{2}dt = \frac{(1-s)^2}{4} \geq \frac{1}{16}> \alpha',
    $$
    where we have used that $s\leq \nicefrac{1}{2}$ and that $\alpha'<\nicefrac{1}{32}$. This contradicts $\alpha'$-optimality of $M$ on $\cD_i$, so $A_M\cap \{v_j(s), s\in [\rho_2, \nicefrac{1}{2}]\}$ is empty. Rephrasing, $A_M\cap \ell_j \subseteq \{v_j(s), s\in [0, \rho_2)\cup(\nicefrac{1}{2}, 1]\}$, but the set $ C= [0, \rho_2)\cup(\nicefrac{1}{2}, 1]$ has measure $\nicefrac{1}{2}+\rho_2\leq \delta+ \nicefrac{1}{2}<\nicefrac{2}{3}$, where we are using that $\delta<\nicefrac{1}{32}$. It follows that \Cref{eq:prof_up_lower_bound} must hold, since, using that $\prof$ is upper bounded by $1$, we have $\Esub{v\sim \cD_j}{\prof(M, v)}= \int_C \prof(M, v_j(s))ds\leq \nicefrac{2}{3}$.
\end{proof}

\subsection{Missing Proofs from \Cref{subsec:restriction_profit}}
\label{app:missing_net}

    \cardbound*
        \begin{proof}
            Consider the directed graph $G_{\eta}$ whose nodes are given by the uniform grid $V_{\eta}$, and whose edges are the sides of the tiles of $\cT_{\eta}$, pointing either from left to right (for horizontal edges) or bottom-up (for vertical ones). 
            Each $\eta$-simple mechanism is identified by a path on this graph going from $(0,0)$ to $(1,1)$, which --- together with the left and upper side of $[0,1]^2$ --- encloses the allocation region. We then only need to count the number of paths of this type. Each path contains at most $\nicefrac 2 \eta$ edges, equally divided into horizontal and vertical edges. All in all, we have: 
            \[
                |\M_{\eta}| = \binom{\nicefrac 2 \eta}{\nicefrac 1 \eta} \le \left(2e\right)^{\nicefrac{1}{\eta}},    
            \]
            where we use the inequality $\binom{n}{k} \le \left(\nicefrac{en}{k}\right)^{k}$. 
        \end{proof}
        
    We start by defining the boundary of an allocation region. We define it in the following way, for a monotone region $A$: 
    $$
    \partial A = \{(x, y)\in A: x = \max \{z\in [0, 1]: (z, y)\in A\}\lor y = \min \{z\in [0, 1]: (x, z)\in A\}\}
    $$
    Intuitively, the boundary is the set of points whose corresponding coordinate is the Myerson payment for a specific point in $A$. We also define $K_{M, \eta}$ as the subset of $\eta$-grid tiles in $\cT_\eta$ which have a non-empty intersection with $\partial A_M$. To avoid double counting and ambiguity, if the intersection with the tile $B$ is only a vertical or horizontal segment, and the segment is shared with the boundary of another tile $B'$, then $B\in K_{M, \eta}$ if it is to the left or to the top of $B'$, otherwise we let $B'\in K_{M, \eta}$.

    \begin{claim}
        \label{cl:cardinality}
            Consider any mechanism $M \in \M$, then $\P{v\in A_{\phi_\eta(M)}\setminus A_M} \le \nicefrac{2\eta}{\sigma}.$
        \end{claim}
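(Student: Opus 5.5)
The plan is to show that the event $\{v\in A_{\phi_\eta(M)}\setminus A_M\}$ is contained in a Lebesgue-null set. Smoothness then makes its probability zero, which is trivially at most $\nicefrac{2\eta}{\sigma}$. The only real work is checking \Cref{def:approximations_prof} carefully, including its conventional fallback. Let $A_M$ be the allocation region of $M$ and $A_\phi$ that of $\phi_\eta(M)$. I would split according to whether the union in \Cref{def:approximations_prof} is empty.

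\emph{Case 1: the union of tiles is non-empty.} Then $A_\phi=\bigcup\{T\in\cT_\eta : T\subseteq A_M\}$. Every tile in this union is, by the selection rule itself, a subset of $A_M$, so the union is also a subset of $A_M$. Hence $A_\phi\setminus A_M=\emptyset$ and the probability is $0$.

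\emph{Case 2: the union is empty.} By convention, $A_\phi=\{(0,1)\}$, so $A_\phi\setminus A_M\subseteq\{(0,1)\}$. If $A_M\neq\emptyset$, pick any $x=(x_1,x_2)\in A_M$. The point $y=(0,1)$ satisfies $x_1\ge 0=y_1$ and $x_2\le 1=y_2$, so monotonicity (\Cref{def:monotone_allocation}) gives $(0,1)\in A_M$, and the difference is again empty. If $A_M=\emptyset$, the difference is the single point $\{(0,1)\}$. A single point has Lebesgue measure $0$, so \Cref{def:smoothness} gives $\P{v\in\{(0,1)\}}\le \nicefrac{0}{\sigma}=0$.

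In every case $\P{v\in A_{\phi_\eta(M)}\setminus A_M}=0\le \nicefrac{2\eta}{\sigma}$, which proves the claim as stated. There is no genuine obstacle. The one point needing care is the degenerate convention in \Cref{def:approximations_prof}, where the containment $A_\phi\subseteq A_M$ can fail, and smoothness (not merely the definition) is needed to dismiss the leftover null set.

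The main-text uses of this claim (\Cref{lemma:restriction}, \Cref{eq:v_notin_Aphi}) actually need the reverse difference $A_M\setminus A_{\phi_\eta(M)}$. That bound would instead come from covering it by the boundary tiles $K_{M,\eta}$. There are at most $\nicefrac{2}{\eta}$ of them by monotonicity of $\partial A_M$, each of mass at most $\nicefrac{\eta^2}{\sigma}$. That argument is separate from the statement proved here.
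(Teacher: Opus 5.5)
Your argument is correct for the claim exactly as printed. It takes a different and much shorter route than the paper, because the printed set difference is the reverse of what the paper's proof actually controls. You use only that $\phi_\eta(M)$ keeps tiles contained in $A_M$, so $A_{\phi_\eta(M)}\setminus A_M$ is empty. The one exception is the empty-union convention with $A_M=\emptyset$, where the difference is the single point $(0,1)$, which is null by smoothness. This is sound and even gives the value $0$.

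The paper's proof instead asserts $A_M\setminus A_{\phi_\eta(M)}\subseteq\bigcup_{B\in K_{M,\eta}}B$ and bounds the mass of these boundary tiles. That reversed inclusion is exactly how the claim is used in \Cref{lemma:restriction} and \Cref{eq:v_notin_Aphi}. So the order in the statement (and in the paper's final display) is a typo, and the substantive content is the argument you only outline in your last paragraph. Your route yields a sharp fact that is useless downstream. The paper's route yields the nontrivial $\nicefrac{2\eta}{\sigma}$ bound on the trades lost by the inner approximation.

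To turn your outline into a proof of the intended claim, you need the paper's key step: monotonicity of $\partial A_M$. Suppose $(x,y),(x',y')\in\partial A_M$ with $x'>x$ and $y'<y$. Monotonicity of $A_M$ puts $(x',y)$ and $(x,y')$ in $A_M$. Then $(x,y)$ is neither the row-maximum at height $y$ nor the column-minimum at abscissa $x$, a contradiction. (The inequality $x'>x$ must be strict, since the boundary can contain vertical segments.) Hence the boundary advances by at most $\nicefrac{1}{\eta}$ tile-steps rightward and $\nicefrac{1}{\eta}$ upward, and meets at most $\nicefrac{2}{\eta}$ tiles, each of mass at most $\nicefrac{\eta^2}{\sigma}$.

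You should also check the covering, which the paper merely asserts. If a tile $T$ contains some $a\in A_M$ and some $b\notin A_M$, then either the column-minimum of $A_M$ at abscissa $a_1$ or the row-maximum of $A_M$ at height $b_2$ lies in $T$, so $T$ meets $\partial A_M$.
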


\begin{proof}
    The key point is proving that if $(x, y), (x', y')\in \partial A_M$ and $x'\geq x$, then $y' \geq y$. By contradiction, assume that $y'<y$. By definition of monotone region, it must then be that $(x', y), (x, y')\in A_M$. If this is the case, then $x \neq \max \{z\in [0, 1]: (z, y)\in A\}$ and $y\neq \min \{z\in [0, 1]: (x, z)\in A\}$. Given that none of these two conditions hold, $(x, y)\notin \partial A_M$ and we reach a contradiction. $\partial A_M$ has thus only $\nicefrac{1}{\eta}$ \textit{right} movements and $\nicefrac{1}{\eta}$ \textit{upward} movements of variation $\eta$ available, otherwise we would be contradicting either $\partial A_M \subset [0, 1]^2$ or the monotone property. Moreover, each of these movements can count only once in increasing the cardinality of $K_{M, \eta}$. There are therefore only at most $\nicefrac{2}{\eta}$ cells in $K_{M, \eta}$, and each square in $K_{M, \eta}$ is induced by the $\eta$-grid, so it has Lebesgue measure $\eta^2$. Since $ A_M \setminus A_{\phi_\eta(M)}\subseteq \cup_{B\in K_{M, \eta}} B$, we
        We thus get: 
         $$
             \P{v \in A_{\phi_\eta(M)}\setminus A_M} \leq \P{v \in \bigcup_{B\in K_{M, \eta}}B} \leq\frac{1}{\sigma} \cdot \cL\left(\bigcup_{B\in K_{M, \eta}}B\right) \leq \frac{1}{\sigma} \cdot \eta^2 \frac{2}{\eta} = \frac{2\eta}{\sigma}
         $$
        
\end{proof}

We now proceed to proving that the number of columns and rows in the $\eta$-grid resulting in a significant approximation error is contained and can thus be neglected, allowing the proof of \Cref{lem:lp_net}.

    \begin{claim}
    \label{cl:bound_cl_ro}
        Given that $\eta = 2^{-H}$ for some integer $H$, we have that $\max(\lvert C_i\rvert, \lvert R_i\rvert)\leq 2^{i+2}$.
    \end{claim}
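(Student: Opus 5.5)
We treat columns; rows follow by the symmetric argument with the roles of the two coordinates exchanged. Fix $i$ and a column $c=[k\eta,(k+1)\eta]\times[0,1]$ in $C_i$, so $M_c>2^{-i-1}$. We first describe what a large gap means geometrically. For $v=(x,y)\in c\cap A_\phi$, the price $q_\phi(v)$ is the lowest buyer coordinate of $A_\phi$ on the vertical line through $x$. Because $A_\phi$ is a union of tiles, this value is constant on the column, equal to some grid level $y_c^\phi$. The price $q_M(v)$ is the lowest point of $A_M$ on the same vertical line, and it is nondecreasing in $x$ by monotonicity. Since $A_\phi\subseteq A_M$, we have $q_M\le q_\phi$, so the gap is nonnegative. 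The supremum over $x$ in the column is essentially attained at the left edge $x=k\eta$: there $q_M(k\eta,\cdot)=:y_c^M$ is smallest.

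Hence $M_c>2^{-i-1}$ says that the boundary $\partial A_M$ passes through the column at a height at least $2^{-i-1}$ below the lowest tile of $A_\phi$ in that column. The tiles of the column lying in this vertical window of length more than $2^{-i-1}$ are not fully contained in $A_M$. Because the tile at height $y^\phi_c-\eta$ is not in $A_M$ while the point $(k\eta, y^M_c)$ is, the boundary $\partial A_M$ must traverse the column. In particular, either at $x=(k+1)\eta$ the curve $q_M$ is still above $y^\phi_c-\eta$, or the boundary makes a vertical rise of length about $y^\phi_c-y^M_c-\eta\ge 2^{-i-1}-\eta$ within the column.

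To count, we charge each column $c\in C_i$ the vertical rise of $\partial A_M$ over the $x$-interval $[k\eta,(k+1)\eta]$. We claim the rise $q_M((k+1)\eta)-q_M(k\eta)$ is at least $2^{-i-1}-\eta$. The argument is this: the upper-right corner of the tile just below $y_c^\phi$ is not in $A_M$, since that tile is excluded from $A_\phi$ and $A_M$ is monotone (north-west closed). So some point of that tile lies outside $A_M$, and by monotonicity the corner $((k+1)\eta,y_c^\phi-\eta)$ lies outside $A_M$. Therefore $q_M$ just right of $(k+1)\eta$ (or at it) exceeds $y_c^\phi-\eta$.

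The total rise of the monotone boundary over $[0,1]$ is at most $1$, and the column $x$-intervals are disjoint. It follows that $|C_i|\,(2^{-i-1}-\eta)\le 1$. When $2^{-i-1}\ge 2\eta$ this gives $|C_i|\le 2^{i+2}$. When $2^{-i-1}<2\eta$, i.e.\ $2^{i}>2^{H-2}$, we instead use the trivial bound $|C_i|\le 1/\eta=2^H\le 2^{i+2}$. Here we use that $\eta=2^{-H}$ is a power of two, so the two regimes meet exactly.

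The main obstacle is the boundary bookkeeping. First, we must justify that the supremum in $M_c$ is controlled by values at the column edges; this needs care with closedness and with the convention for an empty union. Second, we must verify the corner argument, namely that the excluded tile forces $q_M$ at the right edge above $y_c^\phi-\eta$, including the degenerate case where the column contains no tile of $A_\phi$ (then $M_c$ is vacuous and $c\notin C_i$). Rows are handled identically using $p_M$, $p_\phi$ and the horizontal extent of $\partial A_M$, which again totals at most $1$.
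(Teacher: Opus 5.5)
Your proof is correct and takes the same approach as the paper: a gap $M_c>2^{-i-1}$ forces the boundary of $A_M$ to rise by more than $2^{-i-1}-\eta$ inside the column, because the tile just below the lowest $A_\phi$-tile is excluded; the total rise is at most $1$; and the remaining regime is covered by the trivial bound $|C_i|\le 1/\eta$. Your bookkeeping, which charges each column the increment $q_M((k+1)\eta)-q_M(k\eta)$, is cleaner than the paper's and avoids double counting; one slip is that the corner forced outside $A_M$ is the lower-right (south-east) corner $((k+1)\eta,y_c^\phi-\eta)$, whose coordinates you do write, not the upper-right corner, which lies in $A_\phi$.
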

    \begin{proof} 
    The only relevant case is $i\leq H-3$, otherwise the statement is trivial and follows since there are at most $\nicefrac 1 \eta$ columns or rows.
     We now argue that $\lvert C_i\rvert\leq 2^{i+2}$, but an identical reasoning shows that the same holds for $R_i$. 
    Notice that, if $M_c \in [2^{-i-1}, 2^{-i})$ for $c\in C$, $\partial A_M$ must have accumulated at least $2^{-i-1}-\eta$ vertical variation inside $c$. If this were not the case, there would be an upper tile from $\cT_\eta$ in $A_M$ inducing a smaller error and thus a contradiction. Since $\partial A_M$ is monotone, it has variation bounded by $1$, resulting in $1\geq \lvert C_i\rvert\cdot (2^{-i-1} -\eta)$. Manipulating the inequality, we get $\lvert C_i\rvert\leq \frac{2^{i+1}}{1-\eta\cdot 2^{i+1}}$. Since $i< H-2$ and $\eta = 2^{-H}$, then the denominator is bounded below by $\nicefrac{1}{2}$ and we get the result.
    \end{proof}

\subsection{Missing Proofs from \Cref{subsec:unif_convergence}}
\label{app:missing_uniform}

\telescope*

\begin{proof}
        
    \noindent \textbf{Step 1: Symmetrization} 
    Let $S'$ be an i.i.d. sample from the same distribution as $S$. By linearity, we have the equality $\Esub{v}{\prof(M, v)} = \Esub{S'}{\prof(M, S')}$. Therefore, we have the following: 

    \begin{align}
        \Esub{S}{\max_{M \in \hat\M} \Bigl\lvert\prof(M,S) - \Esub{v}{\prof(M, v)}\Bigr\rvert} & = \Esub{S}{\max_{M \in \hat\M}\Bigl\lvert\Esub{S'}{\prof(M,S) - \prof(M, S')}\Bigr\rvert} \notag\\
        & \leq \Esub{S, S'}{\max_{M \in \hat\M}\Bigl\lvert\prof(M,S) - \prof(M, S')\Bigr\rvert} \label{eq:step1telescope}
    \end{align}
    In the inequality we employed Jensen’s inequality twice: the absolute value function is convex, and the maximum function is also convex, as a map $\mathbb{R}^{\lvert \hat\M\rvert} \rightarrow \mathbb{R}$.
    
    \medskip
    \noindent \textbf{Step 2: Rademacher Complexity} Now, let $\boldsymbol{r} = \{r_i\}_{i=1}^n$ be a vector of independent Rademacher random variables\footnote{The Rademacher distribution puts equal probability on $-1$ and $1$.}. The following inequality holds.

    \begin{equation}
    \label{eq:step2telescope}
    \Esub{S, S'}{\max_{M \in \hat \M}\Bigl\lvert\prof(M,S) - \prof(M, S')\Bigr\rvert} \leq 2\Esub{S, \boldsymbol{r}}{\max_{M \in \hat \M}\left\lvert \frac{1}{n}\sum_{i=1}^n\prof(M,v_i)\cdot r_i\right\rvert}
    \end{equation}

    The proof follows from an equality in distribution. Specifically, we can argue that the following holds: 
    \begin{equation*}
        \{\prof(M,v_i) - \prof(M, v'_i)\}_{M\in \hat\M,\, i\in [n]} \overset{(d)}{=} \{(\prof(M,v_i) - \prof(M, v'_i))\cdot r_i\}_{M\in \hat\M,\, i\in [n]}
    \end{equation*}

   Proving the equality above boils down to the following observation: swapping one variable in $S$ with one variable in $S'$ is irrelevant. Let $T = S \cup S'$. At each iteration $i$, we select one pair of variables at a time from the set $T$, inducing another partitioning, made up of the sets $T_1 = \{t^1_i\}_{i=1}^n$ and $T_2 = \{t^2_i\}_{i=1}^n$. If $r_i = -1$, then we let $t^1_i = v'_i$ and $t^2_i = v_i$; if $r_i=1$ the opposite happens. 
    
    Now, we can define a function $g: [0,1]^{2n} \rightarrow [-2, 2]^{n\cdot \lvert \hat\M \rvert}$ such that $g(T_1, T_2) = \{\prof(M,t_i^1) - \prof(M, t^2_i)\}_{M\in \hat\M,\, i\in [n]}$. For any Borel set $B \subseteq [-2, 2]^{n\cdot \lvert \hat\M \rvert}$, we have:
    $$
    \P{g(S, S') \in B} = \P{g(T_1, T_2) \in B\, | \, \boldsymbol{r}},
    $$ 
    since $T$ is i.i.d.. We thus get $\P{g(S, S') \in B}= \P{g(T_1, T_2) \in B}$.
    The claimed equality in distribution then follows by noticing that $\{\prof(M,t_i^1) - \prof(M, t^2_i)\} = \{(\prof(M,v_i) - \prof(M, v'_i))\cdot r_i\}$, by the way we have designed the splitting procedure of $T$ in $T_1$ and $T_2$.

    To conclude, we compose together the maximum function over $\hat\M$, the absolute value function and the sample averaging, and then apply the result onto the two equally distributed random vectors, resulting in the following, which suffices to complete the proof for this step: 
    $$
    \max_{M\in \hat\M}\lvert \prof(M, S)-\prof(M, S')\rvert \overset{(d)}{=} \max_{M\in \hat\M}\left\lvert \frac{1}{n}\sum_{i=1}^n(\prof(M, v_i)-\prof(M, v'_i))\cdot r_i\right\rvert
    $$

    \medskip
    \noindent \textbf{Step 3: Gaussian Complexity} The next step is relating the Rademacher complexity to the Gaussian complexity. Specifically, let $\boldsymbol{g} = \{g_i\}_{i=1}^n$ be a vector of standard Gaussian variables, the following inequality then holds:

    \begin{equation}
    \label{eq:step3telescope}
    \Esub{S, \boldsymbol{r}}{\max_{M \in \hat \M}\left\lvert \frac{1}{n}\sum_{i=1}^n\prof(M,v_i)\cdot r_i\right\rvert} \leq \sqrt{\frac{\pi}{2}}\Esub{S, \boldsymbol{g}}{\max_{M \in \hat \M}\left\lvert \frac{1}{n}\sum_{i=1}^n\prof(M,v_i)\cdot g_i\right\rvert}
    \end{equation}
    By a reasoning almost identical to what we did for Step 2, now using symmetry of the gaussian law instead of the exchangeability of the i.i.d. draw, we get the following equality in the joint distributions: 
    \begin{equation*}
        \{\prof(M,v_i)\cdot g_i\}_{M\in \hat\M,\, i\in [n]} \overset{(d)}{=} \{\prof(M,v_i)\cdot \lvert g_i\rvert \cdot r_i\}_{M\in \hat\M,\, i\in [n]}
    \end{equation*} 
    
    Now we use this equality in distribution to prove the claimed inequality.

    \begin{align*}
        & \Esub{S, \boldsymbol{g}}{\max_{M\in \hat\M}\left\lvert \frac{1}{n}\sum_{i=1}^n(\prof(M, v_i)-\prof(M, v'_i))\cdot g_i \right\rvert}\\ & = \Esub{S, \boldsymbol{g}, \boldsymbol{r}}{\max_{M\in \hat\M}\left\lvert \frac{1}{n}\sum_{i=1}^n(\prof(M, v_i)-\prof(M, v'_i))\cdot \lvert g_i \rvert \cdot r_i\right\rvert} \\
        & \ge \Esub{S, \boldsymbol{r}}{\max_{M\in \hat\M}\left\lvert \frac{1}{n}\sum_{i=1}^n(\prof(M, v_i)-\prof(M, v'_i))\cdot \E{\lvert g_i\rvert}\cdot r_i\right\rvert} \\
        & = \sqrt{\frac{2}{\pi}}\Esub{S, \boldsymbol{r}}{\max_{M\in \hat\M}\left\lvert \frac{1}{n}\sum_{i=1}^n(\prof(M, v_i)-\prof(M, v'_i))\cdot r_i\right\rvert} \tag{$\E{\lvert g_i\rvert} = \sqrt{\nicefrac{2}{\pi}}$},
    \end{align*}
    where in the second to last inequality we have used Jensen's inequality.

    \medskip
    \noindent \textbf{Step 4: Telescoping} We complete the proof by telescoping $\prof$ for a mechanism in $\hat\M$ across all the levels of approximations (i.e. with $h$ ranging from $0$ to $H = \log_2(\nicefrac{8}{\acc\sigma})$). Our target is the following inequality.

    \begin{align}
    & \Esub{S, \boldsymbol{g}}{\max_{M\in \hat\M}\left\lvert \frac{1}{n}\sum_{i=1}^n(\prof(M, v_i)\cdot g_i\right\rvert} - \frac{2}{\sqrt{\pi n}}\notag\\
    &\leq \sum_{h=0}^{H-1}\Esub{S,\textbf{g}}{\max_{M \in \hat \M} \left\lvert \frac{1}{n}\sum_{i=1}^n(\prof(\phi_{h+1}(M),v_i) - \prof(\phi_h(M),v_i)) \cdot g_i\right\rvert}\label{eq:step4telescope}
    \end{align}

    As we were hinting at before, for any mechanism in $\hat\M$, we have that $\phi_H(M)=M$. Therefore, for any point $v\in [0, 1]^2$, we have: 
    \begin{equation*}
        \prof(M, v) = \prof(\phi_0(M), v)+\sum_{h=0}^{H-1}(\prof(\phi_{h+1}(M), v)-\prof(\phi_h(M), v))
    \end{equation*}
    By \Cref{def:approximations_prof}, we have that $\phi_0(M)$ is a constant function, either equal to $-1$ or $0$. Therefore, the previous decomposition suffices to show, by the triangle inequality on the maximum function and the absolute value function, this inequality:
    \begin{align*}
        &\max_{M\in \hat\M}\left\lvert \frac{1}{n}\sum_{i=1}^n(\prof(M, v_i))\cdot g_i\right\rvert \\
        & = \max_{M\in \hat\M}\left\lvert \frac{1}{n}\sum_{i=1}^n\prof (\phi_0(M), v_i)\cdot g_i + \frac{1}{n}\sum_{i=1}^n\sum_{h=0}^{H-1}(\prof(\phi_{h+1}(M), v_i)-\prof(\phi_h(M), v_i))\cdot g_i\right\rvert \\
        &\leq \max_{M\in \hat\M}\left\lvert \frac{1}{n}\sum_{i=1}^n\prof (\phi_0(M), v_i)\cdot g_i\right\rvert + \max_{M\in \hat\M} \left\lvert \frac{1}{n}\sum_{i=1}^n\sum_{h=0}^{H-1}(\prof(\phi_{h+1}(M), v_i)-\prof(\phi_h(M), v_i))\cdot g_i\right\rvert\\
        & \leq \frac{1}{n}\left \lvert \sum_{i=1}^n g_i\right \rvert +\sum_{h=0}^{H-1}\max_{M \in \hat \M} \left\lvert \frac{1}{n}\sum_{i=1}^n(\prof(\phi_{h+1}(M),v_i) - \prof(\phi_h(M),v_i)) \cdot g_i\right\rvert
    \end{align*}
    To bound the expectation of the first term, we take the expectation of the absolute value of a gaussian variable with variance equal to $n$, which is equal to $\sqrt{\nicefrac{2n}{\pi}}$.

    \medskip
    \noindent \textbf{Step 4: Wrapping up} Combining \Cref{eq:step1telescope,eq:step2telescope,eq:step3telescope,eq:step4telescope}, we immediately obtain the claim.
\end{proof}

\unifconv*
    \begin{proof}
    By using the law of total expectation, we have the following inequality.
        \begin{align}
        \nonumber
        &\Esub{S}{\max_{M \in \Mnet} \Bigl\lvert\prof(M,S) - \Esub{v}{\prof(M, v)}\Bigr\rvert} - \frac{2}{\sqrt n}\\
        \nonumber
        & \leq \sqrt{2\pi}\sum_{h=0}^{H-1}\Esub{S,\textbf{g}}{\max_{M \in \Mnet} \left\lvert \frac{1}{n}\sum_{i=1}^n(\prof(\phi_{h+1}(M),v_i) - \prof(\phi_h(M),v_i)) \cdot g_i\right\rvert} \tag{by \Cref{cl:unif_to_telescope}} \\
        \nonumber
        &\le \sqrt{2\pi}\sum_{h=0}^{H-1}\Esub{S,\textbf{g}}{\max_{M \in \Mnet} \left\lvert \frac{1}{n}\sum_{i=1}^n(\prof(\phi_{h+1}(M),v_i) - \prof(\phi_h(M),v_i)) \cdot g_i\right\rvert \bigg | \clean} \\
        & \quad +\sqrt{2\pi}\sum_{h=0}^{H-1}\Esub{S,\textbf{g}}{\max_{M \in \Mnet} \left\lvert \frac{1}{n}\sum_{i=1}^n(\prof(\phi_{h+1}(M),v_i) - \prof(\phi_h(M),v_i)) \cdot g_i\right\rvert \bigg | \clean^c} \P{\clean^c}
        \label{ineq:step_0_final}
        \end{align}
        We bound the two terms in the right-most term separately. For the term corresponding to the clean event, we simply apply the bound of \Cref{cl:telescop_conc} for each of the $H=\log_2({\nicefrac8{\sigma\alpha}})$ levels: 
        \begin{equation}
            \label{eq:step_1_final}
            \sqrt{2\pi}\sum_{h=0}^{H-1}\Esub{S,\textbf{g}}{\max_{M \in \Mnet} \left\lvert \frac{1}{n}\sum_{i=1}^n(\prof(\phi_{h+1}(M),v_i) - \prof(\phi_h(M),v_i)) \cdot g_i\right\rvert \bigg | \clean} \leq 22\sqrt{\frac{2\pi}{{\sigma n}}}\cdot \log_2\left({\frac8{\sigma\alpha}}\right)
        \end{equation}

        To take care of the second term, corresponding to the complement of the clean event, we have a coarser bound on the variance of the gaussian variables involved, but a stringent bound on $\P{\clean^C}$ (by \Cref{cl:clean_event_control}). First, fix any realization of the sample $S$, the random variables
        \[
            \frac{1}{n}\sum_{i=1}^n(\prof(\phi_{h+1}(M),v_i) - \prof(\phi_h(M),v_i)) \cdot g_i 
        \]
        are still gaussian variables, but we can only apply a coarser bound on the variance:
        \begin{align*}
            &\Var{\frac{1}{n}\sum_{i=1}^n(\prof(\phi_{h+1}(M),v_i) - \prof(\phi_h(M),v_i)) \cdot g_i} \\
            &\quad \le \frac{1}{n}\max_{i=1}^n (\prof(\phi_{h+1}(M),v_i) - \prof(\phi_h(M),v_i))^2\le \frac{4}{n}.
        \end{align*}
        Concerning the cardinality of $\Mnet$, we have the upper bound provided by \Cref{prop:card_bound}: $|\Mnet| \le (2e)^{\nicefrac{8}{\sigma \acc}}$. We can therefore use once again the folklore bound of the $\max$ of the absolute value of gaussian variables (\Cref{prop:max}) to get:
        \begin{align}
            \sqrt{2\pi}&\sum_{h=0}^{H-1}\Esub{S,\textbf{g}}{\max_{M \in \Mnet} \left\lvert \frac{1}{n}\sum_{i=1}^n(\prof(\phi_{h+1}(M),v_i) - \prof(\phi_h(M),v_i)) \cdot g_i\right\rvert \bigg | \clean^c} \P{\clean^c}
            \nonumber \\
            &\le \sqrt{2\pi} \cdot 16 \cdot \log_2({\nicefrac8{\sigma\alpha}}) \cdot \frac{1}{\sqrt{n\sigma \acc}}\P{\clean^c} \le \sqrt{2\pi}\cdot 16 \cdot  \log_2({\nicefrac8{\sigma\alpha}})\cdot \sqrt{\frac{\alpha}{\sigma n}},
            \label{eq:step_2_final}
        \end{align}
        where the last inequality follows from \Cref{cl:clean_event_control}. Plugging in \Cref{eq:step_1_final,eq:step_2_final} into \Cref{ineq:step_0_final}, we get:
        \begin{equation*}
            \Esub{S, v}{\max_{M \in \Mnet} \Bigl\lvert\prof(M,S) - \E{\prof(M, v)}\Bigr\rvert} \leq \sqrt{2\pi}\log_2({\nicefrac8{\sigma\alpha}})\left(16\sqrt{\frac{\alpha}{\sigma n}} + \frac{24}{\sqrt{\sigma n}}\right) \le 40 \log_2({\nicefrac8{\sigma\alpha}})\sqrt{\frac{2 \pi}{\sigma n}}.
        \end{equation*}
        The statement of the claim follows by recalling the lower bound on the sample number $n\ge \frac{35^3}{\sigma\alpha^2}$, which implies $\alpha \ge 210\sqrt{\nicefrac{1}{\sigma n}}$.
    \end{proof}

\subsection{Missing Proofs from Section 3.4}
\label{subsec:missing35}
The primary feature of the set of edge weights defined in \Cref{def:weight_edges} is described by the following lemma, which establishes a relationship between the probability of a complete path $\pi$ under the measure $\mu$ and the weight of each edge that constitutes it. 
    \begin{lemma}
    \label{lem:decompose_path}
    Let $S=\{v_i\}_{i=1}^n$ be a sample of valuations from a smooth distribution. For every path $\pi \in \pathspace$, let $w_\pi = \exp(n\cdot \nicefrac{\eps}{4}\cdot \prof(M_\pi, S))$. Then the following equality almost surely holds: $w_\pi = \prod_{e\in \pi}w_e$.
    \end{lemma}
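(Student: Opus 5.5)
The plan is to prove the identity in logarithmic form. Taking logarithms of both sides and dividing by $\nicefrac{\eps}{4}$, it suffices to show that, almost surely,
\[
\sum_{i=1}^n \prof(M_\pi, v_i) \;=\; \sum_{\substack{e\in\pi\\ e \text{ horizontal at height } j\eta}} j\eta \sum_{i=1}^n \ind{v_i\in A_e} \;-\; \sum_{\substack{e\in\pi\\ e \text{ vertical at abscissa } k\eta}} k\eta \sum_{i=1}^n \ind{v_i\in A_e}.
\]
Since $\prof = q - p$, I will decompose by sample point and prove two separate per-point identities. The first, for the buyer, is $q_{M_\pi}(v) = \sum_{e\in\pi \text{ horiz.}} j_e\eta\,\ind{v\in A_e}$. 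The second, for the seller, is $p_{M_\pi}(v) = \sum_{e\in\pi \text{ vert.}} k_e\eta\,\ind{v\in A_e}$. Summing both over $i$ then gives the claim.

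\textbf{Step 1: structure of a complete path.} A complete path $\pi$ in $G_\eta$ is monotone, moving only rightward or upward. Hence it contains exactly one horizontal edge in each column $[k\eta,(k+1)\eta]\times[0,1]$, and exactly one vertical edge in each row $[0,1]\times[j\eta,(j+1)\eta]$. From \Cref{def:path_grid}, I read off the allocation region $A_\pi$: it consists of all points lying weakly above and to the left of $\pi$.

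\textbf{Step 2: the buyer identity.} Fix column $k$, and let $e$ be the unique horizontal edge of $\pi$ in that column, lying at height $j\eta$.
\begin{itemize}
\item \emph{Points above the edge.} Every point of the column with second coordinate at least $j\eta$ lies in $A_\pi$, so $A_e = [k\eta,(k+1)\eta]\times[j\eta,1]$ is contained in $A_\pi$.
\item \emph{Points below the edge.} Points of the column strictly below $j\eta$ and off the vertical grid lines are not in $A_\pi$.
\item \emph{Payment.} For $v$ in the interior of $A_e$, the south projection onto $\partial A_\pi$ is exactly the edge $e$. So by \Cref{def:payments}, $q_{M_\pi}(v) = j\eta$.
\end{itemize}
Points in other columns are not in $A_e$, and points outside $A_\pi$ have $q=0$. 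This gives the buyer identity.

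\textbf{Step 3: the seller identity.} The argument is symmetric. The unique vertical edge of $\pi$ in row $j$, at abscissa $k\eta$, has interest region $[0,k\eta]\times[j\eta,(j+1)\eta]$. This is exactly $A_\pi\cap(\text{row } j)$, and on it the east projection gives $p_{M_\pi}(v)=k\eta$.

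\textbf{Main obstacle: boundary points.} The delicate part is points lying on grid lines. Such a point may belong to two interest regions, or sit on the boundary of $A_\pi$ where the payment is determined by a neighboring tile. The same issue covers the degenerate convention in which the region is the single point $(0,1)$, or a path that runs along the left and top sides. I will handle all of these with the ``almost surely'' clause. The union of all grid lines has Lebesgue measure zero, so by $\sigma$-smoothness (\Cref{def:smoothness}) no sample point lands on it with probability one. On the complement of this null event, every $v_i$ lies in the interior of a unique tile, and the case analysis in Steps 2 and 3 is unambiguous.
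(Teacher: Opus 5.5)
Your proposal is correct and takes essentially the same route as the paper. The paper also argues that, up to the null set of grid lines (negligible by $\sigma$-smoothness), the interest regions of the horizontal edges of $\pi$ and those of the vertical edges each partition $A_{M_\pi}$, so every sample point in the allocation region lies in exactly one horizontal region $A_e$ and one vertical region $A_{e'}$, and its Myerson profit is $j\eta - k\eta$; your split into separate buyer and seller identities with the explicit column/row check is a more detailed write-up of that argument (strictly, $A_e$ coincides with $A_\pi$ intersected with row $j$ only off that row's top grid line, which your almost-sure clause already handles).
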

    \begin{proof}[Proof of \Cref{lem:decompose_path}]
        Proving the lemma is equivalent to proving that the following equality almost surely holds:
        \begin{equation}
        \label{eq:prof_decomposition_graph}
        \sum_{i=1}^n\prof(M_\pi, v_i) = \sum_{e\in \pi} c_e\lvert A_e \cap S\rvert,
        \end{equation}
        where $c_e$ is $-{k}{\eta}$ if $e$ is a vertical edge and ${j}{\eta}$ if $e$ is a horizontal edge. Let $\pi_h$ be the horizontal edges in $\pi$ and $\pi_v$ the vertical ones in $\pi$. To prove \Cref{eq:prof_decomposition_graph}, we first observe that the sets $\{A_e\}_{e\in \pi_v}$ and $\{A_e\}_{e \in \pi_h}$ form a partition of the allocation region of $M_\pi$, provided we disregard a finite number of null probability sets (the boundary of $[0, 1]^2$ and segments shared by pairs of interest regions).
        
        Consider a point $v \in S$. If $v \notin A_{M_\pi}$, then $M_\pi$ does not extract any profit from it and $v\notin A_e\, \forall\, e\in\pi$, consistently with \Cref{eq:prof_decomposition_graph}. If $v\in A_{M_\pi}$, then $v\in A_e\cap A_{e'}$, for a specific horizontal $e$ and a specific vertical $e'$ in $\pi$, and $\prof$ is exactly $c_e + c_{e'}$, by \Cref{def:payments} for Myerson payments.
        \end{proof}

    As intuitively explained in \Cref{subsec:wrapping}, the node weights from \Cref{def:weight_nodes} satisfy the property that $w_u$ is exactly the sum of the weights of all the paths that stem from $u\in V_\eta$, as proven in the following lemma.
    \begin{lemma}
    \label{lem:sum_future}
        For any node $u\neq (1, 1)$ in $V_\eta$, let $\pathspace^u$ be the set of paths from $u$ to $(1, 1)$. Then the following holds: $w_u = \sum_{\pi \in \pathspace^u}\prod_{e\in \pi}w_e$.
    \end{lemma}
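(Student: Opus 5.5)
We argue by (reverse) induction on the nodes of $G_\eta$, processed in reverse topological order. The graph $G_\eta$ is a DAG: every edge strictly increases the sum of the two coordinates by $\eta$. So we can induct on the quantity $d(u) = (2 - u_1 - u_2)/\eta$, the number of edges any path from $u$ to $(1,1)$ must use. It is convenient to treat $u=(1,1)$ as well: $\pathspace^{(1,1)}$ contains only the empty path, whose empty product equals $1 = w_{(1,1)}$. This serves as the base case and makes the recursion uniform.

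For the base case of the statement proper ($d(u)=1$), $u$ is either $(1-\eta,1)$ or $(1,1-\eta)$. Its only out-neighbor is $(1,1)$, reached by the single edge $e=(u,(1,1))$. The unique path in $\pathspace^u$ is this edge, so the right-hand side is $w_e$. By \Cref{def:weight_nodes}, $w_u = w_e \, w_{(1,1)} = w_e$, which matches.

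For the inductive step, fix $u \neq (1,1)$ with $d(u) = d \ge 1$ and assume the claim for every node with $d(\cdot) = d-1$, in particular for every $s \in \neigh{u}$. The key combinatorial fact is that the set $\pathspace^u$ is partitioned according to the first edge. Every path from $u$ to $(1,1)$ starts with exactly one edge $(u,s)$, $s \in \neigh{u}$, followed by a path in $\pathspace^s$. Conversely, prepending $(u,s)$ to any $\pi' \in \pathspace^s$ yields a distinct element of $\pathspace^u$. This gives a bijection between $\pathspace^u$ and $\bigsqcup_{s\in\neigh{u}} \{(u,s)\}\times\pathspace^s$. Using this bijection, together with multiplicativity of the product over edges and the inductive hypothesis,
\[
\sum_{\pi\in\pathspace^u}\prod_{e\in\pi}w_e=\sum_{s\in\neigh{u}} w_{(u,s)}\sum_{\pi'\in\pathspace^s}\prod_{e\in\pi'}w_e=\sum_{s\in\neigh{u}}w_{(u,s)}w_s = w_u,
\]
where the last equality is \Cref{def:weight_nodes}.

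There is no real obstacle. The only points needing care are that the recursion in \Cref{def:weight_nodes} is well founded, which holds because $G_\eta$ is acyclic with sink $(1,1)$, and the boundary nodes on the top row or right column, which have a single out-neighbor. Both are handled uniformly by the sum over $\neigh{u}$. The bijection argument is the heart of the proof, but it is routine path decomposition in a DAG.
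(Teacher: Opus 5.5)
Your proof is correct and follows essentially the same route as the paper: induction on the distance to $(1,1)$, with base case at $(1-\eta,1)$ and $(1,1-\eta)$. The inductive step splits $\pathspace^u$ by first edge and applies the recursion of \Cref{def:weight_nodes}. Your explicit treatment of the first-edge decomposition as a disjoint bijection, which is what justifies splitting the sum, is slightly more careful than the paper's union identity.
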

    \begin{proof}
        We prove the statement by induction over the unweighted shortest path distance $d$ from $(1, 1)$. If $d=1$, then there are only two nodes to consider, the ones directly connected to $(1, 1)$, meaning $(1-\eta, 1)$ and $(1, 1-\eta)$. The base case of the induction thus holds by \Cref{def:weight_nodes}, since for each node there is only one relevant edge and $w_{(1, 1)} =1$.
    
        For the inductive step, assume that the statement is true for all $d$ smaller than some $d'$, and consider a vertex $u$ at distance $d'$ from $(1, 1)$. Then the following holds: 
        $$
        w_u = \!\!\sum_{s\in \neigh{u}}w_{(u, s)}w_s = \!\sum_{s\in \neigh{u}}\!\!w_{(u, s)} \sum_{\pi\in \pathspace^s}\prod_{e\in \pi}w_e = \sum_{\pi \in \pathspace ^u}\prod_{e\in \pi}w_e, 
        $$
        where we have used the definition of $w_u$ in the first equality and the inductive hypothesis in the second, since $s$ must have distance $d= d'-1$ from $(1, 1)$ due to the way we have built the graph. The third follows by rearranging and noticing that: 
        $$
        \bigcup_{s\in \neigh{u}}\bigcup_{\pi \in \pathspace^s}\{\pi \cup (u, s)\} = \pathspace^u.
        $$ 
    \end{proof}
    
    We now have the two crucial ingredients to prove \Cref{theor:sampling}, we restate it here for convenience. 
    \sampling*
    \begin{proof}
    Point (i) and (ii) claimed in \Cref{theor:sampling} are immediate from the definition of $q_e$ in \Cref{eq:prob_definition}. Additionally, as stressed in \Cref{subsec:wrapping}, computing $\{q_e\}_{e\in E_\eta}$ takes only $O(\nicefrac{n}{\eta^2})$ time. The computation of the edge weights from \Cref{def:weight_edges} can be implemented by jointly scanning $S$ and the uniform $\eta$-grid, in $O(\nicefrac{n}{\eta^2})$ time; the node weights from \Cref{def:weight_nodes} can then be computed by dynamic programming, in $O(\nicefrac{1}{\eta^2})$ time.

    It remains to prove point (iii). By the chain rule of conditional probabilities, this would follow from the following equality, assuming that $(e_1, \dots, e_{i-1})$ is a path and that $\Pi \sim \mu$: 
    $$
    \mu((e_1, e_2, \dots, e_i)\in \Pi\, |\, (e_1, e_2, \dots, e_{i-1})\in \Pi) = q_{e_i}.
    $$
    We dedicate the remainder of the proof to proving the above. If $e_i$ is not connected to the endpoint of $e_{i-1}$, then both sides are zero. If $e_i$ is the only possible edge after $e_{i-1}$, meaning that the endpoint of $e_{i-1}$ has only one outgoing edge, then the equality is satisfied by definition of $q_{e_i}$: both sides of the equality are $1$. Therefore, let $u$ be the endpoint of $e_{i-1}$ and let $\neigh{u} = \{r, s\}$. Consider $e_i = (u, r)$, but the argument below holds the same for $(u, s)$.

    Let $\pathspace^{:i}\subseteq \pathspace$ be the set of complete paths that have $(e_1, e_2, \dots, e_i)$ as their prefix. Recall that $w_\pi = \exp(n\cdot \nicefrac{\eps}{4}\cdot \prof(M_\pi, S))$. Then, by definition of conditional probability:
    \begin{align*}
        \mu((e_1, e_2, \dots, e_i)\in \Pi\, |\, (e_1, e_2, \dots, e_{i-1})\in \Pi) &= \frac{\mu((e_1, e_2, \dots, e_i)\in \Pi)}{\mu((e_1, e_2, \dots, e_{i-1})\in \Pi)} \\
        & = \frac{\sum_{\pi \in \pathspace^{:i}}w_\pi}{\sum_{\pi' \in \pathspace^{:i-1}}w_{\pi'}}, 
    \end{align*}
    where the last equality follows by the way we have defined $\mu$ in the statement of \Cref{theor:sampling}. By the same definition, the denominator is always non-zero.
    
    To conclude, we use \Cref{lem:decompose_path,lem:sum_future}. Let $\pi_{i:}$ be the suffix of the complete path $\pi$ from the $i$-th edge included, and let $\pathspace^{: \overline i}\subseteq \pathspace$ be the set of complete paths having $(e_1, e_2, \dots, (u, s))$ as their prefix. We have this almost sure (with respect to the sampling of $S$) chain of equalities: 
    \begin{align*}
        \frac{\sum_{\pi \in \pathspace^{:i}}w_\pi}{\sum_{\pi' \in \pathspace^{:i-1}}w_{\pi'}} &= \frac{\sum_{\pi \in \pathspace^{:i}}\prod_{e\in \pi}w_e}{\sum_{\pi' \in \pathspace^{:i-1}}\prod_{e'\in \pi'}w_{e'}} \tag{\Cref{lem:decompose_path}} \\
        &= \frac{\prod_{j=1}^{i} w_{e_j}\sum_{\pi \in \pathspace^{:i}}\prod_{e\in \pi_{i+1:}}w_e}{\prod_{j'=1}^{i-1} w_{e_{j'}}\sum_{\pi' \in \pathspace^{:i-1}}\prod_{e'\in \pi'_{i:}}w_{e'}} \\
        &=\frac{w_{(u,r)}\sum_{\pi\in \pathspace^{:i}}\prod_{e\in \pi_{i+1:}}w_e}{w_{(u, s)}\sum_{\pi'\in \pathspace^{:\overline i}}\prod_{e'\in \pi'_{i+1:}}w_{e'}+w_{(u, r)}\sum_{\pi''\in \pathspace^{:i}}\prod_{e''\in \pi''_{i+1:}}w_{e''}} \\
        &=\frac{w_{(u, r)}w_r}{w_{(u, s)}w_s + w_{(u, r)}w_r} = q_{e_i}. \tag{\Cref{lem:sum_future}}
    \end{align*}

    \end{proof}

    \subsection[Missing Proofs from Section 4.1]{Missing Proof from \Cref{subsec:lower_gft}}
    \label{app:lower_gft}
    
    \gftimpossible*
        \begin{proof}
            We begin by constructing a ``hard'' family of distributions, indexed by the uniform grid $\Gamma$ over the interval $[\nicefrac{9}{20},\nicefrac{11}{20}]$, with step-size $\xi$.
            For simplicity, we assume that $\nicefrac 1{(20\xi)}$ is an integer; we set the precise value of $\xi$ later. The generic $\gamma_i$ in $\Gamma$, is then of the form $\nicefrac{9}{20}+ i \xi$.

            For each $\gamma_i \in \Gamma$, we introduce an associated distribution $\cD_i$. Sampling a valuation $v$ from $\cD_i$ works in two phases: first, an auxiliary random variable $x$ is sampled, then the valuation $(0,x)$ is drawn if $x\ge \gamma_i$, and the valuation $(x,1)$ is drawn otherwise. The auxiliary variable $x$ is equally likely to be drawn uniformly at random from $[\nicefrac{9}{20},\nicefrac{11}{20}]$, or from $I_i=[\gamma_i- \nicefrac{\xi}{3},\gamma_i + \nicefrac{\xi}{3}]$. The optimal price given $\cD_i$ is $\gamma_i$ by design, for an expected gain from trade of at least $\nicefrac{9}{20}$. Indeed, by construction, $\gamma_i$ always makes the trade happen, and the difference between the buyer and the seller valuation is at least $\nicefrac{9}{20}:$
            \begin{equation}
            \label{eq:opt_GFT}
                \sup_{p \in [0,1]} \Esub{v \sim \cD_i}{\gft(p,v)} = \Esub{v \sim \cD_i}{\gft(\gamma_i,v)} \ge \frac{9}{20}.
            \end{equation}
                
            Assume for the sake of contradiction that there exists a learning algorithm $\A$ that is $\eps$-differentially private, and that guarantees expected gain from trade at most $\alpha \le \nicefrac{1}{80}$ far from the optimal one, for any $\cD_i$. Then we should have, by exploiting the lower bound in \Cref{eq:opt_GFT}:
            \begin{equation}
            \label{eq:alg_gft}
                \Esub{v \sim \cD_i}{\gft(\A(S),v)} \ge \sup_{p \in [0,1]} \Esub{v \sim \cD_i}{\gft(p,v)} - \alpha \ge \frac{35}{80},
            \end{equation}
            where the expectation is taken with respect to both $S$, $v$ (from $\cD_i$), and the internal randomness of $\A$. We now prove a technical claim we use in the rest of the proof. 
            \begin{claim}
            \label{cl:S_i}
                Consider any distribution $\cD_i$. If $\A$ respects \Cref{eq:alg_gft}, then there exists at least one realization of the sample $S$ such that $\A(S) \in I_i$ with probability {at least $\nicefrac{2}{11}$}.
            \end{claim}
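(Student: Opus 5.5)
We argue by contraposition combined with an averaging step. Because \Cref{eq:alg_gft} bounds an expectation over $S$, it suffices to show that any price $p \notin I_i$ earns expected gain from trade on $\cD_i$ that is bounded away from $\nicefrac{35}{80}$. Indeed, if $\P{\A(S)\in I_i} < \nicefrac{2}{11}$ held for \emph{every} realization $S$, then averaging over $S$ would give $\P{\A(S) \in I_i} < \nicefrac{2}{11}$ also under the joint randomness of $S$ and $\A$. We would then get a contradiction by writing
\[
\E{\gft(\A(S),v)} \le \P{\A(S)\in I_i}\cdot G_{\max} + \bigl(1-\P{\A(S)\in I_i}\bigr)\cdot G_{\text{out}},
\]
where $G_{\max}$ bounds the expected gain from trade of any price, and $G_{\text{out}}$ bounds it for prices outside $I_i$.

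The first step is to bound $G_{\max}$. On $\cD_i$ the realized valuations are either $(0,x)$ with $x\ge\gamma_i$, whose gain from trade is $x$, or $(x,1)$ with $x<\gamma_i$, whose gain from trade is $1-x$. Since $x\in[\nicefrac{9}{20},\nicefrac{11}{20}]$, every realized gain from trade is at most $\nicefrac{11}{20}$, so $G_{\max}\le \nicefrac{11}{20}$.

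The second step, which is the main obstacle, is to bound $G_{\text{out}}$. Take a price $p\notin I_i$, and assume without loss of generality that $p > \gamma_i+\nicefrac{\xi}{3}$; the case below $I_i$ is symmetric. Consider the half of the mixture in which $x$ is drawn uniformly from $I_i$. If $x\ge\gamma_i$, the valuation is $(0,x)$ with $x\le\gamma_i+\nicefrac{\xi}{3}<p$, so the buyer rejects. If $x<\gamma_i$, the valuation is $(x,1)$ and the trade happens. Symmetrically, when $p<\gamma_i-\nicefrac{\xi}{3}$, the points $(x,1)$ with $x\in(p,\gamma_i)$ fail because the seller rejects. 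In either case, conditional on $x$ being drawn from $I_i$, trade fails with probability at least $\nicefrac12$, and this costs at least $\nicefrac{9}{20}$ each time. Hence $G_{\text{out}} \le \nicefrac{11}{20} - \nicefrac12\cdot\nicefrac12\cdot\nicefrac{9}{20} = \nicefrac{35}{80}$. The delicate part is that $I_i$ and the uniform component overlap; I would handle this simply by dropping the uniform component's possible losses, since they only help.

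The final step is to combine the two bounds. We get
\[
\E{\gft(\A(S),v)} \le \nicefrac{35}{80} + \P{\A(S)\in I_i}\cdot\left(\nicefrac{44}{80}-\nicefrac{35}{80}\right) = \nicefrac{35}{80}+\nicefrac{9}{80}\cdot \P{\A(S)\in I_i}.
\]
This bound alone does not yet contradict $\nicefrac{35}{80}$; the $\nicefrac{2}{11}$ threshold requires the sharper constants. So I would recompute $G_{\text{out}}$ more carefully, using the exact expected gain from trade on the $I_i$ component (around $\nicefrac12$, not $\nicefrac{11}{20}$), and likewise $G_{\max}$. With those constants, $G_{\text{out}}$ falls strictly below $\nicefrac{35}{80}$ by a margin that forces $\P{\A(S)\in I_i}\ge\nicefrac{2}{11}$. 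Finally, averaging over $S$ yields a realization $S$ with this property, which proves the claim.
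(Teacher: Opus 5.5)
Your skeleton matches the paper's. You split $\E{\gft(\A(S),v)}$ according to whether $\A(S)\in I_i$, cap the first part by $\nicefrac{11}{20}$ and the second by a constant $G_{\text{out}}$, rearrange, and average over $S$ to extract a good realization. The gap is in the one quantitative step that matters. Your bound $G_{\text{out}}\le \nicefrac{11}{20}-\nicefrac14\cdot\nicefrac{9}{20}=\nicefrac{35}{80}$ equals the threshold in \Cref{eq:alg_gft}, so, as you notice, it gives no information at all. You then assert that a more careful computation with exact expected values leaves enough margin to force $\nicefrac{2}{11}$, but you never carry it out. The claimed constant is therefore not actually derived.

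The slack is lost in how you account for failed trades. You use the upper bound $\nicefrac{11}{20}$ on the gain from trade of all valuations, but credit each lost trade with only the lower bound $\nicefrac{9}{20}$. The right accounting is immediate: the gain from trade is $0$ when no trade occurs and at most $\nicefrac{11}{20}$ otherwise. So for every $p\notin I_i$, using your correct observation that trade fails with probability at least $\nicefrac14$,
\[
\Esub{v\sim\cD_i}{\gft(p,v)}\le \frac{11}{20}\cdot\P{\text{trade at } p}\le\frac{11}{20}\cdot\frac34=\frac{33}{80}.
\]
With $G_{\max}=\nicefrac{44}{80}$ and $G_{\text{out}}=\nicefrac{33}{80}$, your decomposition becomes $\nicefrac{35}{80}\le\nicefrac{44}{80}-\nicefrac{11}{80}\cdot\P{\A(S)\notin I_i}$. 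This gives $\P{\A(S)\notin I_i}\le\nicefrac{9}{11}$, which is exactly the paper's argument and constant. The exact-expectation route you sketch would also close the gap, since it gives roughly $G_{\max}\approx\nicefrac{41}{80}$ and $G_{\text{out}}\approx\nicefrac{31}{80}$ up to $O(\xi)$ terms. However, it is more laborious, unnecessary, and as written unverified.
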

            \begin{proof}[Proof of \Cref{cl:S_i}]
                Conditioning on $\A(S) \notin I_i$, we have:
                \begin{equation}
                \label{eq:gft_bad_event}
                    \Esub{v \sim \cD_i}{\gft(\A(S),v)|\A(S) \notin I_i} \le \frac{33}{80}.
                \end{equation}
                Indeed, if the posted price is not within the right interval $I_i$, then it misses at least half of the trades when $x$ is drawn from $I_i$. This happens with probability at least one quarter, while for every trade that actually takes place, the gain from  trade is at most $\nicefrac{11}{20}.$ We have the following chain of inequalities:
                \begin{align*}
                \frac{35}{80} &\le \E{\gft(\A(S),v)} \tag{By \Cref{eq:alg_gft}}\\
                &= \E{\gft(\A(S),v)|\A(S)\in I_i} \P{\A(S) \in I_i} + \E{\gft(\A(S),v)|\A(S)\notin I_i} \P{\A(S) \notin I_i}\\
                &\le \frac{{11}}{20} (1-\P{\A(S) \notin I_i}) + \frac{33}{80}\P{\A(S) \notin I_i}. \tag{By \Cref{eq:gft_bad_event}}
                \end{align*}
                By rearranging the above inequality, we can conclude that ${\P{\A(S) \in I_i} \ge \nicefrac{2}{11}}.$
            \end{proof}

            We want to argue that the result obtained in the Claim leads to a contradiction. In particular, for each one of the distributions $\cD_i$ indexed by $\Gamma$, denote with $S_i$ the corresponding sample whose existence is guaranteed in the Claim. Since the intervals $I_i$ are disjoint by construction, we have
            \begin{equation}
            \label{eq:non-overlapping-I_i}
                \Psub{\A}{\A(S_i) \notin I_i} \ge \sum_{j\neq i} \Psub{\A}{\A(S_i) \in I_j}.
            \end{equation}
            We observe that the generic samples $S_i$ and $S_j$ may be different, but \emph{for sure}, they differ by at most $N$ points. Therefore, we can construct a sequence of sets $S_i=S^0_{i,j}, S^1_{i,j}, \dots, S^N_{i,j}=S_j$ such that $S^k_{i,j}$ and $S^{k+1}_{i,j}$ differ by at most one point. We can then repeatedly apply the definition of $\eps$-differentially private algorithm on $\A$:
            \begin{align}
                \Psub{\A}{\A(S_i) \in I_j} &\ge e^{-\eps} \Psub{\A}{\A(S^1_{i,j}) \in I_j}\tag{$S_i$ and $S^1_{i,j}$ differ in one point}\\
                &\ge e^{-\eps}\left[e^{-\eps} \Psub{\A}{\A(S^2_{i,j}) \in I_j}\right]  \tag{$S^1_{i,j}$ and $S^2_{i,j}$ differ in one point}\\
                &\ge e^{-N\eps}\Psub{\A}{\A(S_{j}) \in I_j}\tag{iterating $N$ times}\\
                &\ge {2}\frac{e^{-N\eps}}{{11}}. \tag{By definition of $S_i$ and \Cref{cl:S_i}} \label{eq:S_i_vs_S_j}
            \end{align}
            If we plug in the above inequality into \Cref{eq:non-overlapping-I_i}, we get the final contradiction:
            \[
                \Psub{\A}{\A(S_i) \notin I_i} \ge {2}\frac{e^{-N\eps}}{{11}} \left(\frac{1}{\xi} - 2\right) > {\frac{9}{11}},
            \]
            where the last inequality follows by choosing $\xi$ {sufficiently small}, which violates \Cref{cl:S_i}.
        \end{proof}

    \subsection[Missing Proofs from Section 4.2]{Missing Proof from \Cref{subsec:upper_gft}}
    \label{app:missing42}
    
    \uniformgft*
    \begin{proof}
    Denote with $\cE_p$ the clean event corresponding to price $p$, which depends on the realized samples $S$:
    \[
        \cE_p=\left\{\Big{|}\Esub{v}{\gft(M_p,v)} - \gft(M_p,S)\Big{|} \le \frac{\acc}4\right\}
    \]
    By the standard Hoeffding bound (\Cref{prop:chernoff}), the probability of $\cE_p^C$ is at most $2 e^{-\nicefrac{\acc^2|S|}{8}}$, so that, by union bounding over all the elements in $\hat \cG$, it holds:
    \begin{align*}
        \P{\Big(\bigcap_{p \in \hat \cG}\cE_p\Big)^C} &\le \sum_{p \in \hat \cG}\P{\cE_p^C} \le \sum_{p \in \hat \cG} 2 e^{-\nicefrac{\acc^2|S|}{8}} \le \frac{16}{\alpha\sigma}e^{-\nicefrac{\acc^2|S|}{8}} \le \frac{\beta}{2},
    \end{align*}
    where the last inequality follows by assuming that $|S| \ge \frac{8}{\alpha^2}\log\left(\frac{32}{\alpha\beta\sigma}\right)$.
\end{proof}

\section{The maximum is attained}
\label{sec:max_attained}

In this section of the appendix, we show that there exists a mechanism $M\in \mathcal M$ achieving $\prof$ equal to $\sup_{M\in \mathcal M}\E{\prof(M, v)}$, whenever we assume that the underlying distribution on $v$ is $\sigma$-smooth. To prove the result, it is convenient to switch the terminology from mechanisms to functions. Specifically, we have the following, which follows directly from \Cref{def:monotone_allocation}.

\begin{proposition}
    For any mechanism $M\in \M$, there exists a left-continuous and increasing function $f:[0, 1]\rightarrow [0, 1]$ such that $A_M$ is its epigraph\footnote{This holds if we add to the allocation region the upper boundary of $[0, 1]^2$ that is not contained in $A_M$, which has zero measure.}. Similarly, for any left-continuous and increasing function $f:[0, 1]\rightarrow [0, 1]$, there exists $M \in \M$ such that $A_M$ is its epigraph.
\end{proposition}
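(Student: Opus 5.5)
Both directions come down to the boundary function of a monotone region. By \Cref{def:monotone_allocation}, a monotone $A$ is closed under moving west (decreasing the seller coordinate) and north (increasing the buyer coordinate).

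\emph{From mechanisms to functions.} Given $M\in\M$ with (closed, monotone) allocation region $A_M$, I would define
\[
f(x)=\min\{y\in[0,1]: (x,y)\in A_M\},
\]
and set $f(x)=1$ when the vertical section at $x$ is empty. This is exactly the buyer's Myerson price from \Cref{def:payments}, extended to the whole interval.

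\begin{itemize}
\item \emph{Sections are upward closed.} If $(x,y)\in A_M$ and $y'\ge y$, then $(x,y')\in A_M$. So the section $\{y:(x,y)\in A_M\}$ is the interval $[f(x),1]$; closedness of $A_M$ makes the minimum attained.
\item \emph{$f$ is increasing.} If $x'\le x$ and $(x,f(x))\in A_M$, monotonicity gives $(x',f(x))\in A_M$. Hence $f(x')\le f(x)$.
\item \emph{$f$ is left-continuous.} Take $x_k\uparrow x$. The points $(x_k,f(x_k))\in A_M$ have second coordinate $f(x_k)\uparrow L\le f(x)$. By closedness, $(x,L)\in A_M$, so $f(x)\le L$, and therefore $L=f(x)$.
\end{itemize}

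It follows that $A_M=\{(x,y): y\ge f(x)\}$, the epigraph of $f$ within $[0,1]^2$. The one caveat is the columns where the section is empty. There the epigraph adds the segment $\{(x,1)\}$, a piece of the upper boundary that has Lebesgue measure zero, which is precisely the footnote's caveat.

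\emph{From functions to mechanisms.} Let $f$ be increasing and left-continuous, and take $A$ to be its epigraph in $[0,1]^2$.

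\begin{itemize}
\item \emph{$A$ is monotone.} If $y_1\le x_1$ and $y_2\ge x_2\ge f(x_1)\ge f(y_1)$, then $(y_1,y_2)\in A$.
\item \emph{$A$ is closed.} Take $(x_k,y_k)\to(x,y)$ with $y_k\ge f(x_k)$.
  \begin{itemize}
  \item If infinitely many $x_k\le x$, then along that subsequence $f(x_k)\to f(x^-)=f(x)$ by left-continuity and monotonicity, so $y\ge f(x)$.
  \item If infinitely many $x_k>x$, then $y_k\ge f(x_k)\ge f(x)$ by monotonicity, so again $y\ge f(x)$.
  \end{itemize}
\end{itemize}

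With $A$ monotone and closed, \Cref{thm:mechanisms} (Myerson payments) yields a DSIC and IR mechanism whose allocation region is $A$.

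The step needing the most care is matching closedness exactly with left-continuity, including the boundary conventions at $x=0$ and $y=1$. Nothing there is deep.
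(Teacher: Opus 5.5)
Your proof is correct and is exactly the argument the paper leaves implicit (the paper only says the statement ``follows directly from'' the monotonicity definition). Your $f$ is the buyer's Myerson price, i.e.\ the paper's $f_M$, and you usefully make explicit that left-continuity comes from the standing closedness convention rather than from monotonicity alone: for increasing $f$, left-continuity is lower semicontinuity, i.e.\ closedness of the epigraph. The only slip is in the left-continuity step, where $(x_k,f(x_k))\in A_M$ fails if the column at $x_k$ is empty; in that case $f(x_k)=1$ forces $L=f(x)=1$, so nothing breaks.
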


The previous proposition allows representing the mechanisms in $\M$ as the class of left-continuous and increasing functions, which we call $\cF$. Therefore, we may use $\prof$ taking as first argument $f\in \cF$ instead of a mechanism $M\in \M$. To build intuition, for any $M\in \M$, the associated $f_M\in \cF$ is a function representative of $\partial A_M$, the boundary notion introduced in Appendix \ref{app:missing_net}.

Moreover, for almost every $(v_s, v_b)\in A_M$, $q_M(v_s, v_b) = f_M(v_s)$. Similarly, letting $g_M$ be the generalized inverse of $f_M$, i.e. $g_M(v_b) = \inf\{x\in [0, 1]: f_M(x) \geq v_b\}$, we have that for almost every $(v_s, v_b)\in A_M$, $p_M(v_s, v_b) = g_M(v_b)$. We now state and prove the theorem.

\begin{theorem}
    If $v$ is a $\sigma$-smooth random variable on $[0, 1]^2$, then there exists a mechanism $M^\star\in \M$ such that the following holds: 
    $$
    \E{\prof(M^\star, v)} = \sup_{M\in \M} \E{\prof(M, v)}
    $$
\end{theorem}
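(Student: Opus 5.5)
We take a maximizing sequence and extract a pointwise convergent subsequence by Helly's selection theorem, using the function representation from the preceding proposition. Let $\{M_k\}_{k\ge1}\subseteq\M$ satisfy $\E{\prof(M_k,v)}\to\sup_{M\in\M}\E{\prof(M,v)}$, and let $f_k=f_{M_k}\in\cF$ be the associated left-continuous increasing functions. Since the $f_k$ are monotone and uniformly bounded in $[0,1]$, Helly's theorem gives a subsequence, still denoted $f_k$, converging pointwise on $[0,1]$ to an increasing $f$. We replace $f$ by its left-continuous regularization $f^\star(x)=\lim_{y\uparrow x}f(y)$ (with $f^\star(0)=f(0)$), which lies in $\cF$. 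Since $f$ and $f^\star$ differ on at most countably many points, $f_k\to f^\star$ off a countable set $D\subseteq[0,1]$. Let $M^\star$ be the mechanism whose allocation region is the epigraph of $f^\star$.

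Next we show that the generalized inverses converge as well: $g_k(y)\to g^\star(y)$ for all $y$ outside a countable set. This is the standard fact that pointwise convergence of monotone functions at continuity points of the limit transfers to their generalized inverses at continuity points of the inverse limit. Since $g^\star$ is monotone, its discontinuities are countable.

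We then establish pointwise convergence of the profit almost everywhere. Using the a.e. representation from the text, $\prof(M,v)=\bigl(f_M(v_s)-g_M(v_b)\bigr)\ind{v_b\ge f_M(v_s)}$. Fix $v=(v_s,v_b)$ with $v_s\notin D$, $v_b$ outside the exceptional set of the $g$'s, and $v_b\neq f^\star(v_s)$. The excluded set is contained in a union of countably many vertical and horizontal lines together with the graph of $f^\star$, and all of these have Lebesgue measure zero. At such points, $f_k(v_s)\to f^\star(v_s)$ and $g_k(v_b)\to g^\star(v_b)$. Because $v_b\neq f^\star(v_s)$, the indicator $\ind{v_b\ge f_k(v_s)}$ eventually equals $\ind{v_b\ge f^\star(v_s)}$. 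Hence $\prof(M_k,v)\to\prof(M^\star,v)$ for Lebesgue-a.e. $v$. By $\sigma$-smoothness, Lebesgue-null sets are $\cD$-null, so the convergence also holds $\cD$-a.s.

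Finally, since $|\prof|\le1$, dominated convergence gives $\E{\prof(M_k,v)}\to\E{\prof(M^\star,v)}$. Therefore $\E{\prof(M^\star,v)}$ equals the supremum, and $M^\star\in\M$ attains it.

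The main obstacle is the middle step: making the a.e. representation of the Myerson payments and the convergence of the generalized inverses rigorous. One has to check that the exceptional sets (jump points of $f^\star$, flat parts corresponding to jumps of $g^\star$, and boundary conventions from the footnote) are genuinely Lebesgue-null. This is exactly where smoothness is used; without it, mass on the graph of $f^\star$ could break the argument, as the line distributions of \Cref{thm:impossibility_profit} show.
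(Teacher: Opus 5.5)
Your proposal is correct and follows essentially the same route as the paper's proof. Both take a maximizing sequence, apply Helly's selection theorem, pass to the left-continuous regularization, get convergence of the generalized inverses at continuity points of $g^\star$, obtain almost-everywhere convergence of the profit (transferred to $\cD$-a.s.\ convergence via $\sigma$-smoothness), and conclude by dominated convergence. Your explicit exclusion of the graph of $f^\star$ is simply a pointwise version of the paper's $\limsup$-of-symmetric-differences argument for the trade indicators. One detail to state explicitly: for each $k$, the levels where the a.e.\ representation $p_{M_k}=g_k$ fails are countably many horizontal lines, so their union over $k$ is still null.
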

\begin{proof}
    Let $\{f_n\}_n\subset \cF$ be a sequence such that $\E{\prof(f_n, v)} \xrightarrow{n} \sup_{f\in \cF}\E{\prof(f, v)}$. This sequence exists by definition of supremum. Since this is a uniformly bounded sequence of monotone increasing functions, we can apply Helly's Selection Theorem (see e.g. Chapter VIII, Section 4 in \citet{natanson2016theory}) to guarantee that $\exists \{f_{n'}\}_{n'}\subseteq \{f_n\}_n$ and $f^\star :[0, 1]\rightarrow [0, 1]$ such that $f_{n'}\rightarrow f^\star$ everywhere. Moreover, $f^\star$ is a monotone increasing function, which implies that it is almost everywhere continuous. It is not necessarily left continuous, but being monotone implies having left limits, so we can take left limits and enforce left continuity without consequences, since we are modifying the function in a set of null measure, and this operation leaves invariant the generalized inverse of $f^\star$, which we call $g^\star$. We therefore have $f^\star \in \cF$.
    
    Letting $\{g_{n'}\}_{n'}$ be the corresponding sequence of generalized inverses, we have that $g_{n'}(x)\rightarrow g^\star(x)$ for all $x \in [0, 1]$ where $g^\star$ is continuous. For a proof, see e.g. Proposition 0.1 in \citet{Resnick1987Extreme}. Since $f^\star$ is monotone, $g^\star$ is also monotone (due to $\{x\in[0,1]:f^\star(x)\geq y_2\}\subseteq \{x\in[0,1]:f^\star(x)\geq y_1\}$, for every $y_2\geq y_1$) and thus almost everywhere continuous: it follows that the previous convergence holds almost everywhere. 

    The indicator functions for the trade stemming from the sequence also eventually almost everywhere agree with the limit mechanism, due to $\sigma$-smoothness. Specifically, letting $A_{f^\star}$ be the epigraph of $f^\star$, we have that $\P{\cap_{m}\cup_{n'\geq m}(A_{f^\star}\Delta A_{f_{n'}})} = 0$ (where $\Delta$ stands for symmetric difference), meaning that $v \in A_{f^\star}\Delta A_{f_{n'}}$ infinitely often has probability zero. Therefore, it almost surely exists a $N$ such that $\ind{v\in A_{f^\star}} = \ind{v\in A_{f_{n'}}}\ \forall \, n' \geq N$. In turn, this implies that $\prof(f_{n'}, v)\rightarrow\prof(f^\star, v)$ almost surely, since we have proven that the two pricing functions converge almost everywhere and thus also almost surely due to $\sigma$-smoothness.

    At this point we can just use the Dominated Convergence Theorem ($\prof$ is bounded by $1$ in absolute value) to claim the following: 
    $$
    \E{\prof(f_{n'}, v)} \xrightarrow{n'} \E{\prof(f^\star, v)}.
    $$
    Recall that $\{f_{n'}\}_n \subseteq \{f_{n}\}_n$, and by definition of the latter, we thus have $\E{\prof(f^\star, v)} = \sup_{f\in \cF}\E{\prof(f, v)}$.

\end{proof}

\end{document}